\documentclass[11pt]{article}
\usepackage{fullpage}

\widowpenalty10000
\clubpenalty10000

\usepackage{amsthm}
\usepackage{amsfonts,amsmath}
\usepackage{amssymb}
\usepackage{paralist}
\usepackage{multirow}
\usepackage{xspace}
\usepackage{graphicx}
\usepackage{tikz}
\usepackage{lipsum}                     
\usepackage{xargs}                      
\usepackage[prependcaption,textsize=scriptsize]{todonotes}
\usepackage[ruled,vlined]{algorithm2e}
\usepackage{colortbl}
\usepackage{lscape}
\usepackage{comment}
\usepackage{framed}
\usepackage{float}
\usepackage{tikz}
\usepackage{mathtools}
\usepackage{tabularx}
\usetikzlibrary{positioning}

\definecolor{Darkblue}{rgb}{0,0,0.4}
\definecolor{Brown}{cmyk}{0,0.61,1.,0.60}
\definecolor{Purple}{cmyk}{0.45,0.86,0,0}
\usepackage[colorlinks,linkcolor=Darkblue,filecolor=blue,citecolor=blue,urlcolor=Darkblue,pagebackref]{hyperref}
\usepackage[nameinlink]{cleveref}

\theoremstyle{theorem}
\newtheorem{theorem}{Theorem}
\newtheorem{lemma}{Lemma}

\newtheorem{claim}{Claim}
\newtheorem{corollary}{Corollary}
\newtheorem{fact}{Fact}

\theoremstyle{definition}
\newtheorem{remark}{Remark}
\newtheorem{definition}{Definition}
\newtheorem{example}{Example}

\newcommand{\NSM}{\textsf{NSM}\xspace}
\newcommand{\NSMm}{\textsf{NSM}_{2k,m,\eps}\xspace}
\newcommand{\NSMf}{\textsf{NSM}_{2k,m,\frac{1}{4}}\xspace}
\newcommand{\eps}{\epsilon}

\newcommand{\E}{{\mathbb{E}}}

\newcommand{\pref}{\ensuremath{\succ}}
\newcommand{\calR}{\mathcal{R}}

\newcommand{\Sc}{\mathrm{Sc}}

\newcommand{\poly}{\mathrm{poly}}

\newcommand{\mypara}[1]{\medskip\noindent\textbf{#1}.\quad}
\definecolor{forestgreen}{rgb}{0.1333, 0.5451, 0.1333}
\definecolor{lovelyorange}{rgb}{0.7333, 0.1451, 0.5333}

\newcommandx{\atodo}[2][1=]{\todo[linecolor=red,backgroundcolor=green!25,bordercolor=red,#1]{AF: #2}}
\newcommandx{\note}{\todo[linecolor=red,backgroundcolor=red!25,bordercolor=red]{AF}}
\newcommandx{\ado}[2][1=]{\todo[linecolor=red,backgroundcolor=blue!25,bordercolor=red,#1]{ToDo: #2}}
\newcommandx{\donow}[1]{\ado[inline]{#1}}
\newcommandx{\bdo}[2][1=]{\todo[linecolor=red,backgroundcolor=red!25,bordercolor=blue,#1]{ToDo: #2}}
\newcommandx{\cNT}[1]{\bdo[inline]{#1}}

\newcommand{\est}{\mathrm{est}}
\newcommand{\modd}[1]{(#1~\mathrm{mod}2)}

\usepackage{lineno}

\begin{document}

\title{Distributed Monitoring of Election Winners\thanks{%
  A preliminary version of this paper was presented at the
  16th International Conference on Autonomous Agents and Multiagent Systems (AAMAS~'17)~\cite{FiltserT17}.
  This full version contains all proofs, has improved upper bounds, considers more voting rules,
  studies further lower bounds, and discusses several issues in more detail.}}
\author{Arnold Filtser\thanks{%
		Ben-Gurion University of the Negev. Email: \texttt{arnoldf@cs.bgu.ac.il}. The research was supported in part by ISF grant (1718/18) and by BSF grant 2015813.
	}
	\and
	Nimrod Talmon\thanks{%
		Ben-Gurion University of the Negev. Email: \texttt{talmonn@bgu.ac.il}}
}
\maketitle

\begin{abstract}
We consider distributed elections,
where there is a center and $k$ sites.
In such distributed elections,
each voter has preferences over some set of candidates,
and each voter is assigned to exactly one site such that each site is aware only of the voters assigned to it.
The center is able to directly communicate with all sites.
We are interested in designing communication-efficient protocols,
allowing the center to maintain a candidate which,
with arbitrarily high probability,
is guaranteed to be a winner,
or at least close to being a winner.
We consider various single-winner voting rules,
such as variants of Approval voting and scoring rules,
tournament-based voting rules,
and several round-based voting rules.
For the voting rules we consider,
we show that,
using communication which is logarithmic in the number of voters,
it is possible for the center to maintain such approximate winners;
that is, upon a query at any time the center can immediately return a candidate which is guaranteed
to be an approximate winner with high probability.
We complement our protocols with lower bounds.
Our results are theoretical in nature and relate to various scenarios,
such as aggregating customer preferences in online shopping websites or supermarket chains
and collecting votes from different polling stations of political elections.
\end{abstract}

\section{Introduction}

Elections are extensively used to aggregate preferences of voters.
Some elections are centralized,
but others are carried out in distributed settings.
Consider,
for example,
a supermarket chain consisting of a large number of stores.
Each store collects data on the purchases made in it,
and the managers at the chain headquarters might want to aggregate this data,
to identify,
for example,
the most popular items being sold.
One solution would be to have a central database,
collecting all data from all stores,
and to compute the most popular items on this centralized database.
As the number of customers might be huge,
however,
it might not be practical to do so.
Further,
as the communication between the stores and the headquarters might be expensive,
a more efficient solution would be to have some computations being made locally at each store,
and to develop a protocol for efficient communication between the stores and the headquarters,
to allow the managers at the headquarters to know,
at each point in time,
what are the most popular items that are being sold throughout the chain.

A similar situation happens in online shopping websites,
where buyers from all around the world make purchases.
As the design of modern websites is based on data centers,
aggregating the data concerning all buyers involves communicating in a distributed setting.
Specifically,
in order to identify the current trends,
and as communication between data centers might be expensive,
it is of interest to develop protocols for those data centers to communicate with a central entity.

Our model also captures scenarios of political polls and political elections.
That is,
in political elections and in TV polls, it is usually the case that there are several polling stations,
spread around the country or the region.
Then,
in order to compute the results of the election (or the intermediate results during the day when the poll is being held),
the voters' preferences from all those polling stations are aggregated at some central station.
For example,
in the general political elections held in Brazil in 2014,
there were roughly 500,000 polling stations,
with an average of 300 voters per station.
In this situation,
it is beneficial to have a protocol allowing the polling stations to efficiently communicate
with a central entity, allowing the central entity to maintain a good estimate on the nation-wide (or region-wide) state of affairs.

In this paper,
we model such situations as follows.
We are considering an election whose electorate is distributed into $k$ sites.
Assuming some common axis of time\footnote{
  To avoid confusion,
  let us mention that,
  while we indeed speak about ``time'',
  we do not consider any external clocks (or, importantly, clocks accessible to the sites or the center).
  In particular, the voters can be assumed to come at fixed intervals,
  whose speed is not known to the sites nor to the center.
},
we have that at each point in time,
a new voter arrives and votes,
and her vote is assigned to one of those $k$ sites\footnote{For convenience, we refer to voters as females,
while the candidates are males.}.
There is some center which is able to directly communicate with each of the $k$ sites.
With respect to a voting rule~$\calR$,
the goal of the center is to maintain,
at any point in time,
a candidate which is an $\calR$-winner of the whole election
(given an election $E$ and a voting rule $\calR$, an $\calR$-winner of $E$ is a winner of $E$ under $\calR$).
More specifically,
we are interested in designing communication-efficient protocols,
where the center is able, upon request at any time, to return a candidate which,
with high probability,
is an $\calR$-winner.

As we are interested in sublinear communication, 
in addition to allowing mistakes to accrue with some low probability, we will also use approximation.
We call a candidate an $\eps$-winner with respect to a voting rule~$\calR$, if by adding up to an $\epsilon$-fraction of new voters, it can become an $\calR$-winner.
A more formal description of our model and a discussion on our notion of approximation is given in \Cref{section:preliminaries}.
Previous works were concerned with bribery (where we are allowed to change an $\epsilon$-fraction of the voters), and margin of victory (where we are guaranteed that by changing an $\epsilon$-fraction of the voters, the outcome of the election shall remain unchanged), see \Cref{sec:related} for additional details on these notions.
These notions are appropriate to deal with noisy data, or to be used in scenarios where some external agent can influence the voters, thus change their votes.
Here, however, we are concerned with monitoring an election while minimizing the communication, and the source of our errors is lack of information (rather than noise). Our approximation notion fits better to our scenario, as a candidate is an $\eps$-winner if it might become a winner under full information.
Furthermore, in monitoring an election we do expect more voters to come, thus, in this aspect, an $\eps$-winner is a candidate who might become a winner very shortly.
Finally, as we consider an ongoing election, changing previous votes is not an option. However, the information on whether a candidate is an $\eps$-winner is very valuable for making, e.g., real-time election policy decisions.

We concentrate on single-winner voting rules,
and consider various voting rules,
ranging from approval-based rules and scoring rules,
to tournament-based rules and round-based voting rules;
while we naturally cannot cover all voting rules available,
we choose some of the more popular and representative ones as well as aim at choosing representative voting rules.
Further,
we develop some general techniques for designing protocols for maintaining approximate winners in distributed elections,
which might be applicable to other voting rules and settings as well.
We show how to apply these techniques for the rules we consider.
We discuss the effect of several parameters on the communication complexity of the protocols we design;
specifically,
the effect that the number $n$ of voters, the number $m$ of candidates, the required approximation $\epsilon$, and the number $k$ of sites
have on the amount of communication used by our protocols.
We complement our communication-efficient protocols with lower bounds.

As a by-product of our lower bounds for maintaining an approximate Plurality winner in distributed elections,
we have two contributions which might be useful in other contexts.
First,
we improve the state-of-the-art lower bound on the \textsc{Count-tracking} problem,
which is a central problem in distributed streams; this result is discussed in detail in \Cref{remark:byproduct}.
In short,
in the \textsc{Count-tracking} problem,
the task is to maintain a value which approximates the number of items in a given distributed stream.
In the regime where $k \geq 1/\epsilon^2$, we improve the lower bound for \textsc{Count-tracking} from $\Omega(k)$,
proved by Huang et al.~\cite[Theorem~2.3]{huang2012randomized},
to $\Omega(k \log n / \log k)$ (see \Cref{remark:byproduct}).
Second,
we define a novel problem in multiparty communication complexity and show a tight lower bound for it;
in this problem,
which we call the \emph{No Strict Majority} problem,
we have several players, each with its own private binary string,
and,
by communicating bits,
the players should decide whether there is some index for which a majority of the players have $1$ in it.
We prove a lower bound on the \emph{No Strict Majority} problem,
showing that the naive protocol for this problem is essentially optimal:
  asymptotically, all the bits have to be transmitted.
See \Cref{section:lowerbounds} for further details on our lower bounds and their implications to
continuous distributed monitoring and to multiparty communication complexity.

\subsection{Related Work}\label{sec:related}

We first review related work on sublinear algorithms in computational social choice,
as the current paper fits naturally within this line of research.
Then we review papers on compilation complexity,
vote elicitation,
and mention some connections between our notion of approximation to work on control and bribery in elections
(as well as to the concept of margin of victory).
Finally,
we give an overview on the available literature on the continuous distributed monitoring model,
which is the computational model we use in the current paper
(its formal definition is given in \Cref{section:preliminaries}).

\mypara{Sublinear social choice}
As the amount of data in general, and data concerning preferences in particular,
is consistently increasing,
the study of identifying election winners using time or space which is sublinear in the number of voters
is receiving increasing attention.
Specifically, the size of some elections might be too big to process in linear time,
thus algorithms with sublinear time and/or space complexity are of interest.

We first mention a follow-up to the conference version of this paper which was recently published~\cite{chaturvedi2019distributed}.
That paper provides a different protocol for winner tracking which is based on our checkpoints technique.
Their technique achieves an improvement of a $(1 + \frac{\log k}{\log n/k})$ over our checkpoints-based protocols.
Notably, they also performed computer simulations to evaluate the practical communication complexity of theirs and our protocols.

In two papers,
Bhattacharyya and Dey~\cite{dey2015sample, bhattacharyya2015fishing}
study sampling algorithms for winner determination as well as winner determination in the streaming model.
In fact, some of our sampling-based protocols are inspired by Bhattacharyya and Dey~\cite{dey2015sample}.
In their model, they assume that they are given an election in which the margin of victory is at least $\epsilon n$
(where $n$ is the number of voters);
  this means that the winner is guaranteed to remain such even if an adversary is allowed to change $\epsilon n$ votes.
Given such elections, they evaluate the number of vote samples needed in order to identify the winner with high probability.
In our current paper,
we have a different notion of approximation and we do not assume such margins of victory
(we formally describe our notion of approximation in \Cref{section:preliminaries}).

\begin{remark}
There is a mistake in the preliminary version of this work~\cite{FiltserT17},
which claims that the sampling-based protocols are implied by the work of Bhattacharyya and Dey~\cite{dey2015sample, bhattacharyya2015fishing}.
This is incorrect as our notion of approximation is different than theirs, specifically due to this margin of victory assumption which in particular means that, while an approximate winner under our definition always exists, this does not necessarily hold in their model.
\end{remark}

In a recent paper,
Dey et al.~\cite{DTH17}
study winner determination for several multiwinner voting rules aiming at proportional representation.
Dey and Narahari~\cite{dey2015estimating} study sampling algorithms for estimating the margin of victory.
These works deal with centralized elections, while the current paper considers distributed elections.
Another paper worth mentioning in this context is the paper of Lee et al.~\cite{lee2014crowdsourcing} which argues for the importance of developing fast communication-efficient protocols for computing winner in (centralized) streams; they also provide a simple sampling-based algorithm for approximating Borda winners.

Not strictly considering sublinear social choice,
but nonetheless concentrating on ``huge elections'',
in a recent paper, Csar et al.~\cite{CLPS17}
  study winner determination using the MapReduce framework
which may allow processing such elections efficiently by distributing the computation among clusters of machines.

\mypara{Compilation complexity}
In a series of papers,
Chevaleyre et al.~\cite{chevaleyre2009compiling, chevaleyre2011compilation}
and Xia and Conitzer~\cite{xia2010compilation} define and study the compilation complexity of various voting rules;
in their model, the electorate is partitioned into two parts,
and the general concern is the amount of communication which needs to be transmitted
between the two parts, in order to determine an election winner.
In compilation complexity there are no rounds of communication,
as only one message is being passed between the two parts.
This stands in contrast to our protocols,
which use small amounts of communication due to their use of several rounds of communication
between the center and the sites.

\mypara{Vote elicitation}
There is quite an extensive literature which deal with vote elicitation~\cite{dhamal2013scalable, con-san:c:strategy-proofness, lee2014crowdsourcing, lee2015efficient};
these works provide algorithms for finding approximate winners under various voting rules,
by elicitating the voters' preference orders.
Conitzer and Sandholm~\cite{conitzer2005communication} study communication complexity for various voting rules. In their model, each voter acts as a site.
Conitzer and Sandholm were interested in finding exact winners. 
In a follow up paper, Service and Adams~\cite{SA12} allowed approximation. For example, for Borda, they define the approximation ratio of a candidate $c$ to be $\frac{{\rm sc}(c)}{{\rm sc}(w)}$, where $w$ is the true winner and ${\rm sc}$ is the Borda score function. Their goal is to find a candidate with $1-\eps$ approximation ratio; approximation ratios for other voting rules are defined similarly.
However, as each voter acts as a site, their upper bounds are quite high:
  In particular, they depend linearly on the number of voters.
It is also interesting to mention that our notion of approximation is strictly stronger than theirs, at least for Borda (that is, an $\eps$-Borda winner has $O(\eps)$ approximation ratio under the definition used by Service and Adams~\cite{SA12}).

\mypara{Approximate winners, margin of victory, and election control}
In the current paper we do not require our protocols to maintain exact winners,
but are satisfied with approximate winners.
We formally define our notion of approximation in \Cref{section:preliminaries};
roughly speaking,
we consider a candidate to be an approximate winner if it can become a winner if we are allowed
to add a small number of additional voters (where we can set their votes as we wish).
Our notion of approximation somehow resembles the vast amount of research done on electoral control and bribery in elections
(see, e.g., the survey by Faliszewski and Rothe~\cite{fal-rot:b:control}).
In electoral control by adding voters, there is usually a set of unregistered voters,
and the question is whether it is possible to change the outcome of the election,
e.g., to have some predefined, preferred candidate to become a winner in a new election,
where a small number of those unregistered voters are added to the election.

In bribery problems, such as shift bribery and swap bribery~\cite{elk-fal-sli:c:swap-bribery},
an external agent can change the way some voters vote in order to have some predefined, preferred candidate to become a winner.
As observed by Xia~\cite{xia:margin-of-victory},
the number of such changes that needs to be done in order to make a specific candidate to become a winner
(the so-called margin of victory),
is a natural notion of this candidate's closeness to be a winner.
Indeed, in this sense, our approximation notion is related to those notions of control and bribery in elections.

\mypara{Continuous distributed monitoring}
The model of computation which we study in the current paper is called
the \emph{continuous distributed monitoring} model,
and is usually studied within theoretical computer science and database systems.
There is a fairly recent survey about this model~\cite{cormode2013continuous},
as well as quite extensive line of work studying various problems in this model,
such as sampling-based protocols~\cite{cormode2012continuous, tirthapura2011optimal},
protocols for approximating moments~\cite{cormode2011algorithms,arackaparambil2009functional},
protocols for counting with deletions~\cite{liucontinuous}
(interestingly, that paper specifically mentions elections as a motivation, but do not study it explicitly),
heuristic protocols for monitoring most-frequent items~\cite{babcock2003distributed},
and randomized protocols for counting the number of items in a distributed stream
and finding frequent items~\cite{huang2012randomized}.
In the current paper we complement this line of work by studying winner determination in this model.

\section{Preliminaries}\label{section:preliminaries}

We begin by providing preliminaries regarding elections and voting rules,
continue by describing our notion of approximation,
and finish by discussing our model concerning continuous monitoring of distributed streams.
We use standard notions from computational complexity.
For $n \in \mathbb{N}$,
we denote the set $\{1, \ldots, n\}$ by $[n]$.

\subsection{Elections and Voting Rules}\label{subsec:PrelimVotingRules}

An \emph{election} $E = (C, V)$
consists of a set of \emph{candidates} $C = \{c_1, \ldots , c_m\}$
and a collection of \emph{voters} $V = (v_1, \ldots , v_n)$.
We consider both \emph{approval} elections,
where voters cast approval ballots,
and \emph{ordinal} elections,
where voters cast ordinal ballots.

Specifically,
in approval elections,
each voter is associated with her set of approved candidates,
such that $v_i \subseteq C$.
We say that $v_i$ \emph{approves} candidate $c$ if $c \in v_i$
(and \emph{disapproves} him otherwise).
In ordinal elections each voter is a total order $\pref_{v_i}$ over $C$.
A \emph{single-winner voting rule} $\calR$ is a function that gets an election $E$
and returns a set $\calR(E) \subseteq C$ of co-winners of that elections,
such that $c$ is a winner of the election $E$ under $\calR$ if $c \in \calR(E)$.

Next we define our voting rules of interest.
We ignore issues of tie-breaking;
specifically,
we assume an arbitrary tie-breaking order which works in our favor,
such that a candidate $c$ is a winner if there is some fixed tie-breaking that makes him a winner.

We begin with approval-based voting rules and scoring rules,
continue with tournament-based voting rules,
and then discuss round-based voting rules.

\subsubsection{Approval-based Rules and Scoring Rules}

\mypara{Plurality, $t$-Approval, and Approval}
Under \emph{Approval},
each voter approves a subset of the candidates (that is, it is held in approval elections),
and the score of a candidate is the number of voters approving him.
The candidates with the highest score tie as co-winners.
\emph{$t$-Approval} is similar to Approval,
but with the restriction that each voter shall approve exactly $t$ candidates
(that is, $|v_i| = t$; we assume that $t \leq m / 2$).
\emph{Plurality} is a synonym for $1$-Approval,
that is, where each voter approves exactly one candidate.

\mypara{Borda}
Borda is the archetypical scoring rule.
Under \emph{Borda}, a voter ranking a candidate at position $j$ is giving him $m - j$ points,
and the candidates with the highest score tie as co-winners.
Scoring rules in general are defined similarly, but with scoring vectors other than the one used by Borda.

\subsubsection{Tournament-based Voting Rules}

\mypara{Cup}
The \emph{Cup} voting rule is defined via a balanced binary tree $T$ with $m$ leaves,
such that there is exactly one leaf for each candidate.
Starting from the leaves,
in a bottom-up fashion,
each non-leaf node is associated with the candidate which wins in the pairwise election
held with only the two candidates corresponding to the two children of that node.
Finally,
the candidate which gets assigned to the root of $T$ is declared the winner of the election.

\mypara{Copeland and Condorcet}
The \emph{Copeland score} of a candidate $c$ is the number of other candidates $c' \neq c$ for which a majority of voters prefer $c$ to $c'$.
Under \emph{Copeland},
the candidates with the highest Copeland score tie as co-winners.
A \emph{Condorcet winner} is a candidate with Copeland score $m - 1$.
Under \emph{Condorcet},
a Condorcet winner is selected as a winner if it exists;
otherwise, all candidates tie as co-winners.

\subsubsection{Round-Based Voting Rules}

\mypara{Plurality with run-off}
\emph{Plurality with run-off} proceeds in two rounds.
In the first round, it selects two candidates with the highest Plurality scores,
where the Plurality score of a candidate is defined as the number of voters ranking him first.
In the second round,
it considers only those two candidates selected in the first round
and selects as a winner the one which is preferred to the other by majority of voters.

\mypara{Bucklin}
\emph{Bucklin} also proceeds in rounds.
In round $i \in [m]$,
it computes, for each candidate $c$, the number of voters ranking $c$ among their top $i$ choices.
Then,
if there is a candidate with a strict majority of the voters ranking him among their top $i$ choices,
then such a candidate is selected as a winner;
otherwise, a new round begins.

\subsection{Our Notion of Approximation}

Since we will be interested in designing protocols where the center cannot see the full election,
it will not be possible to guarantee that our protocols will find exact winners;
therefore,
we will be satisfied with protocols which are guaranteed to find approximate winners.
There are several possibilities for defining approximate winners of elections;
in this paper we consider $\epsilon$-winners.
Roughly speaking,
an $\epsilon$-winner is a candidate which is not far from being the winner of the election in the sense that he might become a winner after the arrival of only a few additional new voters.
A more formal definition follows.

\begin{definition}[$\epsilon$-winner]
For $\epsilon\in(0,1)$, a candidate $c$ is an \emph{$\epsilon$-winner}
  	in an election $E$ (with $n$ voters) under some voting rule $\calR$
  	if it can become a winner under $\calR$ by adding at most $\epsilon n$ additional voters to $E$.
  	That is, if there exists an election $E'$, where $E\subseteq E'$ and $|E'\setminus E|\le \eps\cdot n$ such that $c \in \calR(E')$.
\end{definition}  

Indeed,
we view the definition of an $\epsilon$-winner as a definition of approximation,
as the lower $\epsilon$ is, the closer an $\epsilon$-winner is to a real winner.
As we will design our protocols to compute $\epsilon$-winner,
the lower $\epsilon$ would be,
their guaranteed results would become closer and closer to real winners.

Our approximation notion seems particularly relevant to our setting (as compared to, e.g., the notion used by Bhattacharyya and Dey~\cite{dey2015sample, bhattacharyya2015fishing}), for the following reasons.
First, we do not assume a margin-of-victory, namely that some candidate is a clear winner. Second, in distributed vote streams we expect more voters to arrive in the future, thus we are interested in identifying a candidate which might become a winner in the near future: such candidates are exactly the $\epsilon$-winners.\footnote{As a side note, we mention that in political elections such a knowledge might worth much to these candidates, as it can help them decide on when to spend their campaigning funds.}

\subsection{Our Model of Computation}

In our computational model, we have one center and $k$ sites.
The center and the sites are arranged in a star-shaped network,
centered at the center,
such that the center has a direct communication link to each site
but two sites cannot communicate directly.

We assume some axis of time, $t_1, \ldots, t_n$,
and a stream of voters $v_1, \ldots, v_n$,
such that voter $v_i$ comes at time $t_i$.
Each voter is assigned to exactly one site,
such that each site is aware only of the subset of voters which are assigned to it.
We stress that the time is not known to either the center or the sites.
Such a stream is called a \emph{distributed stream}.
\Cref{figure:model} illustrates the model. \Cref{table:example} provides an example of a distributed stream.

We mention that our model of computation might be seen as the model of computation assumed in the study of \emph{Continuous Distributed Monitoring},
when instantiated for vote streams (and not general, abstract streams).
See the Related Work section for more details on this subject.

\begin{figure}
\centering
\includegraphics[scale=0.4]{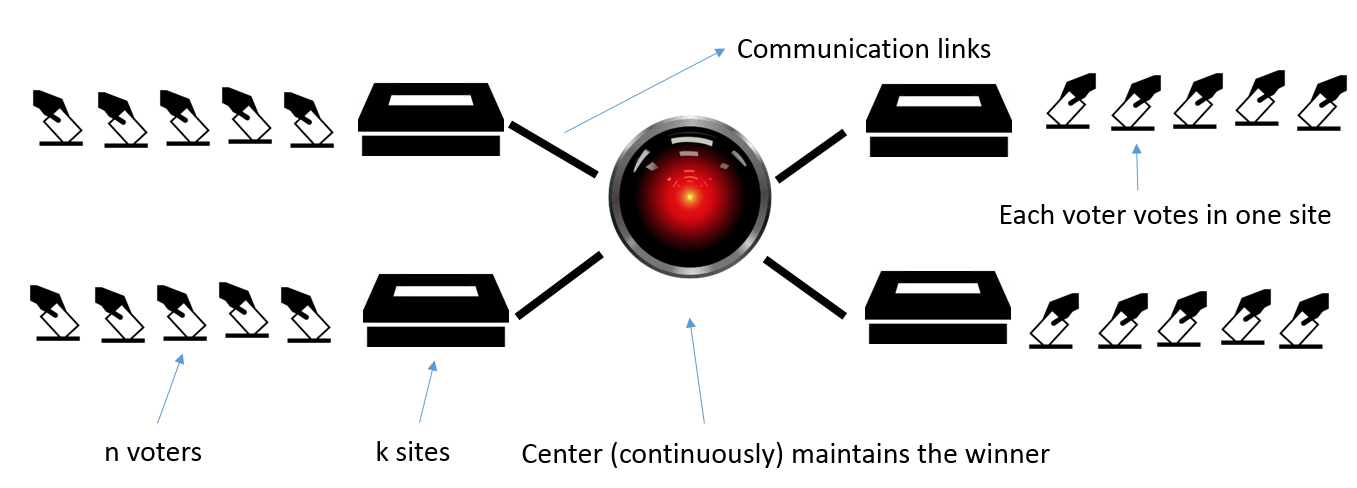}
\caption{Illustration of our model.}
\label{figure:model}
\end{figure}

\begin{table*}[th]
	\centering
	\begin{tabular}{c | c c c}
		$t$ & $S_1$ & $S_2$ & $S_3$ \\
		\hline
		$1$ & $a$ & & \\
		$2$ & $a$ & & \\
		$3$ & & $b$ & \\
		$4$ & & & $a$ \\
		$5$ & & & $a$ \\
		$6$ & & $b$ & \\
		$7$ & & $b$ &
	\end{tabular}
	\caption{An example of a distributed stream corresponding to a Plurality election over the candidate set $\{a, b\}$. Rows correspond to the exist of time, $t_1, \ldots, t_n$, while each column corresponds to a site; e.g., at time $3$, a voter voting for $b$ is assigned to site $S_2$.
	Notice that the situation at a certain time can be viewed as the current election; for example, the election at time $4$ consists of $3$ votes for $1$ and $1$ vote for $b$. For this election, notice that, while $a$ is the winner, and thus an $\epsilon$-winner for any $0 \leq \epsilon \leq 1$, $b$ is not an $\epsilon$-winner even for $\epsilon = 0.5$, as even adding $\epsilon n = 0.5 \cdot 4 = 2$ votes for $b$ would not make $b$ a Plurality winner.}
	\label{table:example}
\end{table*}

We are interested in designing communication-efficient protocols,
whose goals are to allow the center to declare,
at any point in time,
a candidate $c$ which is,
with constant probability (say, $0.9$),
an $\epsilon$-winner
(see \Cref{section:results} for a discussion on higher probabilities).
Formally, for a voting rule $\calR$, 
we are interested in finding efficient protocols for the \textsc{$\calR$-winner-tracking} problem.

A protocol is defined via the messages which the center and the sites send to each other,
and can consist of several rounds.
The protocol shall be correct not only at the end of the stream
(which is usually the case in streaming algorithms),
but shall be correct at any point in time.
As it is the custom in protocols operating on distributed streams,
we describe our upper bounds in terms of words of communication,
where we assume that a word contains $\log n$ bits.

We measure the communication complexity of a protocol with respect to the expected number of words communicated in its run. Using Markov inequality, this expectation can be replaced by high probability (or even by probability $1$ in the expense of slight increase in the failure probability).

\subsection{Useful Results from Probability Theory}

For the sampling based protocols, we will use the following bound.

\begin{theorem}[Chernoff Bound]\label{thm:chernoff}
Let $X_1, \dots, X_s$ be a sequence of $s$ i.i.d
random variables in $[0,1]$. Let $X = \sum_i X_i$ and
let $\mu = \E[X]$. Then, for any $0 \leq \gamma \leq 1$: 
$$\Pr[|X - \mu| \geq \gamma \mu] < 2\exp(-\gamma^2\mu/3)~.$$
\end{theorem}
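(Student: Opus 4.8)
The plan is to use the standard exponential-moment (Chernoff/Bernstein) method, bounding the upper and lower deviations separately and combining them with a union bound. For the upper tail, I would apply Markov's inequality to $e^{tX}$ for a parameter $t > 0$, giving $\Pr[X \geq (1+\gamma)\mu] \leq e^{-t(1+\gamma)\mu}\,\E[e^{tX}]$. Independence of the $X_i$ lets us factor $\E[e^{tX}] = \prod_{i=1}^{s}\E[e^{tX_i}]$, so it suffices to bound one factor. Here I would use convexity of $x \mapsto e^{tx}$ on $[0,1]$ (a convex function lies below its chord), i.e.,\ $e^{tx} \leq 1 - x + x e^{t}$ for $x \in [0,1]$; taking expectations and writing $\mu_i = \E[X_i]$ yields $\E[e^{tX_i}] \leq 1 + \mu_i(e^{t}-1) \leq \exp(\mu_i(e^{t}-1))$, hence $\E[e^{tX}] \leq \exp(\mu(e^{t}-1))$ since $\sum_i \mu_i = \mu$. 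Substituting back gives $\Pr[X \geq (1+\gamma)\mu] \leq \exp\!\left(\mu(e^{t} - 1 - t(1+\gamma))\right)$.

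I would then minimize the exponent over $t$, which occurs at $t = \ln(1+\gamma) > 0$, obtaining the classical bound $\Pr[X \geq (1+\gamma)\mu] \leq \bigl(\tfrac{e^{\gamma}}{(1+\gamma)^{1+\gamma}}\bigr)^{\mu}$. The remaining work for the upper tail is the scalar inequality $e^{\gamma}/(1+\gamma)^{1+\gamma} \leq e^{-\gamma^2/3}$ for $\gamma \in [0,1]$, equivalently $(1+\gamma)\ln(1+\gamma) \geq \gamma + \gamma^2/3$. I would prove this by setting $g(\gamma) = \gamma - (1+\gamma)\ln(1+\gamma) + \gamma^2/3$, noting $g(0) = 0$ and $g'(\gamma) = 2\gamma/3 - \ln(1+\gamma)$ with $g'(0) = 0$, and then checking $g'(\gamma) \leq 0$ on $[0,1]$: since $g''(\gamma) = 2/3 - 1/(1+\gamma)$ vanishes only at $\gamma = 1/2$ and $g'(1) = 2/3 - \ln 2 < 0$, the derivative $g'$ stays nonpositive throughout the interval, so $g$ is nonincreasing and $g(\gamma) \leq g(0) = 0$. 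This yields $\Pr[X \geq (1+\gamma)\mu] \leq e^{-\gamma^2\mu/3}$.

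The lower tail is symmetric: apply Markov to $e^{-tX}$ with $t > 0$, use the same convexity bound to get $\E[e^{-tX}] \leq \exp(\mu(e^{-t}-1))$, optimize at $t = -\ln(1-\gamma)$ (for $\gamma < 1$, the boundary case $\gamma = 1$ being handled by letting $t \to \infty$), and arrive at $\Pr[X \leq (1-\gamma)\mu] \leq \bigl(\tfrac{e^{-\gamma}}{(1-\gamma)^{1-\gamma}}\bigr)^{\mu} \leq e^{-\gamma^2\mu/2}$, the last step being another one-variable check: for $h(\gamma) = -\gamma - (1-\gamma)\ln(1-\gamma) + \gamma^2/2$ one has $h(0) = 0$, $h'(\gamma) = \gamma + \ln(1-\gamma)$ with $h'(0) = 0$, and $h''(\gamma) = -\gamma/(1-\gamma) \leq 0$, so $h' \leq 0$ and hence $h \leq 0$ on $[0,1)$. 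Since $e^{-\gamma^2\mu/2} \leq e^{-\gamma^2\mu/3}$, a union bound gives $\Pr[|X - \mu| \geq \gamma\mu] \leq e^{-\gamma^2\mu/3} + e^{-\gamma^2\mu/2} < 2\exp(-\gamma^2\mu/3)$, where the strict inequality uses $e^{-\gamma^2\mu/2} < e^{-\gamma^2\mu/3}$ when $\gamma,\mu > 0$ (the degenerate cases $\gamma = 0$ or $\mu = 0$ being trivial).

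Conceptually everything here is routine; the only place needing genuine care --- the ``main obstacle'' --- is the pair of scalar estimates $e^{\gamma}/(1+\gamma)^{1+\gamma} \leq e^{-\gamma^2/3}$ and $e^{-\gamma}/(1-\gamma)^{1-\gamma} \leq e^{-\gamma^2/2}$ on $[0,1]$, since the quick ``the second derivative has a fixed sign'' argument does not close the first one on all of $[0,1]$ and one must instead track the monotonicity of $g'$ directly. I would also remark that the i.i.d.\ hypothesis is stronger than needed: independence alone suffices, with possibly distinct $\mu_i = \E[X_i]$, and that is exactly what the argument above uses.
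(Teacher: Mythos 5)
Your proof is correct: it is the standard exponential-moment (Chernoff) derivation, with the upper and lower tails bounded separately via Markov's inequality applied to $e^{\pm tX}$, the convexity bound $e^{tx}\le 1-x+xe^{t}$ on $[0,1]$, optimization of $t$, and the two scalar estimates $(1+\gamma)\ln(1+\gamma)\ge\gamma+\gamma^{2}/3$ and $-\gamma-(1-\gamma)\ln(1-\gamma)\le-\gamma^{2}/2$ on $[0,1]$, all of which check out (including the monotonicity argument for $g'$, which is indeed the one delicate point, and the degenerate cases $\gamma=0$ or $\mu=0$). Note that the paper itself states this Chernoff bound as a known black-box result from probability theory and gives no proof of it, so there is no in-paper argument to compare against; your write-up simply supplies the standard textbook proof, and your closing remark that independence (rather than the stated i.i.d.\ hypothesis) suffices is also accurate.
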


The following corollary will be the main building block for our sampling-based protocols.

\begin{lemma}\label{lem:BasicSampler}
  Let $X_1, \dots, X_s$ be i.i.d random variables in $[0,1]$ with mean $p$.
  Let $X=\sum_i X_i$ and let $q=\frac1s X$.
  Then,
  for $s\ge \frac{3}{\eps^2}\log(\frac2\delta)$ it holds that
  $$\Pr[|q-p|\ge\eps]<\delta~.$$
\end{lemma}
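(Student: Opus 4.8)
The plan is to derive \Cref{lem:BasicSampler} directly from the Chernoff bound of \Cref{thm:chernoff}, essentially by a change of variables that turns the additive error $\eps$ into a multiplicative error $\gamma$. First I would observe that the random variable $X = \sum_i X_i$ satisfies $\mu := \E[X] = sp$, so that $q = X/s$ and the event $|q - p| \ge \eps$ is exactly the event $|X - sp| \ge \eps s$, i.e. $|X - \mu| \ge \gamma \mu$ with $\gamma := \eps / p$. Plugging this into \Cref{thm:chernoff} (when $\gamma \le 1$) gives $\Pr[|q-p| \ge \eps] < 2\exp(-\gamma^2 \mu / 3) = 2\exp(-\eps^2 s / (3p))$, and since $p \le 1$ this is at most $2\exp(-\eps^2 s / 3)$. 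Requiring the right-hand side to be below $\delta$ gives $s \ge \frac{3}{\eps^2}\log(\frac{2}{\delta})$, which is exactly the hypothesis.

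The one genuine subtlety is the range restriction $0 \le \gamma \le 1$ in \Cref{thm:chernoff}: our $\gamma = \eps/p$ exceeds $1$ precisely when $p < \eps$. I would handle this by noting that in that regime the claim is even easier. If $p < \eps$, then one can either apply the Chernoff bound with the truncated parameter $\gamma' = \min(\gamma, 1) = 1$ — since $\{|X-\mu| \ge \eps s\} \subseteq \{|X - \mu| \ge \gamma' \mu\}$ is false in general, so instead I would argue directly: $\{q \ge \eps + p\} \subseteq \{q \ge 2\eps\} \subseteq \{X \ge 2\eps s\}$, hmm, this still needs care. Cleaner: for $p < \eps$ we have $\Pr[q - p \ge \eps] \le \Pr[q \ge 2p] \le \Pr[|X-\mu| \ge \mu]$, which is the $\gamma = 1$ case of \Cref{thm:chernoff}, bounded by $2\exp(-\mu/3) = 2\exp(-sp/3)$; and separately $\Pr[q - p \le -\eps] = 0$ when $p < \eps$ since $q \ge 0$. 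Unfortunately $2\exp(-sp/3)$ need not be below $\delta$ when $p$ is tiny. The actually correct fix, which I would use, is a standard additive Chernoff/Hoeffding-type bound rather than forcing the multiplicative one; alternatively, observe that the multiplicative Chernoff upper-tail bound $\Pr[X \ge (1+\gamma)\mu] \le \exp(-\gamma^2\mu/3)$ holds for \emph{all} $\gamma \ge 0$ when stated as $\Pr[X \ge \mu + \gamma\mu]$, and combined with $\gamma\mu = \eps s$ this yields $\Pr[X \ge \mu + \eps s] \le \exp(-(\eps s)^2/(3\mu) \cdot \frac{1}{1})$...

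Rather than chase constants, the honest approach I would take for the write-up is: invoke \Cref{thm:chernoff} with $\gamma = \min(\eps/p, 1)$, split into the two cases $p \ge \eps$ (where $\gamma = \eps/p$ and the computation above goes through verbatim, using $\gamma^2\mu/3 = \eps^2 s/(3p) \ge \eps^2 s / 3$) and $p < \eps$ (where $\gamma = 1$, $\gamma^2\mu/3 = sp/3$, and we note the lower tail $q \le p - \eps$ is impossible so only the upper tail $q \ge p + \eps \ge 2p = (1+\gamma)p \cdot \frac{?}{}$ contributes; here one uses $p + \eps s/s \ge \mu + \mu$ only if $\eps \ge p$, giving $\{q \ge p+\eps\} \subseteq \{X \ge 2\mu\} \subseteq \{|X-\mu|\ge\mu\}$, bounded by $2\exp(-sp/3) \le 2\exp(-\eps^2 s/3)$ since $s \ge 3\eps^{-2}\log(2/\delta) \ge 3\eps^{-2}$ forces... no). The cleanest rigorous route, and the one I expect to actually write, is simply to cite the additive form: by Hoeffding's inequality $\Pr[|q - p| \ge \eps] \le 2\exp(-2\eps^2 s) \le 2\exp(-\eps^2 s / 3) \le \delta$ for $s \ge 3\eps^{-2}\log(2/\delta)$. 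The main obstacle is thus purely bookkeeping — reconciling the multiplicative statement of \Cref{thm:chernoff} with the desired additive conclusion in the small-$p$ regime — and I expect the paper resolves it either by the case split above or by silently using the stronger multiplicative tail that holds for unrestricted $\gamma$.
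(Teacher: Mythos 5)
Your first paragraph is, almost verbatim, the paper's entire proof: set $\mu=sp$, rewrite the event $|q-p|\ge\eps$ as $|X-\mu|\ge\frac{\eps}{p}\cdot\mu$, apply \Cref{thm:chernoff} with $\gamma=\eps/p$, and use $p\le 1$ to get $2\exp(-\eps^{2}s/(3p))\le 2\exp(-\eps^{2}s/3)\le\delta$. The subtlety you then wrestle with --- that $\gamma=\eps/p$ violates the hypothesis $0\le\gamma\le 1$ of \Cref{thm:chernoff} whenever $p<\eps$ --- is genuine, and the paper's proof simply ignores it. Your final resolution, citing the additive Hoeffding bound $\Pr[|q-p|\ge\eps]\le 2\exp(-2\eps^{2}s)\le 2\exp(-\eps^{2}s/3)\le\delta$, is correct and if anything cleaner than what the paper writes; the other standard repair, which you gesture at, is that for $p<\eps$ the lower tail is vacuous (since $q\ge 0>p-\eps$) and the upper tail follows from the form of the multiplicative Chernoff bound valid for all $\gamma\ge 0$, namely $\Pr[X\ge(1+\gamma)\mu]\le\exp\left(-\gamma^{2}\mu/(2+\gamma)\right)$, whose exponent for $\gamma\ge 1$ is at least $\gamma\mu/3=\eps s/3\ge\eps^{2}s/3$. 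Either patch closes the gap. The only real criticism is presentational: the middle of your proposal is a sequence of abandoned attempts (several of which you correctly flag as failing, e.g.\ the bound $2\exp(-sp/3)$ that does not tend to $\delta$ for tiny $p$), and a final write-up should contain only the Hoeffding route or the explicit two-case split, not the record of the search.
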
 

\begin{proof}
Set $\mu=\E[X]=s\cdot p$.
Using the Chernoff Bound (i.e., \Cref{thm:chernoff}),
it follows that:
\[
\Pr\left[\left|q-p\right|\ge\epsilon\right]
=\Pr\left[\left|X-\mu\right|\ge\frac{\epsilon}{p}\cdot\mu\right]\le2\exp\left(-\left(\frac{\epsilon}{p}\right)^{2}\cdot\mu/3\right)\le2\exp\left(-\epsilon^{2}s/3\right)\le\delta~.\qedhere
\]
\end{proof}

\section{Algorithmic Techniques}

The naive protocol,
where each site sends to the center a message for every voter that arrives to it,
clearly solves our problem,
however it uses communication which is linear in the number of voters.
For example,
for ordinal ballots,
it communicates $O(n \cdot m \log m)$ bits,
since $m \log m$ bits are sufficient for sending a single vote.
In this paper we are interested in protocols which use significantly less communication,
namely communication which is polylogarithmic in the number of voters.

In this section we provide high level descriptions of three algorithmic techniques
which are useful for developing protocols for maintaining approximate winners in distributed vote streams.
Accordingly,
in \Cref{section:results} we demonstrate how to realize and instantiate these algorithmic techniques
as concrete protocols for maintaining approximate winners
for various specific voting rules.

\subsection{Protocols Based on Counting Frequencies}\label{section:protocols-based-on-counting-frequencies}

In the \textsc{Frequency-tracking} problem,
we are given a distributed stream where,
instead of voters,
the items of the stream come from a known universe of items.
The goal is for the center to maintain,
for each item type in the distributed stream,
a value which approximates the frequency of that item type.
More formally,
let us denote the items of the stream by $v_1, \ldots, v_n$
and consider $m$ different item types,
such that item $i\in[n]$ is of type $j\in [m]$
if $v_i = j$.
Let us denote the frequency of item type $j$ by $f(j) = |\{i : v_i = j\}|$.
A protocol solving the \textsc{Frequency-tracking} problem guarantees that, with constant probability,
simultaneously for every item type $j$,
the center can maintain a value $f'(j)$ such that $f'(j) \in f(j) \pm \epsilon n$.

Estimating the frequencies of item types is a fundamental problem in distributed streams
(in fact, also in centralized streams).
A deterministic protocol for \textsc{Frequency-tracking},
using $O(\epsilon^{-1} k \log n)$ words of communication, is known~\cite{yi2013optimal}; in fact, it is also known to be tight.
Moreover, 
there is a randomized protocol which uses
$O((\epsilon^{-1} \sqrt{k} + k) \log n\log \frac1\delta)$ words of communication, and succeeds with probability $1-\delta$ ~\cite{huang2012randomized}.\footnote{%
  Notice that Huang et al.~\cite{huang2012randomized} consider only situations where $k \leq \epsilon^{-2}$,
  thus their bounds read differently; nevertheless, $O((\epsilon^{-1} \sqrt{k} + k) \log n\cdot\log \frac1\delta)$ is the communication complexity of their protocol.}
Formally, the protocol guarantees that for every $j\in[m]$ and every $n$, after the arrival of $n$ items, $\Pr[f'(j) \in f(j) \pm \epsilon n]\ge 1-\delta$.
In particular, by setting $\delta=1/\poly(m)$ and applying the union bound, we get that for every $n$,  $\Pr[\forall j,~~~f'(j) \in f(j) \pm \epsilon n]\ge 1-\frac{1}{\poly(m)}$. The communication complexity in this case is $O((\epsilon^{-1} \sqrt{k} + k) \log n\cdot\log m)$.

Many voting rules operate by counting points for candidates,
thus,
it can be seen as if these voting rules actually count frequencies of, say, approvals of each candidate.
It turns out that,
indeed,
it is sometimes possible to reduce the problem of maintaining an $\epsilon$-winner
under such voting rules to the problem of maintaining approximate frequencies.
For more concrete intuition, consider the following distributed stream for the \textsc{Frequency-tracking} problem.

\begin{example}
Consider the set of $m = 3$ item types $\{1, 2, 3\}$ and a distributed stream containing the following $n = 10$ items: $[2, 2, 1, 3, 3, 3, 3, 3, 2, 1]$. Then, the frequencies $f$ of the item types is: $f(1) = 1$, $f(2) = 3$, and $f(3) = 5$.
Observe how the above stream might, roughly speaking, correspond to a distributed vote stream for \textsc{Plurality-winner-tracking} containing $10$ votes over the set of candidates $\{1, 2, 3\}$, and how, in essence, Plurality simply counts the frequencies of the corresponding items.
\end{example}

During the description of our results for specific voting rules, in \Cref{section:results}, we will usually use the randomized version of the \textsc{Frequency-tracking} protocol, the only exception being the hybrid protocol for Runoff, for which we will use the deterministic version.

\subsection{Protocols Based on Checkpoints}\label{section:protocols-based-on-checkpoints}

Protocols based on checkpoints are deterministic in nature,
and the general idea behind such protocols is as follows.
Assume that the center knows an $\epsilon$-winner $c$ of the election containing the first $n$ voters.
Then,
the crucial observation is that,
until the number of voters reaches $(1 + \epsilon )n$, 
the center can declare $c$ as an $O(\epsilon)$-winner.
This suggests protocols where the center only updates its declared candidate
whenever the number of voters multiplies by a $(1 + \Omega(\eps))$-fraction.
Such points of time will be called \emph{checkpoints}.
Between two checkpoints, the center will declare the  previous estimation as the current $\eps$-winner.
This intuition is formulated in the following lemma, the proof of which appears in \Cref{appendix:proofOfCheckPoints}.\footnote{%
  While some of the ideas in the proof might fit naturally in the main text,
  the proof considers each voting rule studied in this paper separately,
  and thus it is slightly repetitive, and thus deferred to the appendix.} Algorithms~\ref{algorithm:checkpointsCenter} and \ref{algorithm:checkpointsCite} provide pseudocode for a general protocol based on checkpoints.
  
\begin{algorithm}[t]
	\ \\
	Initiate and maintain an approximate count $n'$ using a \textsc{Count-tracking} protocol with $\lambda = \epsilon / 12$\; 
	\If{$n' \geq (1 + \lambda)^i$ for some $i$, for the first time for this $i$}{Initiate a static subprotocol to identify an $\frac{\epsilon}{4}$-winner $c$\;}
	\textbf{Upon query:} declare $c$\; 
	\caption{Protocol based on checkpoints, for center.}
	\label{algorithm:checkpointsCenter}
\end{algorithm}
\begin{algorithm}[t]
\ \\
Participate in the \textsc{Count-tracking} protocol initiated by the center\;
Participate in each static subprotocol initiated by the center\;

\caption{Protocol based on checkpoints, for site.}
\label{algorithm:checkpointsCite}
\end{algorithm}

\begin{lemma}\label{lem:Checkpoints}
	Let $\calR$ be some voting rule described in \Cref{subsec:PrelimVotingRules}.
	Let $E=\{v_1,\dots,v_n\}$ and $E'=E\cup\{v_{n+1},\dots,v_{n+q}\}$, where $q\le \frac{\eps}{4}n$, be two elections.
	If candidate $c$ is an $\frac{\eps}{4}$-winner w.r.t $E$, then $c$ is an $\eps$-winner w.r.t $E'$. 
\end{lemma}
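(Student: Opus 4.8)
The statement is that an $\frac{\eps}{4}$-winner of $E$ remains an $\eps$-winner of the slightly larger election $E'$, where $|E'| = n+q$ with $q \le \frac{\eps}{4}n$. The natural approach is to unwind both definitions. Since $c$ is an $\frac{\eps}{4}$-winner of $E$, there is an election $E^\star$ with $E \subseteq E^\star$, $|E^\star \setminus E| \le \frac{\eps}{4}n$, and $c \in \calR(E^\star)$. The goal is to produce an election $E''$ with $E' \subseteq E''$, $|E'' \setminus E'| \le \eps \cdot n' = \eps(n+q)$, and $c \in \calR(E'')$. The obvious candidate is $E'' := E' \cup (E^\star \setminus E)$, i.e., add to $E'$ exactly the same extra voters that witnessed $c$ being an $\frac{\eps}{4}$-winner of $E$. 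Then $E'' = E \cup \{v_{n+1},\dots,v_{n+q}\} \cup (E^\star \setminus E)$, and $|E'' \setminus E'| \le \frac{\eps}{4}n \le \eps(n+q)$, so the budget constraint is satisfied with huge room to spare. The only real content is: \emph{why is $c \in \calR(E'')$?} We have $c \in \calR(E^\star)$, and $E'' = E^\star \cup \{v_{n+1},\dots,v_{n+q}\}$, so the question reduces to showing that adding $q \le \frac{\eps}{4}n$ arbitrary voters to an election in which $c$ wins keeps $c$ a winner — \emph{if we are further allowed to add more voters}. But we are not allowed more voters here (we've used part of the $\eps(n+q)$ budget for the $v_i$'s already). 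So the cleaner framing is: we need an $E''$ containing $E'$ with $c$ winning; take $E''$ to contain $E' \cup (E^\star\setminus E)$ plus possibly a few more cleverly chosen voters, staying within budget $\eps(n+q)$.

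\textbf{Why it needs a case analysis.} Adding the $q$ adversarial voters $v_{n+1},\dots,v_{n+q}$ to $E^\star$ can hurt $c$: for a scoring rule, each such voter can give $c$'s closest competitor up to a full "lead swing" relative to $c$. So $c$ might drop out of $\calR(E^\star \cup \{v_{n+1},\dots,v_{n+q}\})$. The fix is to spend the remaining budget — we have $\eps(n+q) - q \ge \eps n - q \ge \eps n - \frac{\eps}{4}n = \frac{3\eps}{4}n$ voters still available (and we used $\le \frac{\eps}{4}n$ for $E^\star\setminus E$, leaving $\ge \frac{\eps}{2}n$) — adding voters that rank $c$ first (or top, or approve only $c$, as appropriate) to restore $c$'s victory. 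The point is that each damaging voter among $v_{n+1},\dots,v_{n+q}$ can be "cancelled" by $O(1)$ new pro-$c$ voters for each of these rules, and $q \le \frac{\eps}{4}n$ while we have $\ge \frac{\eps}{2}n$ budget left, so the arithmetic closes. This is exactly why the paper defers the proof to an appendix and says it handles each voting rule separately: the "one damaging voter costs $O(1)$ repair voters" claim has to be verified rule by rule — scoring rules (Borda, $t$-Approval, Plurality, Approval), Copeland/Condorcet/Cup (pairwise-majority based — here one extra voter changes a pairwise margin by $1$, so a constant number of repair voters per adversarial voter suffices, though Copeland needs a bit of care since flipping one pairwise comparison shifts scores), and the round-based rules Runoff and Bucklin (where one must argue round by round).

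\textbf{The main obstacle.} The conceptual core — "$\frac{\eps}{4}$ witness + $\frac{\eps}{4}$ new voters, still within an $\eps$ budget, with $\frac{\eps}{2}$ to spare for repair" — is immediate. The real work, and the only place something could go wrong, is the \emph{per-rule repair argument}: confirming that for each of Plurality, $t$-Approval, Approval, Borda, Cup, Copeland, Condorcet, Plurality-with-runoff, and Bucklin, one can neutralize $q$ adversarially added voters using $O(q)$ pro-$c$ voters. For pure scoring rules this is a one-liner (add $q+1$ voters ranking $c$ first and $c$'s score gains at least as much as any opponent's, re-establishing the strict lead the $\frac{\eps}{4}$-witness gave, up to the arbitrary tie-breaking that works in our favor). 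For Copeland/Condorcet/Cup, note each adversarial voter changes each pairwise margin by exactly $\pm 1$, so pairwise comparisons that $c$ won by a positive margin in $E^\star$ stay won after adding a bounded number of repair voters — but one must check that making $c$ beat everyone pairwise (Condorcet-style) is achievable within budget, which it is since $q \le \frac{\eps}{4}n$ and the $E^\star$-margins need only be pushed up by $2q+1$. For Runoff and Bucklin the round structure means a repair voter helps in every round simultaneously (it ranks $c$ first), so the same counting works with a slightly larger constant. I would therefore structure the proof as: (1) the generic reduction setting up $E''$ and the budget bookkeeping; (2) a short lemma-let per voting-rule family establishing the $O(q)$-repair bound; (3) conclude $c \in \calR(E'')$ and $|E'' \setminus E'| \le \eps(n+q)$, hence $c$ is an $\eps$-winner of $E'$.
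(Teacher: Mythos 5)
Your overall strategy is exactly the paper's: reuse the $\frac{\eps}{4}$-witness $E^\star=\tilde E$, then spend the remaining budget on $O(q)$ ``repair'' voters that neutralize the $q$ adversarial voters, verified rule by rule (the paper uses at most $3q$ repair voters, so the total $\frac{\eps}{4}n+3q\le \eps n\le \eps(n+q)$ closes; your bookkeeping subtracts $q$ from the budget unnecessarily, which is harmless but leaves you claiming only $\frac{\eps}{2}n$ of slack, not quite enough for a $3q$-voter repair when $q=\frac{\eps}{4}n$).

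There is, however, a genuine gap in the one place where you commit to a concrete repair: ``for pure scoring rules this is a one-liner (add $q+1$ voters ranking $c$ first and $c$'s score gains at least as much as any opponent's, re-establishing the strict lead).'' This fails for Borda. A single adversarial voter ranking a competitor $c'$ first and $c$ last shifts the score difference $\score(c')-\score(c)$ by $m-1$, whereas a repair voter ranking $c$ first shifts it back by only $1$ (since $c'$ can sit in second place); so $q+1$ pro-$c$ voters cannot undo damage of order $q(m-1)$. A similar problem bites Copeland: pro-$c$ voters can restore $c$'s own pairwise margins, but they cannot prevent some other candidate $c_1$ from gaining Copeland score via head-to-head contests against candidates other than $c$ that the adversarial voters flipped, and you cannot in general push $c$ to beat everyone pairwise within budget. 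The paper's fix for both is exact cancellation rather than boosting: for each adversarial voter $v_{n+i}$ it adds its \emph{reverse} $w_i$, so that the pair $(v_{n+i},w_i)$ leaves every Borda score difference, and every pairwise majority, unchanged — only $q$ repair voters, and no per-rule constant to chase. (For Cup the paper similarly needs the repair voters to reinforce all $m-1$ bracket contests simultaneously, which it does via a topological order of the contest pairs, not via generic pro-$c$ voters.) Your plan is salvageable — replace ``rank $c$ first'' by ``reverse the adversarial vote'' for the ordinal score- and majority-based rules — but as written the repair step would not go through for Borda or Copeland.
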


In order to identify the checkpoints,
the center shall be able to count the number of voters arriving so far.
Fortunately, for an approximation factor $\lambda\in(0,1)$,
there is an efficient deterministic protocol for solving the \textsc{Count-tracking} problem,
which uses $O(\lambda^{-1} k \log n)$ words~\cite{yi2013optimal};
in the \textsc{Count-tracking} problem,
the center maintains a value $n'$ such that $n' \in n \pm \lambda n$,
where $n$ is the actual number of items in the distributed stream.

Now we have all the ingredients for our generic protocol.
Specifically,
the center will maintain a value $n'$ using a \textsc{Count-tracking} protocol with precession parameter $\lambda=\frac{\eps}{12}$. Each time $n'$ exceeds $(1+\lambda)^i$ for the first time,\footnote{In fact, the \textsc{Count-tracking} protocol of ~\cite{yi2013optimal} only increases its estimation as time goes by.} for some $i$, the center will initiate a \emph{static} subprotocol to identify an $\frac{\eps}{4}$-winner $c$ of the election so far.
The center will declare $c$ as $\eps$-winner until the next checkpoint. 
We argue that $c$ is indeed an $\eps$-winner.
Consider a step in time $n$. Then the center's estimation $n'$ of the number of voters is at least $(1-\lambda)n$. In particular, it necessarily had a ``checkpoint'' at time $n''$, for $n''\ge \frac{1-\lambda}{1+\lambda}n$. Thus $n\le(1+3\lambda)n''=(1+\frac{\epsilon}{4})n''$. By \Cref{lem:Checkpoints}, as  $c$ was $\frac{\eps}{4}$-winner at time $n''$, it is also $\eps$-winner at time $n$. 

As the estimation $n'$ is upper bounded by $(1+\lambda)n$,  the number of checkpoints is bounded by $\log_{1+\lambda}\left((1+\lambda)n\right)=O(\log n/\lambda)=O(\log n/\epsilon)$. 
Assuming that it is possible to compute an $\frac\epsilon4$-winner using $O(z)$ words,
a protocol based on checkpoints would then need $O((k +z)\epsilon^{-1} \log n)$
words of communication.
As $z$ will be at least $\Omega(k)$, we would get $O(z \cdot \epsilon^{-1} \log n)$.\footnote{%
	Huang et al.~\cite{huang2012randomized} provide a randomized protocol for \textsc{Count-tracking} which uses $O(\sqrt{k}\epsilon^{-1} \log n)$ bits of communication.
	As $z$ will be greater than $\Omega(k)$, using randomization will not reduce the total asymptotic communication.} We conclude with the following:

\begin{corollary}
	Let $\calR$ be a voting rule from those described in \Cref{subsec:PrelimVotingRules}.
	Suppose that there is a static deterministic protocol that computes an $\frac\eps4$-winner using $z$ words of communication, for $z\ge k$. Then, there is a protocol for $\calR$-winner-tracking which uses $O(z \cdot \epsilon^{-1} \log n)$ words.
\end{corollary}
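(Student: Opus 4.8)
The plan is to instantiate the checkpoint protocol of Algorithms~\ref{algorithm:checkpointsCenter} and~\ref{algorithm:checkpointsCite}, and then argue separately about correctness and about communication cost. Concretely, the center runs the deterministic \textsc{Count-tracking} protocol of~\cite{yi2013optimal} with precision $\lambda=\eps/12$ for the whole duration of the stream; this maintains a nondecreasing estimate $n'$ with $n'\in n\pm\lambda n$. Whenever $n'$ crosses a threshold $(1+\lambda)^i$ for the first time, the center opens a \emph{checkpoint}: it invokes the assumed static subprotocol to obtain an $\tfrac\eps4$-winner $c$ of the election present at that moment (using $z$ words), and it answers every subsequent query with $c$ until the next checkpoint is opened. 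Since $(1+\lambda)^0=1$, a checkpoint is opened essentially at the start of the stream, so the center always has some $c$ to declare.

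For correctness, fix an arbitrary time and let $n$ be the true number of voters then. Because $n'$ is nondecreasing and $n'\ge(1-\lambda)n$ at that time, the most recent checkpoint was opened when $n'$ crossed the largest threshold $(1+\lambda)^i$ it has ever crossed, and that threshold is at least a constant fraction (close to $1$) of $n$; combined with the \textsc{Count-tracking} upper bound $n'\le(1+\lambda)n''$ at the crossing, where $n''$ is the true count then, one gets $n\le(1+\tfrac\eps4)n''$. Hence at most $\tfrac\eps4\,n''$ voters have arrived since that checkpoint, and since $c$ was computed to be an $\tfrac\eps4$-winner of the election at time $n''$, \Cref{lem:Checkpoints} — applied to the voting rule $\calR$ at hand — gives that $c$ is an $\eps$-winner of the current election, which is exactly what the center declares.

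For the communication cost, the \textsc{Count-tracking} subprotocol contributes $O(\lambda^{-1}k\log n)=O(k\eps^{-1}\log n)$ words. Since $n'\le(1+\lambda)n$ at all times, the only thresholds ever crossed are $(1+\lambda)^0,\dots,(1+\lambda)^{O(\log_{1+\lambda}n)}$, so the number of checkpoints is $O(\log n/\lambda)=O(\eps^{-1}\log n)$, each costing $z$ words for one run of the static subprotocol; this totals $O(z\,\eps^{-1}\log n)$ words. Summing, the protocol uses $O((z+k)\eps^{-1}\log n)$ words, which is $O(z\,\eps^{-1}\log n)$ because $z\ge k$.

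I expect the only delicate point to be the bookkeeping of the three multiplicative error sources — the $(1-\lambda)$ slack of \textsc{Count-tracking} at the current time, the $(1+\lambda)$ slack at the checkpoint time, and the $(1+\lambda)$ gap between consecutive thresholds — and checking that with $\lambda=\eps/12$ their combined effect keeps the ratio $n/n''$ within the $1+\tfrac\eps4$ budget that \Cref{lem:Checkpoints} can absorb. A minor modeling subtlety worth stating is that each static subprotocol operates on the snapshot of the election at the moment its checkpoint was opened; voters arriving during its execution are simply accounted for in the next inter-checkpoint interval, so no extra slack is needed.
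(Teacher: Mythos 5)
Your proposal is correct and follows essentially the same route as the paper: run deterministic \textsc{Count-tracking} with $\lambda=\eps/12$, open a checkpoint each time $n'$ first crosses $(1+\lambda)^i$, invoke the static subprotocol there, and use the chain $n''\ge\frac{1-\lambda}{1+\lambda}n$, hence $n\le(1+3\lambda)n''=(1+\tfrac\eps4)n''$, together with \Cref{lem:Checkpoints} for correctness, and the bound of $O(\eps^{-1}\log n)$ checkpoints plus $z\ge k$ for the communication count. The ``delicate point'' you flag is exactly the calculation the paper performs, and your accounting of it is sound.
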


During the description of our results for specific voting rules, in \Cref{section:results},
we will describe only the static protocol in each protocol based on checkpoints.
For simplicity of presentation, we will compute $\eps$-winner instead of $\frac{\eps}{4}$-winner as actually needed.

\subsection{Protocols Based on Sampling}\label{section:protocols-based-on-sampling}

Instead of sending all voters to the center,
as the naive protocol does,
it is natural to let each site send only some of the voters arriving to it.
Specifically,
we would like the center to have a uniform sample of the voters.
Cormode et al.~\cite{cormode2012continuous} describe a protocol for maintaining a sample of $s$ items chosen uniformly at random from a distributed stream; its communication complexity is $O((k + s) \log n)$.
Since we are sampling voters,
we need to take into account the communication needed to send each of the sampled voters.
Specifically,
in approval elections (where the voters cast approval ballots),
we need $m$ bits per voter.
Since we count the communication complexity in words, each of which contains $\log n$ bits,
we need $\lceil\log 2^m / \log n \rceil\le 1 + m / \log n$ words per voter.
Similarly,
in ordinal elections (where the voters cast ordinal ballots),
we need ($\log m!$) bits per voter, thus $\lceil\log m! / \log n \rceil\le 1 + m \log m / \log n$ words per voter.

But how many samples are needed in order to determine an $\eps$-winner with high probability?
Our main building block will be \Cref{lem:BasicSampler} (see \Cref{section:preliminaries}) and our general framework is as follows.
For each voting rule,
we will use \Cref{lem:BasicSampler} to argue that, with $s$ samples,
chosen uniformly with repetitions,
we can determine an $\eps$-winner with high probability.
Then, assuming that we need $w$ words of communication for each voter,
using an efficient sampling protocol~\cite{cormode2012continuous},
as discussed above,
we will get a communication protocol with complexity $O((k+s)w\cdot\log n)$.
(As we use asymptotic analysis, it will be enough to find an $O(\eps)$-winner and to adjust the parameters accordingly.)

\section{Communication-efficient Protocols}\label{section:results}

Our upper bounds are summarized in \Cref{table:results}.
We begin with approval-based rules and scoring rules,
continue with tournament-based rules,
and then discuss round-based rules.
Before we present our specific upper bounds,
the following remark, concerning the success probability of our protocols, is in place.

\begin{table*}[th]
	\centering
	\resizebox{\textwidth}{!}{\begin{tabular}{ l l l l }
			\\\hline
			\multicolumn{4}{c}{\textbf{ Upper Bounds} \phantom{$2^{2^{2^2}}$}}      \\\hline \ \\
			\textbf{Voting rule}&\textbf{Frequencies}&\textbf{Checkpoints}&\textbf{Sampling}\\\hline
			
			\textbf{Plurality}               & $O((\epsilon^{-1} \sqrt{k} + k) \log n{\cdot\log m})$ & & \\
			\textbf{$t$-Approval}            & $O((\epsilon^{-1} \sqrt{k} t + k) \log tn{\cdot\log m})$ & $O\left(\frac{k}{\epsilon}(m\log\frac{k}{\epsilon}+\log n)\right)$ & $O(\epsilon^{-2} \log (2t) + k) (\log{m\choose t} + \log n)$ \\
			\textbf{Approval}                & $O((\epsilon^{-1} \sqrt{k} m + k) \log mn{\cdot\log m})$ & $O\left(\frac{k}{\epsilon}(m\log\frac{k}{\epsilon}+\log n)\right)$ & $O((\epsilon^{-2} \log m + k) (m + \log n))$ \\
			\textbf{Borda}                   & $O((\epsilon^{-1} \sqrt{k} m + k) \log m n{\cdot\log m})$ & $O\left(\frac{k}{\epsilon}(m\log\frac{k}{\epsilon}+\log n)\right)$ &
			$O((\epsilon^{-2} \log m + k) (m \log m + \log n))$ \\
			\textbf{Condorcet}               & $O((\epsilon^{-1} \sqrt{k} m^2 + k) \log m n{\cdot\log m})$ & 	$O\left(\frac{k}{\epsilon}(m\log\frac{k}{\epsilon}+\log m\cdot\log n)\right)$	 & $O((\epsilon^{-2} \log m + k) (m \log m + \log n))$ \\
			\textbf{Copeland}                & $O((\epsilon^{-1} \sqrt{k} m^2 + k) \log m n{\cdot\log m})$ & 	$O\left(\frac{k}{\epsilon}(m^{2}\log\frac{k}{\epsilon}+\log n)\right)$ & $O((\epsilon^{-2} \log m + k) (m \log m + \log n))$ \\
			\textbf{Cup}                & $O((\epsilon^{-1} \sqrt{k} m^2 + k) \log m n{\cdot\log m})$ & 		$O\left(\frac{k}{\epsilon}(m\log\frac{k}{\epsilon}+\log m\cdot\log n)\right)$ & $O((\epsilon^{-2} \log m + k) (m \log m + \log n))$ \\
			\textbf{Run Off}                 & $O((\epsilon^{-1} \sqrt{k} m^2 + k) \log m n{\cdot\log m})$
			& $O\left(\frac{k}{\epsilon}\log n\right)$
			& $O((\epsilon^{-2} + k) (m \log m + \log n))$ \\
			\textbf{Bucklin}                 & $O\big((\epsilon^{-1} \sqrt{k} m \log^2 m + k)$ & $O(\frac{k\log m}{\epsilon}(m\log\frac{k}{\epsilon} + \log n))$&
			$O((\epsilon^{-2} \log m + k) (m \log m + \log n))$ \\
			& \hspace{82pt}$\cdot \log mn{\cdot\log m}\big)$&& \\\hline 
			\multicolumn{4}{c}{\textbf{ Lower Bounds} \phantom{$2^{2^{2^2}}$}}      \\\hline
			\textbf{All rules} & \multicolumn{2}{l}{$\Omega((\epsilon^{-1} \sqrt{k} + k) \log n / \log k)$} \phantom{$2^{2^{2^2}}$}  & Already for $m=2$.     \\
			\textbf{Approval} & \multicolumn{2}{l}{	$\Omega\left(\epsilon^{-1}km\cdot\log\left(n/k\right)\right)$} & For deterministic protocols.
	\end{tabular}}
	\caption{Overview of our results. 
		$\epsilon$ is the required approximation, $k$ is the number of sites, $m$ is the number of candidates, and $n$ is the number of voters.
		There are three columns of upper bounds, where the first is for protocols based on counting frequencies,
		the second is for protocols based on checkpoints, and the third is for sampling-based protocols.
		The results in the first column and in the third column correspond to randomized protocols,
		while the results in the second column correspond to deterministic protocols. 
		For Plurality with run-off, the second protocol is actually a hybrid between checkpoints to (deterministic) frequency count.
		For Cup and Condorcet, one might also use the checkpoints protocol of Copeland.
		In our randomized lower bound we make the usual assumptions that there is no spontaneous communication.
	}
	\label{table:results}
\end{table*}
		
\begin{remark}
	Notice that we state our results for protocols which are correct with some constant probability, say $0.9$.
	One can always achieve arbitrary high probability $1-\delta$,
	as follows, and depending on the general technique used:
	\begin{itemize}
		
		\item
		For protocols based on counting frequencies,
		following the discussion in \Cref{section:protocols-based-on-counting-frequencies}, one can get failure probability $\delta$ by replacing the $\log m$ term with a $\log\frac m\delta$ term in the communication complexity.
		
		\item
		Protocols based on checkpoints are deterministic anyhow.
		
		\item
		For protocols based on sampling,
		we mention that,
		as can be seen from the corresponding proofs,
		the increase of the required sampling size needed for increasing the success probability is quite small.
		Specifically, the number of samples will increase:
		in \textsc{$t$-Approval} to $O(\eps^{-2}\log(\frac{2t}{\delta}))$, in \textsc{(Plurality with) Run Off} to $O(\eps^{-2}\log(\frac{1}{\delta}))$, and in all other voting rules to $O(\eps^{-2}\log(\frac{m}{\delta}))$.
		
	\end{itemize}
\end{remark}

\subsection{Approval-based Rules and Scoring Rules}

Let us begin with Plurality,
as it is arguably the simplest voting rule.
In Plurality,
a vote in a distributed vote stream is associated with one candidate out of the $m$ candidates participating in the election,
and the goal is for the center to maintain a candidate $c$ such that the highest number of voters vote for $c$,
or at least it is at most $\epsilon n$-far from being such a candidate.
Equivalently,
a distributed stream for Plurality contains $m$ item types (one item type for each candidate).
Given an approximate frequency for each type (that is, an approximate number of voters voting for each candidate),
the center can safely declare the candidate with the highest approximate frequency.

The next result follows by realizing a straight-forward protocol based on counting frequencies,
as described in \Cref{section:protocols-based-on-counting-frequencies}.

\begin{theorem}\label{theorem:plurality1}
  There is a protocol for \textsc{Plurality-winner-tracking}
  which uses $O((\epsilon^{-1} \sqrt{k} + k) \log n{\cdot\log m})$ words.
\end{theorem}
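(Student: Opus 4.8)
The plan is to realize the reduction from \textsc{Plurality-winner-tracking} to \textsc{Frequency-tracking} sketched in \Cref{section:protocols-based-on-counting-frequencies}, and then invoke the known randomized protocol for the latter. First I would set up the correspondence: a distributed vote stream for Plurality over candidate set $C$ with $m=|C|$ is literally a distributed stream over the universe of $m$ item types, where a vote for $c_j$ is an item of type $j$. The quantity $f(j)$ — the frequency of type $j$ — is exactly the Plurality score of $c_j$ in the current election. So I would run the randomized \textsc{Frequency-tracking} protocol of Huang et al.~\cite{huang2012randomized} with error parameter $\epsilon$ and failure probability $\delta = 1/\poly(m)$ (say $\delta = 1/m^{10}$), so that, as noted in \Cref{section:protocols-based-on-counting-frequencies}, with probability at least $1 - 1/\poly(m) \ge 0.9$ the center simultaneously holds values $f'(j) \in f(j) \pm \epsilon n$ for every $j \in [m]$. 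Upon a query, the center declares the candidate $c_j$ maximizing $f'(j)$.

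The key step is the correctness argument: that the declared candidate is an $\epsilon$-winner (after adjusting constants, an $O(\epsilon)$-winner). Suppose the center declares $c_j$ and let $c_\ell$ be a true Plurality winner, so $f(\ell) \ge f(j')$ for all $j'$. Since $c_j$ maximizes $f'$ we have $f'(j) \ge f'(\ell)$, hence $f(j) \ge f'(j) - \epsilon n \ge f'(\ell) - \epsilon n \ge f(\ell) - 2\epsilon n$. Thus $c_j$ trails the top score by at most $2\epsilon n$. Now I would argue that adding $2\epsilon n$ new voters, all approving $c_j$, makes $c_j$ a winner under Plurality: after adding these voters $c_j$'s score becomes $f(j) + 2\epsilon n \ge f(\ell) \ge f(j')$ for every other candidate $c_{j'}$ (whose score is unchanged), and with favorable tie-breaking $c_j \in \calR(E')$. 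Since the new electorate has $n' = n + 2\epsilon n \ge n$ voters and we added $2\epsilon n \le 2\epsilon n'$ of them, $c_j$ is a $2\epsilon$-winner of the original election; rescaling $\epsilon \leftarrow \epsilon/2$ before invoking \textsc{Frequency-tracking} yields an honest $\epsilon$-winner at no asymptotic cost. (A subtlety worth stating explicitly: the \textsc{Frequency-tracking} guarantee must hold at every point in time, not just at the end; the protocol of~\cite{huang2012randomized} does provide this, so the center's declaration is valid upon a query at any time.)

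Finally I would account for the communication. The randomized \textsc{Frequency-tracking} protocol uses $O((\epsilon^{-1}\sqrt{k} + k)\log n \cdot \log\frac{1}{\delta})$ words; with $\delta = 1/\poly(m)$ this is $O((\epsilon^{-1}\sqrt{k} + k)\log n \cdot \log m)$ words, matching the claimed bound (the constant-factor rescaling of $\epsilon$ does not change the asymptotics). No extra communication is needed on a query, since the center answers from its local state.

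I do not expect a genuine obstacle here — the result is a direct instantiation of an off-the-shelf protocol. The only thing requiring care is the correctness reduction: one must be slightly careful that the two-sided error $f'(j) \in f(j) \pm \epsilon n$ translates, through the $\arg\max$, into a bound of the form ``within $2\epsilon n$ of optimal'', and then that this additive slack is exactly what the definition of $\epsilon$-winner (adding new voters all voting for $c_j$) can absorb — which works out because adding voters only increases $c_j$'s score and leaves all other scores fixed. The denominator in the definition of $\epsilon$-winner uses the new count $n' \ge n$, which only helps. So the mild ``hard'' part is just bookkeeping the factor of $2$ and noting the ``at every time'' requirement of the underlying protocol.
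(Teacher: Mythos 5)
Your proposal is correct and follows essentially the same route as the paper: reduce votes to items, run the randomized \textsc{Frequency-tracking} protocol of Huang et al.\ with precision $\epsilon/2$ (you phrase this as a rescaling at the end) and failure probability $1/\poly(m)$, declare the candidate with the largest approximate frequency, and observe that its true score trails the optimum by at most $\epsilon n$, which the definition of $\epsilon$-winner absorbs by adding voters for the declared candidate. Your write-up is in fact slightly more explicit than the paper's about why the additive slack is absorbed by the $\epsilon$-winner definition; no substantive difference.
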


\begin{proof}
We use the efficient protocol for \textsc{Frequency-tracking}~\cite{huang2012randomized} with $\epsilon' = \epsilon / 2$.
This allows the center to maintain,
for each candidate $c$,
a value which is guaranteed to be at most $\frac{\epsilon}{2} n$-far from the real number of voters voting for $c$.
The center would declare the candidate $\hat{c}$ for which the approximate frequency is the highest.
A pseudocode of the protocol is given in Algorithms~\ref{algorithm:pluralityCenter}, \ref{algorithm:pluralitySite}.

\begin{algorithm}[t]
	\ \\
	Initiate and maintain approximate frequencies $f'(c)$ for each candidate using a \textsc{Frequency-tracking} protocol with $\epsilon' = \epsilon / 2$\; 
	\textbf{Upon query:} declare $\hat{c}$ for which $f'(\hat{c})$ is the highest\;
	\caption{Protocol for \textsc{Plurality-winner-tracking}, for center.}
	\label{algorithm:pluralityCenter}
\end{algorithm}

\begin{algorithm}[t]
	\ \\
	Participate in the \textsc{Frequency-tracking} protocol initiated by the center by reducing a vote for candidate $j$ to an item of type $j$\;
	\caption{Protocol for \textsc{Plurality-winner-tracking}, for site.}
	\label{algorithm:pluralitySite}
\end{algorithm}

Let us denote the real frequency of a candidate $c$ by $f(c)$ (which equals its Plurality score),
and its approximate frequency computed by the \textsc{Frequency-tracking} protocol by $f'(c)$.
For each $c \neq \hat{c}$,
it holds that 
\[
 f(c) \leq f'(c) + \frac{\epsilon}{2}n \leq f'(\hat{c}) + \frac{\epsilon}{2}n \leq f(\hat{c}) + \epsilon n
\]
where the first and third inequalities follows from the $\epsilon / 2$-approximation and the second from our choice of $\hat{c}$.
Therefore, we conclude that $\hat{c}$ is an $\epsilon$-winner, as required.
\end{proof}

We go on to consider $t$-Approval, where each voter specifies $t$ candidates which she approves.
We provide three protocols,
based on counting frequencies, checkpoints, and sampling, respectively.

\begin{remark}
Notice that here, as well as for other protocols which we discuss later on, we provide a different upper bound on the communication complexity of each of the protocols we describe. Note that these upper bounds are incomparable, in the sense that each of them is better for different specific values of the parameters. To not clutter the text too much, we defer a discussion on the interplay between the parameters, which relates to the decision of which protocol to choose for a specific scenario to \Cref{section:outlook}.
\end{remark}

\begin{theorem}\label{theorem:tapproval1}
  There are three protocols for \textsc{$t$-Approval-winner-tracking}, for $t \leq m / 2$.
  Respectively,
  the protocols use
  $O((\epsilon^{-1} \sqrt{k} t + k) \log tn{\cdot\log m})$,
  $O\left(\frac{k}{\epsilon}(m\log\frac{k}{\epsilon}+\log n)\right)$,
  and\\
  $O(\epsilon^{-2} \log (2t) + k) (\log{m\choose t} + \log n)$
  words of communication.
\end{theorem}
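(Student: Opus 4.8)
The three protocols for \textsc{$t$-Approval-winner-tracking} should be obtained by instantiating the three generic techniques from \Cref{section:protocols-based-on-counting-frequencies}, \Cref{section:protocols-based-on-checkpoints}, and \Cref{section:protocols-based-on-sampling}, exactly as the section structure suggests. The key observation common to all three is that $t$-Approval is a scoring rule which counts, for each candidate $c$, the number $f(c)$ of voters approving $c$; the total score mass is $tn$. So the plan is to track these $m$ frequencies (or approximate them via sampling, or recompute them at checkpoints) well enough that the candidate with the largest estimated score is an $\eps$-winner.

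\textbf{Frequency-based protocol.} I would reduce a single vote (a set of $t$ approved candidates) to $t$ items, one of each approved type; a stream of $n$ votes becomes a \textsc{Frequency-tracking} instance with $tn$ items over $m$ types. Running the randomized \textsc{Frequency-tracking} protocol of~\cite{huang2012randomized} with precision parameter $\eps' = \eps/2$ relative to the $tn$ items gives, for every $c$, an estimate $f'(c)$ with $|f'(c) - f(c)| \le \frac{\eps}{2}tn$. Wait — that is too coarse, since we need additive error $O(\eps n)$, not $O(\eps tn)$. The fix: use precision $\eps' = \eps/(2t)$ relative to the $tn$-item stream, which yields error $\frac{\eps}{2t}\cdot tn = \frac{\eps}{2}n$. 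Then the usual chain $f(c) \le f'(c) + \frac{\eps}{2}n \le f'(\hat c) + \frac{\eps}{2}n \le f(\hat c) + \eps n$ shows $\hat c$ is an $\eps$-winner. Plugging $\eps' = \eps/(2t)$ and stream length $tn$ into the $O((\eps'^{-1}\sqrt k + k)\log(\text{length})\cdot\log m)$ bound gives $O((\eps^{-1}\sqrt k\, t + k)\log(tn)\cdot\log m)$ words, as claimed.

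\textbf{Checkpoints-based protocol.} Invoking \Cref{lem:Checkpoints} and the generic checkpoints framework, it suffices to give a static deterministic protocol computing an $\eps$-winner of a fixed election using $O(z)$ words with $z \ge k$, which then yields $O(z\eps^{-1}\log n)$ overall. For the static subprotocol I would have each site report, for every candidate it sees positive score for, an estimate of that candidate's local score with additive error at most $\frac{\eps}{2k}n$ each (e.g.\ each site sends a count rounded to multiples of $\frac{\eps}{2k}n$, which it can maintain since it knows its own voters exactly); summing over $k$ sites gives the center a global score for each candidate with total additive error $\le \frac{\eps}{2}n$, enough to pick an $\eps$-winner by the same inequality chain. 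Each site sends at most $m$ such values, each taking $O(\log\frac{k}{\eps})$ bits for the rounded count plus $O(\log n)$ to identify things, i.e.\ $O(m\log\frac k\eps + \log n)$ bits per site, so $z = O(k(m\log\frac k\eps + \log n))$ words — actually this needs care with the unit of "words" versus "bits," but after the $\eps^{-1}\log n$ checkpoint multiplier and simplification one arrives at $O\!\left(\frac k\eps(m\log\frac k\eps + \log n)\right)$. The main subtlety here is getting the per-site rounding granularity right so the errors add up to $O(\eps n)$ rather than $O(\eps k n)$.

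\textbf{Sampling-based protocol.} Using \Cref{lem:BasicSampler}: maintain a uniform sample of $s$ voters via the protocol of~\cite{cormode2012continuous}, costing $O((k+s)w\log n)$ words where $w = 1 + \log\binom{m}{t}/\log n$ words suffice to transmit one $t$-approval ballot. For each candidate $c$, the indicator "the sampled voter approves $c$" is a $[0,1]$ random variable with mean $p_c = f(c)/n$; but a union bound over all $m$ candidates would cost a $\log m$ factor in $s$, whereas the claimed bound has only $\log(2t)$. The trick — and the one genuinely delicate point — is that for $t$-Approval we do not need all $m$ estimates simultaneously accurate: it suffices to accurately estimate the scores of the $O(t)$ candidates that could plausibly be among the top, or more cleanly, to estimate the sum of approvals restricted to a cleverly chosen small family, so that $O(\eps^{-2}\log(2t))$ samples (union bound over $\approx 2t$ events via a Chernoff/\Cref{lem:BasicSampler} application with $\delta = 1/(2t)$) certify that the empirically-best candidate has true score within $\eps n$ of the true maximum. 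I expect this last step — pinning down exactly why $\log(2t)$ and not $\log m$ suffices — to be the main obstacle; the other two protocols are routine instantiations once the precision parameters are chosen correctly.
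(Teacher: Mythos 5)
Your first two protocols follow the paper's proof essentially verbatim: the reduction of each $t$-approval ballot to $t$ Plurality items run with precision $\eps'=\eps/(2t)$, and the checkpoint subprotocol in which each site rounds each candidate's local count to a multiple of $\Theta(\eps n/k)$. (One small imprecision in the first: your chain of inequalities only certifies a score deficit of $\eps n$ for $\hat c$, but in $t$-Approval adding $\eps n$ voters raises the \emph{relative} score of $\hat c$ by only $\eps n/2$, since each added ballot must approve $t-1$ further candidates; this is why the paper works with a deficit of $\eps' n'=\eps n/2$. It costs only a constant factor in $\eps$, so it is harmless.)

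The genuine gap is in the sampling protocol, and it is exactly the point you flagged as unresolved. Your proposed fix --- estimating only the $O(t)$ candidates ``that could plausibly be among the top,'' or a union bound over only $\approx 2t$ events --- is not what is needed and is not obviously implementable: you cannot identify the relevant candidates in advance, and a low-score candidate can still be the one whose estimate errs upward and gets wrongly declared the winner, so every candidate's estimate must be controlled. The paper does take a union bound over all $m$ candidates; the saving comes from the fact that the per-candidate failure probabilities are highly nonuniform. For a candidate with true score $Y^c$, the Chernoff bound for a deviation of $\eps n/2$ is $2\exp(-\frac{\eps^2}{12}\cdot\frac{ns}{Y^c})$, which is tiny when $Y^c\ll n$. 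Since $\sum_c Y^c=tn$ and each $Y^c\le n$, the vector $(Y^{c_1},\dots,Y^{c_m})$ lies in the convex hull of the points having exactly $t$ coordinates equal to $n$ and the rest $0$; the function $x\mapsto e^{-\lambda/x}$ is convex on $[0,\lambda/2]$, so the sum of the $m$ failure probabilities is maximized at such an extreme point and is therefore at most $2t\,e^{-\eps^{2}s/12}$ (this is \Cref{clm:tApprovalExp}). Taking $s=\frac{24}{\eps^2}\ln\frac{2t}{\delta}$ then yields failure probability $\delta$ with only a $\log(2t)$ dependence. Without this (or an equivalent) argument the third bound is unproved, and a naive uniform union bound would give $\log m$ in place of $\log(2t)$.
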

\begin{proof}

For the first protocol,
we reduce $t$-Approval to Plurality,
as follows, and as depicted in \Cref{figure:reduction-t-approval}.
Each site,
upon receiving a voter $v$ which approves $t$ candidates,
instead of considering the voter $v$,
creates and considers $t$ voters, $v_1, \ldots v_t$,
such that voter $v_i$ (for $i \in [t]$) is set to approve the $i$th approved candidate of $v$.
For example,
a voter approving $\{a, b, d\}$ would be reduced to three voters,
approving $a$, $b$, and $d$, respectively.
\begin{figure}
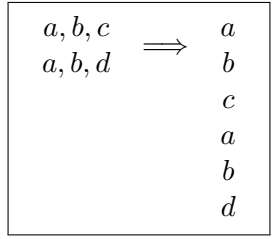

	\centering
	\fbox{
		\begin{tabular}{ c c c } 
			$a, b, c$ & \multirow{2}{*}{$\Longrightarrow$} & $a$ \\
			$a, b, d$ &                                    & $b$ \\
			&                                    & $c$ \\
			&                                    & $a$ \\
			&                                    & $b$ \\
			&                                    & $d$
		\end{tabular}
		
	}
	\caption{Reducing $t$-Approval to Plurality, for $t = 3$. Notice that two $t$-Approval voters become six Plurality voters.}
	\label{figure:reduction-t-approval}
\end{figure}

The reduced election has $n' = nt$ voters, and will be executed with precision parameter $\epsilon' = \epsilon / 2t$.
Consider a candidate $c$ which is an $\epsilon'$-winner in the reduced election;
we argue that $c$ is an $\epsilon$-winner in the original election.
Indeed, we can add $\epsilon n$ voters,
each approving $c$,
while for each other candidate $c'$,
at most $\epsilon n / 2$ of them approve $c'$ (as $t \leq m / 2$);
thus, the relative score of $c$ 
increases by $\epsilon n / 2=\epsilon' n' $. As $c$ is $\epsilon'$-winner in the reduced election, this is sufficient. By \Cref{theorem:plurality1}, the communication used is $O((\epsilon'^{-1} \sqrt{k} + k) \log n'{\cdot\log m})=O((\epsilon^{-1} \sqrt{k} t + k) \log tn{\cdot\log m})$.

The second protocol is based on checkpoints. We describe the static protocol for computing an $\eps$-winner.
The center initiates communication with all sites,
asking from each site to send an approximate score for each candidate.
That is, each site, for each candidate $c$,
sends the number of voters approving $c$,
rounded to the closest multiplication of $\epsilon n / k$.
Such rounding is enough,
since, summing up the possible errors from all $k$ sites,
the center would have a value which is at most $\epsilon n/2$-far from the real score.
Thus, the candidate $c$ with the highest approximated score will indeed be an $\epsilon$-winner. 
Each site should communicate $\log(\frac{k}{\epsilon})$ bits per candidate.
Thus,
the total communication is bounded by
$k\lceil\frac{m\log\frac{k}{\epsilon}}{\log n}\rceil\le O(k(1+\frac{m\log\frac{k}{\epsilon}}{\log n}))$.
The bound follows.

For the third protocol, we will show that $s=\frac{24}{\eps^2}\ln\frac{2t}{\delta}$ sampled voters,
chosen uniformly at random (with repetitions),
are enough to determine an $\eps$-winner with failure probability at most $\delta$. 
As we can communicate each voter using $\log {m \choose t}$ bits, the bound follows.
Consider such a sample of $s$ voters,
and,
for a candidate $c$, let $X_i^c$ be an indicator for the event that the $i$-th sampled voter approved $c$.
Let $X^c=\frac ns\sum_{i=1}^{s}X_i^c$, and denote by $Y^c$ the actual number of voters that approved $c$ in the original election.
Set $\mu=\mathbb{E}\left[\sum_{i}X_{i}^{c}\right]=s\cdot\frac{Y^{c}}{n}$. Using the Chernoff bound (\Cref{thm:chernoff} in \Cref{section:preliminaries}),
we have that:
\begin{align*}
\Pr\left[\left|X^{c}-Y^{c}\right|\ge\frac{\epsilon}{2}n\right] & =\Pr\left[\left|\sum_{i}X_{i}^{c}-s\cdot\frac{Y^{c}}{n}\right|\ge\frac{\epsilon}{2}s\right]\\
& =\Pr\left[\left|\sum_{i}X_{i}^{c}-\mu\right|\ge\frac{\epsilon}{2}\frac{n}{Y^{c}}\mu\right]\\
& \le2\exp\left(-\left(\frac{\epsilon}{2}\frac{n}{Y^{c}}\right)^{2}\cdot\mu/3\right)\\
& =2\exp\left(-\frac{\epsilon^{2}}{12}\cdot\frac{ns}{Y^{c}}\right)~.
\end{align*}
By union bound, 
we have that:
$$\Pr\left[\exists c\text{ s.t. }\left|X^{c}-Y^{c}\right|\ge\frac{\epsilon}{2}n\right]\le2\sum_{c}\exp\left(-\frac{\epsilon^{2}}{12}\cdot\frac{ns}{Y^{c}}\right)\le 2t\cdot e^{-\frac{\epsilon^{2}s}{12}}\le\delta~,$$ 
where the second inequality follows from \Cref{clm:tApprovalExp} below,
by setting $\lambda=\frac{\eps^2\cdot ns}{12}$ and noting that $(Y^{c_1},\dots,Y^{c_m})$ lies in the convex hull of the set $A$ described there.
The center will return a candidate $c$ with maximal $X^c$. 
Correctness follows by the same arguments as in the frequency-count protocol.
\end{proof}

\begin{claim}\label{clm:tApprovalExp}
  Consider the set $\mathbb{N}^m$ of points with $m$ integer coordinates.
  Let $A\subset \mathbb{N}^m$ contain exactly those points in $\mathbb{N}^m$
  for which the value of exactly $t$ coordinates is $n$, while the value of all their other $m - t$ coordinates is $0$.
  Let $\lambda \ge 2n$.
  Then,
  for any arbitrary point $(x_1,\dots,x_m)$ in the convex hull of $A$,
  it holds that:
  $$\sum_{i=1}^{m}e^{-\frac{\lambda}{x_{i}}}\le t\cdot e^{-\frac{\lambda}{n}}.$$ 
\end{claim}

\begin{proof}
Consider the function $f(x)=e^{-\frac\lambda x}$ and notice that its second derivative is $$\left(f(x)\right)''=\left(e^{-\frac{\lambda}{x}}\right)''=\left(\frac{\lambda}{x^{2}}\cdot e^{-\frac{\lambda}{x}}\right)'=-\frac{2\lambda}{x^{3}}\cdot e^{-\frac{\lambda}{x}}+\frac{\lambda^{2}}{x^{4}}\cdot e^{-\frac{\lambda}{x}}=\frac{\lambda}{x^{3}}\cdot e^{-\frac{\lambda}{x}}\cdot\left(\frac{\lambda}{x}-2\right)~.$$ 
Hence, $f$ is convex in the domain $[0,n]\subseteq[0,\lambda/2]$.
Set $\hat{f}(x_1,\dots,x_n)=\sum_{i=1}^n f(x_i)$. As sum of convex functions is also convex, $\hat{f}$ is convex in the domain $[1,n]^n$, which in particular  contains the convex hull of $A$.
Since $\hat{f}$ is convex function, the maximum value in the convex hull achieved in a point of $A$. We conclude that:
\[
\sum_{i=1}^{m}e^{-\frac{\lambda}{x_{i}}}=\hat{f}\left(x_{1},\dots,x_{m}\right)\le\max_{(y_{1},\dots,y_{m})\in A}\hat{f}(y_{1},\dots,y_{m})=t\cdot e^{-\frac{\lambda}{n}}~.\qedhere
\]
\end{proof}

For Approval,
where the set of approved candidates of each voter can be arbitrary,
thus upper bounded by the number $m$ of candidate,
we proceed similarly to $t$-Approval.
Naturally,
we have $m$-factors instead of $t$-factors in our bounds.
(Specifically, in the first protocol the size of the reduced election is $n' = mn$
and in the second protocol we sample slightly more voters.)

\begin{theorem}\label{theorem:approbalNOt}
  There are three protocols for \textsc{Approval-winner-tracking}.   
  Respectively,
  the protocols use
  $O((\epsilon^{-1} \sqrt{k} m + k) \log mn{\cdot\log m})$,
  $O\left(\frac{k}{\epsilon}(m\log\frac{k}{\epsilon}+\log n)\right)$,
  and
  $O((\epsilon^{-2} \log m + k) (m + \log n))$     
  words of communication.
\end{theorem}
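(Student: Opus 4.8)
The plan is to prove \Cref{theorem:approbalNOt} by adapting each of the three protocols for $t$-Approval (\Cref{theorem:tapproval1}), replacing the bound $t$ on the number of approved candidates per voter by the trivial bound $m$. First I would handle the frequency-counting protocol: a vote in an Approval election approving the set $S \subseteq C$ is reduced, at the site, to $|S| \le m$ Plurality votes, one for each candidate in $S$. The reduced election has $n' \le mn$ voters, and running \Cref{theorem:plurality1} with precision parameter $\epsilon' = \epsilon/2$ gives an $\epsilon'$-winner of the reduced election. The argument that such a candidate is an $\epsilon$-winner of the original Approval election is even simpler than in the $t$-Approval case: we may add $\epsilon n$ new voters each approving only $c$, which raises $c$'s reduced score by $\epsilon n$ while raising no other candidate's reduced score at all, so in particular $c$'s score advantage improves by $\epsilon n \ge \epsilon' n'$. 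Plugging $n' = mn$ and $\epsilon' = \epsilon/2$ into the bound of \Cref{theorem:plurality1} yields $O((\epsilon^{-1}\sqrt{k}\,m + k)\log(mn)\cdot\log m)$ words.

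Second I would describe the checkpoints-based static protocol, which is identical to the one for $t$-Approval: each site, for each candidate $c$, sends its local approval count for $c$ rounded to the nearest multiple of $\epsilon n / (2k)$; summing the $k$ rounding errors the center obtains, for each candidate, a value within $\epsilon n/2$ of the true Approval score, so declaring the candidate with the highest rounded score gives an $\epsilon$-winner. Each site sends $O(\log(k/\epsilon))$ bits per candidate, hence $O(m\log(k/\epsilon))$ bits total, i.e. $O(1 + m\log(k/\epsilon)/\log n)$ words, for a total of $O\!\left(\frac{k}{\epsilon}(m\log\frac{k}{\epsilon}+\log n)\right)$ words once wrapped in the checkpoints framework of \Cref{section:protocols-based-on-checkpoints}; note this bound is literally unchanged from the $t$-Approval case because the per-candidate cost does not depend on $t$.

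Third, for the sampling-based protocol, I would argue that $s = \Theta(\epsilon^{-2}\log m)$ voters sampled uniformly at random with repetitions suffice to identify an $\epsilon$-winner with constant probability, mirroring the $t$-Approval proof but using the union bound over all $m$ candidates rather than just the $t$ approved ones. For each candidate $c$ let $X_i^c$ indicate that the $i$-th sampled voter approves $c$, set $X^c = \frac{n}{s}\sum_i X_i^c$ and let $Y^c$ be the true number of approvals of $c$; the same Chernoff computation as before gives $\Pr[|X^c - Y^c| \ge \frac{\epsilon}{2}n] \le 2\exp(-\frac{\epsilon^2}{12}\cdot\frac{ns}{Y^c})$. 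Since now we cannot assume only $t$ candidates have positive score, I would bound the sum $\sum_c \exp(-\frac{\epsilon^2}{12}\cdot\frac{ns}{Y^c})$ using the worst case $Y^c \le n$ for every $c$, obtaining $\sum_c \exp(-\frac{\epsilon^2 s}{12}) \le m\exp(-\frac{\epsilon^2 s}{12}) \le \delta$ for $s = \Theta(\epsilon^{-2}\log(m/\delta))$; the center returns the candidate maximizing $X^c$, and correctness follows exactly as in the frequency-count protocol. Since each Approval vote is communicated with $m$ bits, i.e. $O(1 + m/\log n)$ words, and the sampling protocol of Cormode et al.~\cite{cormode2012continuous} costs $O((k+s)\log n)$ words overhead, the total is $O((\epsilon^{-2}\log m + k)(m + \log n))$ words.

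The routine parts are the parameter bookkeeping; the only place deserving a moment of care is the sampling analysis, where the $t$-Approval proof invoked \Cref{clm:tApprovalExp} to bound the relevant exponential sum tightly, whereas here the cruder bound $Y^c \le n$ for all $m$ candidates is both sufficient and cleaner, so no analogue of that claim is needed. I expect no genuine obstacle: the Approval case is strictly a simpler specialization (every voter automatically satisfies $|v_i| \le m$), and the $t \le m/2$ subtlety that forced the convex-hull argument in \Cref{theorem:tapproval1} simply does not arise.
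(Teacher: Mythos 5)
Your overall route is exactly the paper's: the paper proves \Cref{theorem:approbalNOt} by a one-line reduction to the $t$-Approval argument of \Cref{theorem:tapproval1} with $m$ in place of $t$, and your checkpoints and sampling protocols are correct instantiations of that (the crude bound $Y^c\le n$ in place of \Cref{clm:tApprovalExp} is indeed all that is needed, since the union bound is over $m$ candidates either way).

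There is, however, a concrete error in your first protocol: the precision parameter $\epsilon'=\epsilon/2$ does not work. The \textsc{Frequency-tracking} guarantee is additive in the length of the stream it runs on, which here is the \emph{reduced} stream of $n'\le mn$ items; with $\epsilon'=\epsilon/2$ each candidate's Approval score is only estimated to within $\epsilon' n'=\epsilon mn/2$, while the budget of added voters buys you only an $\epsilon n$ increase in the winner's score. Your key inequality ``$\epsilon n\ge\epsilon' n'$'' reads $\epsilon n\ge \epsilon mn/2$ and is false for $m>2$. The $t$-Approval proof you are mirroring takes $\epsilon'=\epsilon/(2t)$ precisely to compensate for the blow-up $n'=tn$; here you must likewise take $\epsilon'=\Theta(\epsilon/m)$, after which correctness goes through (and your claim that no other candidate's score increases means $\epsilon'=\epsilon/m$ suffices, without the extra factor $2$ that $t$-Approval needs). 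Note that your own final computation betrays the slip: plugging $\epsilon'=\epsilon/2$ into \Cref{theorem:plurality1} would give $O((\epsilon^{-1}\sqrt{k}+k)\log(mn)\cdot\log m)$ \emph{without} the factor $m$, whereas the bound you state, $O((\epsilon^{-1}\sqrt{k}\,m+k)\log(mn)\cdot\log m)$, is the one obtained from $\epsilon'=\Theta(\epsilon/m)$. So the fix is purely local, but as written the correctness argument for the frequency-based protocol fails.
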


We go on to consider ordinal elections.
Specifically,
next we consider the Borda rule,
for which we describe three protocols.

\begin{theorem}\label{theorem:borda1}
	There are three protocols for \textsc{Borda-winner-tracking}.  
	Respectively, the protocols use
	$O((\epsilon^{-1} \sqrt{k} m + k) \log mn{\cdot\log m})$,
	$O(\epsilon^{-1} k (m \log(k / \epsilon) + \log n))$, and 
	$O((\epsilon^{-2} \log m + k)$ $(m \log m + \log n))$ words of communication.
\end{theorem}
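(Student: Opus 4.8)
The three protocols share a single combinatorial fact about Borda, which I would isolate first. Write $\score(c)=\sum_{v}\Sc_v(c)$ with $\Sc_v(c)=m-\pos_v(c)\in\{0,\dots,m-1\}$. Claim: if a candidate $c$ has $\score(c)\ge\score(c')-\beta m n$ for every competitor $c'$, then $c$ is an $O(\beta)$-winner under Borda (we may assume $\beta n\ge 1$, since otherwise an $\eps$-winner is just a winner). To see this, append $q=2\lceil\beta n\rceil$ voters in mirror-image pairs: each pair ranks $c$ first and fills the remaining $m-1$ positions with a permutation of the other candidates in one vote and its reverse in the other. Then every competitor gains exactly $m-2$ points per pair, i.e.\ $\tfrac{m-2}{2}$ per appended vote, while $c$ gains $m-1$ per appended vote; hence $c$'s margin over each $c'$ improves by $\left(m-1-\tfrac{m-2}{2}\right)q=\tfrac{m}{2}q\ge\beta m n$, which closes the gap, so $c\in\calR(E')$. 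Consequently, for all three techniques it suffices to let the center estimate every Borda score up to an additive $\Theta(\eps m n)$ and then output the candidate with the largest estimate.

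\emph{Counting frequencies.} I would reduce Borda to Plurality as in \Cref{theorem:tapproval1}, but with a multiplicity so that the Borda score becomes an honest frequency: each site replaces an incoming voter $c_{(1)}\succ\cdots\succ c_{(m)}$ by $\binom{m}{2}$ Plurality votes, casting $m-j$ votes for $c_{(j)}$. The reduced stream has $n'=n\binom{m}{2}=\Theta(nm^{2})$ votes over the same $m$ candidates, and the Plurality score of $c$ in it equals $\score(c)$. Running the protocol of \Cref{theorem:plurality1} on this reduced stream with precision $\eps'=\Theta(\eps/m)$, its internal frequency estimates recover each $\score(c)$ to within $\eps'n'=\Theta(\eps m n)$, so by the claim above its output is an $\eps$-Borda-winner; the cost is $O((\eps'^{-1}\sqrt{k}+k)\log n'\cdot\log m)=O((\eps^{-1}\sqrt{k}m+k)\log mn\cdot\log m)$ words. (Tracking the $m^{2}$ pairs $(c,\pos_v(c))$ directly would cost an extra factor $m$, since the Borda score would then be a weighted sum of frequencies rather than a single one; this is the point of the $\binom{m}{2}$-expansion.)

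\emph{Checkpoints.} Following \Cref{section:protocols-based-on-checkpoints}, I only need the static subprotocol. At a checkpoint the center knows $n'\in n\pm\lambda n$ from the \textsc{Count-tracking} protocol; it broadcasts $n'$ and asks each site to report, for each candidate, its local Borda score rounded to the nearest multiple of $g=\eps m n'/k$. Since a local score is at most $n(m-1)$, each rounded value is one of $O(k/\eps)$ possibilities, so a site sends $O\!\left(m\log(k/\eps)\right)$ bits. Summing the $k$ reports, the center's estimate of each $\score(c)$ is off by at most $kg/2=\Theta(\eps m n)$, so its argmax is an $\eps$-winner by the claim. Plugging the static cost $O\!\left(k+km\log(k/\eps)/\log n\right)$ words and $O(\eps^{-1}\log n)$ checkpoints into the framework of \Cref{section:protocols-based-on-checkpoints}, together with its $O(\eps^{-1}k\log n)$-word \textsc{Count-tracking} overhead, gives $O\!\left(\tfrac{k}{\eps}\left(m\log\tfrac{k}{\eps}+\log n\right)\right)$ words.

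\emph{Sampling.} I would invoke \Cref{lem:BasicSampler}: draw $s$ voters uniformly with repetition and, for each candidate $c$, average $\Sc_{v_i}(c)/(m-1)\in[0,1]$, whose expectation is $\score(c)/\bigl(n(m-1)\bigr)$. Taking $\gamma=\Theta(\eps)$ and confidence $\delta/m$ per candidate, \Cref{lem:BasicSampler} gives $s=O(\eps^{-2}\log(m/\delta))$ samples such that, with probability $1-\delta$, every $\score(c)$ is recovered to within $\gamma(m-1)n=\Theta(\eps m n)$; the center returns the argmax, an $O(\eps)$-winner. Since transmitting an ordinal ballot costs $\lceil\log m!/\log n\rceil=O(1+m\log m/\log n)$ words, feeding $s$ into the sampling protocol of \Cref{section:protocols-based-on-sampling} yields $O((\eps^{-2}\log m+k)(m\log m+\log n))$ words. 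I expect the only non-routine step to be the claim of the first paragraph — in particular getting the additive slack right (it is $\Theta(\eps m n)$, not $\Theta(\eps n)$) and exhibiting the mirror-pair witness that caps each competitor's gain at $\tfrac{m-2}{2}$ per added vote; given that, the three bounds are bookkeeping on top of \Cref{theorem:plurality1}, \Cref{lem:BasicSampler}, and the checkpoint framework.
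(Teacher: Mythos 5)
Your proposal is correct and follows essentially the same route as the paper's proof: the same mirror-pair argument showing that $\epsilon n$ added voters shift $c$'s relative Borda score by $\epsilon n m/2$, the same expansion of each ordinal ballot into $\binom{m}{2}$ Plurality votes with precision $\Theta(\epsilon/m)$, the same per-site rounding to multiples of $\Theta(\epsilon m n/k)$ in the checkpoint subprotocol, and the same normalization of per-voter scores into $[0,1]$ for \Cref{lem:BasicSampler} with $s=O(\epsilon^{-2}\log(m/\delta))$ samples. The only differences are cosmetic (e.g., dividing by $m-1$ rather than $m$ in the sampling estimator, and stating the additive-slack claim as a standalone lemma rather than inline), so nothing further is needed.
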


\begin{proof}
We start by discussing the impact of adding voters for elections whose results are decided by Borda.
For an arbitrary candidate $c$,
consider two voters where one voter is ranking $c$ first and then ranks the other candidates in an arbitrary order,
and another voter is ranking $c$ first and then ranks the other candidates in reverse order.
Adding these two voters causes an increase to the score of $c$ by $2(m - 1)$ while the score of all other candidates increases by $m - 2$.
Thus,
by adding $\epsilon n$ voters,
we can increase the relative score of $c$ by $\epsilon nm/2$.

The first protocol is based on reducing Borda to Plurality, similarly to the first protocol stated in \Cref{theorem:tapproval1}.
Specifically,
we begin by reducing Borda to Plurality,
as follows,
and as depicted in \Cref{figure:reduction-borda}:
  Each site,
upon receiving a voter $v$ with preference order $c_1 \pref \ldots \pref c_m$,
instead of considering the voter $v$,
creates and considers $\sum_{j \in [m]} m - j< m^2$ voters,
such that for $j \in [m]$, it creates $m - j$ voters,
each approving $c_j$.
For example,
a voter $v : a \pref b \pref d$ would be transformed into three voters,
approving $a$, $a$, $b$, respectively.

\begin{figure}
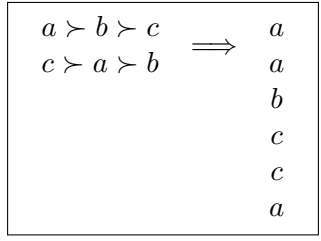

\centering
\fbox{
  \begin{tabular}{ c c c }
    $a \pref b \pref c$ & \multirow{2}{*}{$\Longrightarrow$} & $a$ \\
    $c \pref a \pref b$ &                                    & $a$ \\
                        &                                    & $b$ \\
                        &                                    & $c$ \\
                        &                                    & $c$ \\
                        &                                    & $a$
  \end{tabular}
}
\caption{Reducing Borda to Plurality. Notice that two Borda voters become six Plurality voters.}
\label{figure:reduction-borda}
\end{figure}

In the reduced election
we have $n' <m^2 n$ voters, where $n$ is the number of voters in the original election.
We use the protocol for Plurality described in \Cref{theorem:plurality1}
with $\epsilon' = \epsilon / (4m)$.
Let us denote the real frequency of a candidate $c$ in the reduced election by $f(c)$
and its computed approximate frequency by $f'(c)$.
The error is bounded by $\left|f'(c)-f(c)\right|\le {\epsilon'n'}<\frac{\epsilon}{4m}\cdot {nm^{2}}=\frac{\epsilon nm}{4}$.
Since by adding $\epsilon n$ voters we can increase the relative score of the chosen candidate by $\epsilon n m / 2$, we are done.

The second protocol is based on checkpoints,
thus below we describe the static subprotocol used in each checkpoint.
Similarly  to the second protocol in \Cref{theorem:tapproval1}, each site sends an approximation of the Borda score of each candidate rounded to the closest multiplication of $\epsilon nm / k$. Hence the subprotocol uses $O(k(1 + ({m\log\frac{k}{\epsilon}}) / ({\log n}) ))$ words, while the combined error for the Borda score estimation of each candidate is  $\epsilon nm / 2$.

For the third protocol, we will show that $s=O(\eps^{-2}\log\frac{m}{\delta})$ sampled voters, chosen uniformly at random (with repetitions), are enough to determine an $\eps$-winner with failure probability at most $\delta$. 
As we can communicate each voter using $\log (m!)$ bits, the bound follows.
For a candidate $c$, let $X_i^c=\frac{\alpha_i}{m}$, where $\alpha_i$ is the score that candidate $c$ gets from the $i$'s sampled voter.
Let $X^c=\frac{n\cdot m}s\sum_{i=1}^{s}X_i^c$, and denote by $Y^c$ the score of the candidate $c$ in the election.
Set $\mu=\mathbb{E}\left[\frac{1}{s}\sum_{i}X_{i}^{c}\right]=\frac{1}{n\cdot m}Y^{c}$. Using \Cref{lem:BasicSampler} we have that
\[
\Pr\left[\left|X^{c}-Y^{c}\right|\ge\frac{\eps}{4}\cdot n\cdot m\right]=\Pr\left[\left|\frac{1}{s}\sum_{i}X_{i}^{c}-\mu\right|\ge\frac{\eps}{4}\right]\le\frac{\delta}{m}~,
\]
and hence by union bound it follows that $\Pr\left[\exists c\text{ s.t. }\left|X^{c}-Y^{c}\right|\ge\frac{\eps}{4}\cdot n\cdot m\right]\le\delta$.
The center will return a candidate $c$ with maximal $X^c$.
The accuracy of the protocol follows from arguments given in the analysis of the frequency-count protocol.
\end{proof}

\begin{remark}
Results for other scoring rules, at least those corresponding to scoring vectors whose values are polynomially bounded,
can be achieved by similar techniques.
As the corresponding reductions to plurality are quite technical and do not provide new insights to the problem,
we do not consider them here.
\end{remark}

\subsection{Tournament-Based Rules}

In this section we consider Condorcet winners and the Copeland voting rule.
The rules we consider below are built upon the tournament defined over the election by considering head-to-head contests between all pairs of candidates.
The first protocol for Copeland proceeds by approximating,
for each pair of candidates $c_1$ and $c_2$, the number of voters preferring $c_1$ to $c_2$.
Having these approximate counts,
we will be able to identify an $\epsilon$-winner under Copeland.
If there is a candidate $c$ which is preferred to all other candidates,
then the center shall declare $c$ as the Condorcet winner.

\begin{theorem}\label{theorem:copeland1}
  There are three protocols for \textsc{Copeland-winner-tracking}.
  Respectively,
  the protocols use
  $O((\epsilon^{-1} \sqrt{k} m^2 + k) \log m n{\cdot\log m})$,
  $O\left(\frac{k}{\epsilon}(m^{2}\log\frac{k}{\epsilon}+\log n)\right)$,
  and
  $O((\epsilon^{-2} \log m + k) (m \log m + \log n))$ words.
\end{theorem}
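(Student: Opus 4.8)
The plan is to treat all three protocols through a single reduction: it suffices to maintain, for every \emph{ordered} pair of candidates $(c_i,c_j)$, an estimate $\hat N(c_i,c_j)$ of the number $N(c_i,c_j)$ of voters preferring $c_i$ to $c_j$, with additive error at most $\eps n/4$ simultaneously over all pairs; the three protocols are then three ways of producing such estimates, exactly paralleling the frameworks of \Cref{section:protocols-based-on-counting-frequencies,section:protocols-based-on-checkpoints,section:protocols-based-on-sampling}.

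First I would prove that accurate pairwise estimates suffice to name an $\eps$-Copeland-winner. Given estimates with error $\le\eps n/4$, define the approximate Copeland score of $c$ by $\tilde s(c)=|\{c'\neq c:\hat N(c,c')\ge\hat N(c',c)-\eps n/4\}|$, and let the center declare a maximizer $\hat c$ of $\tilde s$. To see $\hat c$ is an $\eps$-winner, I would add $\eps n$ new voters, of which half rank $\hat c$ first followed by a fixed order $\sigma$ on $C\setminus\{\hat c\}$ and half rank $\hat c$ first followed by the reverse order $\bar\sigma$. Since $N(c,c')+N(c',c)=n$ for every pair, each $c'$ counted in $\tilde s(\hat c)$ has $N(\hat c,c')\ge n/2-\eps n/2$, and after the addition $\hat c$'s count against $c'$ is $N(\hat c,c')+\eps n$ while the new majority threshold is $(1+\eps)n/2$, so $\hat c$ beats every such $c'$; hence $\hat c$'s augmented Copeland score is at least $\tilde s(\hat c)$. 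Conversely, the ``balanced tail'' block adds equally to $N(d,e)$ and $N(e,d)$ for all $d,e\neq\hat c$, exactly matching the threshold shift, so no pairwise majority among $C\setminus\{\hat c\}$ changes, and the block only strengthens $\hat c$; therefore every $d\neq\hat c$ has augmented Copeland score at most its original score $s(d)$. Finally, for a true Copeland winner $c^*$, every pair contributing to $s(c^*)$ still contributes to $\tilde s(c^*)$ (a strict original majority survives the $\eps n/4$-slack comparison under an $\eps n/4$ estimation error), so $\tilde s(\hat c)\ge\tilde s(c^*)\ge s(c^*)\ge s(d)$ for all $d$, and $\hat c$ has the maximum Copeland score after the addition. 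Ties are absorbed by the favorable tie-breaking convention at the cost of shrinking constants. (If additionally $\tilde s(\hat c)=m-1$, the same $\hat c$ may be declared a Condorcet winner, which is how this construction also handles Condorcet.)

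Next I would instantiate the estimator three ways. \emph{Frequencies:} each site replaces each incoming ordinal ballot by $\binom m2$ items, one per unordered pair $\{i,j\}$, of type $(i,j)$ if the voter prefers $c_i$ to $c_j$ and $(j,i)$ otherwise; the reduced stream has $n'=\binom m2 n$ items over $m(m-1)$ types with $f((i,j))=N(c_i,c_j)$. Running the randomized \textsc{Frequency-tracking} protocol of~\cite{huang2012randomized} with precision $\eps'=\Theta(\eps/m^2)$ so that $\eps'n'\le\eps n/4$, and failure probability $1/\poly(m)$, gives the desired estimates and uses $O((\eps'^{-1}\sqrt k+k)\log n'\cdot\log m)=O((\eps^{-1}\sqrt k m^2+k)\log mn\cdot\log m)$ words. \emph{Checkpoints:} the static subprotocol has every site report, for each of the $m(m-1)$ ordered pairs, its own count rounded to the nearest multiple of $\eps n/(2k)$, so the summed error per pair is $\le\eps n/2$ (rescale constants); the center then computes $\tilde s$ and declares $\hat c$. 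This costs $O(k(1+\tfrac{m^2\log(k/\eps)}{\log n}))$ words per checkpoint, and multiplying by the $O(\eps^{-1}\log n)$ checkpoints of the framework in \Cref{section:protocols-based-on-checkpoints} (replacing $\eps$ by $\eps/4$) yields $O(\tfrac k\eps(m^2\log\tfrac k\eps+\log n))$. \emph{Sampling:} using the uniform-sample protocol of~\cite{cormode2012continuous} we maintain $s=O(\eps^{-2}\log m)$ uniformly sampled voters; for each ordered pair the estimate is $n/s$ times the number of sampled voters preferring $c_i$ to $c_j$. By \Cref{lem:BasicSampler} with confidence $\delta/m^2$ and a union bound over the $\binom m2$ pairs, all estimates are within $\eps n/4$ of the truth with constant probability, and the communication is the $O((k+s)\log n)$ of the sample protocol plus $O(m\log m)$ bits per sampled ballot, i.e., $O((\eps^{-2}\log m+k)(m\log m+\log n))$.

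\textbf{The main difficulty} is the correctness reduction, specifically choosing the $\eps n$ added voters so that $\hat c$'s Copeland score provably climbs to the maximum while no competitor overtakes it: the balanced-tail trick neutralizes the effect on all pairs avoiding $\hat c$, and the $\eps n/4$-slack built into $\tilde s$ reconciles the one-sided estimation error with the need to count the true winner's wins, but the strict-versus-weak inequality bookkeeping at the $n/2$ boundary must be carried out carefully (handled via the favorable tie-breaking assumption and a constant-factor loss). Everything else is a routine instantiation of the three algorithmic frameworks.
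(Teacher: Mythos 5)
Your proposal is correct and follows essentially the same route as the paper's proof: the same reduction of each ballot to $\binom{m}{2}$ pairwise items for the frequency protocol, the same per-site rounding to multiples of $\eps n/(2k)$ at checkpoints, the same $O(\eps^{-2}\log m)$-sample union bound over pairs, and the same correctness argument that adds $\eps n$ voters ranking $\hat c$ first with the tail split between an order and its reverse so that no other candidate gains a head-to-head win. The only deviation is cosmetic (you compare $\hat N(c,c')$ against $\hat N(c',c)$ with an additive slack rather than against the threshold $n/2-\eps n/2$ as the paper does, and your $\eps n/4$ slack is slightly too tight against two $\eps n/4$ estimation errors), but this is exactly the constant-factor rescaling you already flag and it does not affect the asymptotic bounds.
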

  
\begin{proof}
For the first protocol,
we reduce each voter, corresponding to a total order over the candidates,
to $O(m^2)$ items;
specifically,
the reduced distributed stream will contain items of $O(m^2)$ item types,
where for each pair of candidates $c_1$ and $c_2$ we have a different type, denoted by $(c_1, c_2)$.
The reduction proceeds as follows.
Each site,
upon receiving a voter $v$ which specifies a linear order,
instead of considering the voter $v$,
creates and considers ${m \choose 2}$ items,
such that if $v$ prefers $c_1$ to $c_2$,
then we create an item $(c_1, c_2)$ (notice that this is an ordered tuple).
The reduction is depicted in \Cref{figure:reduction-copeland}.
For example,
a voter $v : a \pref b \pref d$ would be transformed into three items,
$(a, b)$, $(a, d)$, and $(b, d)$.

\begin{figure}
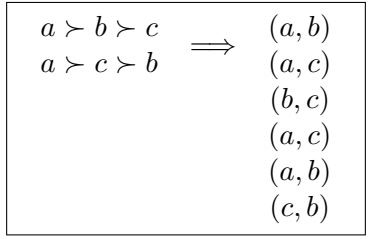

\centering
\fbox{
  \begin{tabular}{ c c c }
   $a \pref b \pref c$ & \multirow{2}{*}{$\Longrightarrow$} & $(a, b)$ \\
    $a \pref c \pref b$ &                                    & $(a, c)$ \\
                        &                                    & $(b, c)$ \\
                        &                                    & $(a, c)$ \\
                        &                                    & $(a, b)$ \\
                        &                                    & $(c, b)$
  \end{tabular}
}
\caption{Reducing a linear order to frequencies.}
\label{figure:reduction-copeland}
\end{figure}

The reduced distributed stream has $n' = {m\choose 2}\cdot n$ items
and $O(m^2)$ types of items.
For two candidates $c_1$ and $c_2$,
let $N(c_1, c_2)$ denote the number of voters preferring $c_1$ to $c_2$.
Now we can use a protocol based on counting frequencies (see \Cref{section:protocols-based-on-counting-frequencies}),
with $\epsilon' = \epsilon / m^2$,
to let the center maintain, for each pair of candidates $c_1$ and $c_2$,
a value $N'(c_1, c_2)$ such that $N'(c_1, c_2) \in N(c_1, c_2) \pm \epsilon' n' \subseteq N(c_1, c_2) \pm \epsilon n / 2$.

Let $\Sc'(c,E)$ be the number of candidates $c'$ such that $N'(c, c') \geq n / 2 - \epsilon n / 2$ in the election $E$.
We denote by $\Sc(c,E)$ the (real) Copeland score of candidate $c$ in election $E$. The center declares as an $\epsilon$-winner a candidate $\hat{c}$ with the highest value of $\Sc'(\hat{c},E)$.
Note that, for every candidate $c'$,
it holds that $\Sc(c',E)\le \Sc'(c',E)$;
this is so since the error in the computed frequency is bounded by $\eps n/2$, while for the declared winner $\hat{c}$,
it holds that there are at least $\Sc'(\hat{c},E)$ candidates $c'$ such that $N'(\hat{c}, c')\ge n/2- \epsilon n$.

Next we argue that $\hat{c}$ is indeed an $\epsilon$-winner.
We add $\epsilon n / 2$ voters which rank $\hat{c}$ on top and then the other candidates in arbitrary order,
and another $\epsilon n / 2$ voters which rank $\hat{c}$ on top and then the other candidates in reverse order.
Denote the modified election, with these additional voters, by $E'$.
Then,
for every $c'$, $N(\hat{c}, c')$ increased by $\eps n$; thus, $\Sc(\hat{c},E')\ge \Sc'(\hat{c},E)$.
Moreover, the number of wins of any other candidate $c'$ does not increase.
Hence  $\Sc(c',E')\le \Sc'(c',E)\le \Sc'(\hat{c},E)\le\Sc(\hat{c},E')$.

The communication complexity follows by the discussion given in \Cref{section:protocols-based-on-counting-frequencies};
specifically, it is $O((\epsilon'^{-1} \sqrt{k} + k) \log n'{\cdot\log m})=O((\frac{m^2}{\eps} \sqrt{k} + k) \log (nm){\cdot\log m})$.

The second protocol is based on checkpoints, and thus below we describe the static subprotocol used in each checkpoint. 
For every pair of candidates, $c_1$ and $c_2$,
every site sends, to the center, the number of voters preferring $c_1$ over $c_2$, rounded to the closest multiplication of $\eps n/2k$.
In each checkpoint, a candidate achieving estimated score higher that $\frac{n}{2}-\frac{\epsilon n}{2}$ for the maximal number of times 
(that is, for the largest number of other candidates) is declared a winner. As the error in each head-to head contest is upper-bounded by $k \cdot \frac{\eps n}{2k}=\frac{\eps n}{2}$, correctness follows by similar lines as given above in the proof of the frequency-count protocol. 
As there are $m^2$ quantities to estimate, 
each site sends $O\left(1+\frac{m^{2}\log\frac{k}{\epsilon}}{\log n}\right)$ words.
The total communication follows.

For the third protocol, we will show that $s=O(\eps^{-2}\log\frac{m}{\delta})$ sampled voters, chosen uniformly at random (with repetitions), are enough to determine an $\eps$-winner with failure probability at most $\delta$. 
As we can communicate each voter using $\log (m!)$ bits, the bound follows.
For two candidates $c,c'$, let $X_i^{(c,c')}$ be an indicator for the event that the $i$'s sampled voter prefers $c$ over $c'$. Let $N'(c,c')=\frac{n}{s}\sum_{i=1}^{s}X_i^{(c,c')}$, and denote by $N(c,c')$ the actual number of voters preferring $c$ over $c'$ in the original election.
Set $\mu=\mathbb{E}\left[\frac{1}{s}\sum_{i}X_{i}^{(c,c')}\right]=\frac{1}{n}N(c,c')$. Using \Cref{lem:BasicSampler} it follows that
\[
\Pr\left[\left|N'(c,c')-N(c,c')\right|\ge\frac\eps2\cdot n\right]=\Pr\left[\left|\frac{1}{s}\sum_{i}X_{i}^{(c,c')}-\mu\right|\ge\frac\eps2\right]\le\frac{\delta}{m^{2}}.
\]
By union bound,
with probability at least $1-\delta$,
for every pair of candidates we have that
$$\left|N'(c,c')-N(c,c')\right|<\frac\eps2\cdot n.$$ 
Let $\Sc'(c,E)$ be the number of candidates $c'$ such that $N'(c, c') \geq n / 2 - \epsilon n / 2$ in the election $E$.
The center declares as an $\epsilon$-winner a candidate $c$ with the highest value of $\Sc'(c,E)$. The accuracy of the protocol follows from arguments given in the analysis of the frequency-count protocol.
\end{proof}

We go on to consider the Cup rule, which differs from \textsc{Copeland} in several aspects. 
The first aspect is that, in order to prove that some estimated candidate $c$ is indeed an $\eps$-winner,
it is not enough to add $c$ arbitrary voters ranking $c$ last, but rather a more subtle construction of voters is needed.
The second aspect is that,
intuitively, while in Copeland we had to send communication regarding all pairs of candidates,
in Cup it is enough to send communication only regarding some pairs of candidates,
as given by the binary tree corresponding to the ``head-to-head'' contests performed for finding the winner under Cup.

\begin{theorem}\label{theorem:cup}
	There are three protocols for \textsc{Cup}.
	Respectively,
	the protocols use
	$O((\epsilon^{-1} \sqrt{k} m^2 + k) \log m n{\cdot\log m})$,
	$O\left(\frac{k}{\epsilon}(m\log\frac{k}{\epsilon}+\log m\cdot\log n)\right)$,
	and
	$O((\epsilon^{-2} \log m + k) (m \log m + \log n))$ words.
\end{theorem}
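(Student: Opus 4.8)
The plan is to follow the three-pronged template already established for \textsc{Copeland}, adapting each protocol to the structure of the Cup tree $T$. For all three protocols the center will need, for every pair of candidates $(c_1,c_2)$ that actually meets somewhere in $T$ (there are only $m-1$ such pairs along the tree, but it is cleaner to track all ${m\choose 2}$ of them, or at least those that can arise after adding voters), an approximation $N'(c_1,c_2)$ of $N(c_1,c_2)$ with additive error at most $\tfrac{\epsilon}{2}n$. The frequency-based protocol reduces each linear order to the ${m\choose 2}$ ordered pairs exactly as in \Cref{figure:reduction-copeland}, runs the randomized \textsc{Frequency-tracking} protocol with $\epsilon'=\epsilon/m^2$, and thus spends $O((\epsilon^{-1}\sqrt k\, m^2+k)\log mn\cdot\log m)$ words. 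The checkpoint protocol, at each checkpoint, asks each site for the count $N(c_1,c_2)$ for every pair rounded to the nearest multiple of $\tfrac{\epsilon n}{2k}$; since only the $O(m)$ pairs along $T$ matter but the tree may change, we still send $O(m)$ pairs — here the bound in the statement, $O\!\left(\tfrac{k}{\epsilon}(m\log\tfrac{k}{\epsilon}+\log m\cdot\log n)\right)$, suggests that only $O(m)$ quantities (one per tree edge, with the $\log m\cdot\log n$ term absorbing the $O(\log n)$ words from the \textsc{Count-tracking} subprotocol repeated $O(\log n/\epsilon)$ times over $O(\log m)$ levels) are transmitted, so I would argue that it suffices to send the pairwise counts only for the matchups dictated by $T$, which are determined once the candidate set is fixed. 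The sampling protocol draws $s=O(\epsilon^{-2}\log(m/\delta))$ voters, and by \Cref{lem:BasicSampler} plus a union bound over the $O(m^2)$ pairs, every $N'(c,c')$ is within $\tfrac{\epsilon}{2}n$ of $N(c,c')$ with probability $\ge 1-\delta$; each sampled linear order costs $\log m!$ bits, giving the stated $O((\epsilon^{-2}\log m+k)(m\log m+\log n))$ words.

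The genuinely new part, and the main obstacle, is the correctness argument: given the approximate tournament, how does the center pick a candidate $c$ and certify it is an $\epsilon$-winner of Cup? I would run a ``robust'' simulation of the Cup procedure bottom-up on $T$: at a node whose two children resolved to $a$ and $b$, declare $a$ the winner of that node if $N'(a,b)\ge n/2$ (breaking ties favorably), and record at each node not just the surviving candidate but the set of ``borderline'' matchups encountered along its winning path, i.e.\ those where $|N'(\cdot,\cdot)-n/2|\le \epsilon n$ (so the true outcome might differ). Let $c$ be the candidate the root resolves to under this simulation. To show $c$ is an $\epsilon$-winner I need to exhibit $\le \epsilon n$ additional voters after whose insertion $c$ genuinely wins Cup. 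The subtlety the theorem statement flags is that ranking $c$ last in the new voters is useless; instead, along the unique root-to-leaf path of $c$ in $T$, list the opponents $o_1,\dots,o_\ell$ ($\ell\le\log m$) that $c$ must beat, and add $\epsilon n$ voters whose ballots place all of $o_1,\dots,o_\ell$ at the bottom in \emph{some} order — but a single ballot can only rank them in one order, so to boost $N(c,o_j)$ for \emph{every} $j$ simultaneously, every new ballot should rank $c$ above all of $o_1,\dots,o_\ell$ (their relative order among themselves is irrelevant). Adding $\epsilon n$ such ballots increases each $N(c,o_j)$ by $\epsilon n$, which by our choice of error bound $\epsilon n/2$ flips every borderline matchup on $c$'s path in $c$'s favor; the matchups that were already non-borderline were correctly decided and remain so. Hence $c$ wins every contest on its path and is the Cup winner of the augmented election, so $c$ is an $\epsilon$-winner. (As usual, the asymptotic analysis lets us run the protocols with $\epsilon$ replaced by a constant fraction of $\epsilon$ so that ``$\epsilon n/2$ error'' and ``add $\epsilon n$ voters'' are compatible.)

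One point I would double-check in writing this up: whether the opponents $o_1,\dots,o_\ell$ are themselves determined by the \emph{approximate} tournament (since the identity of $c$'s opponent at level $i$ depends on who won the sibling subtree), and whether a discrepancy between the approximate and true sibling-winner can change who $c$ faces. The clean way around this is to observe that, in the augmented election, we do not care which candidate emerges from the sibling subtree: the new $\epsilon n$ ballots rank $c$ above \emph{every} other candidate, so $N(c,x)\ge n/2$ for \emph{all} $x$ with $x\ne c$ whose original margin against $c$ was within $\epsilon n$ of $n/2$; taking the new voters to rank $c$ first outright makes $c$ beat everyone it could possibly meet, which certainly makes $c$ the Cup winner regardless of the bracket's internal resolution. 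This also matches why only $O(m)$ (indeed the ${m\choose2}$ or the $m-1$ tree) pairwise quantities, rather than anything larger, appear in the bounds. With that observation the three communication bounds drop out exactly as in \Cref{theorem:copeland1}, with the checkpoint bound carrying the extra $\log m$ factor on the $\log n$ term to account for the $O(\log m)$ levels of the Cup tree that must be certified.
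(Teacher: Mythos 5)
Your three communication bounds and the protocol structures (reduce to all $\binom{m}{2}$ ordered pairs for the frequency and sampling protocols; query only the $m-1$ contests of the tree, level by level over $\log m$ adaptive rounds, for the checkpoint protocol) match the paper. The gap is in the correctness argument, and it is exactly at the point you flagged and then tried to dismiss. Your final fix --- add $\epsilon n$ ballots ranking $c$ first so that ``$c$ beats everyone it could possibly meet regardless of the bracket's internal resolution'' --- does not work. Ranking $c$ first in $\epsilon n$ new ballots only raises each $N(c,x)$ by $\epsilon n$, so it flips a contest $c$ vs.\ $x$ only when $N_E(c,x)\ge n/2-\Theta(\epsilon n)$ already; it certainly does not make $c$ beat \emph{every} candidate. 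Meanwhile, the opponent $c$ faces at an internal node is decided by contests \emph{not involving $c$} inside the sibling subtree, and the center's simulation may have resolved such a contest incorrectly (the true margin can sit anywhere inside the $\pm\epsilon n/2$ error band). In the augmented election that sibling contest can therefore flip, sending a candidate $x$ against $c$ for which $N_E(c,x)$ is far below $n/2$; your added ballots cannot save $c$ there, and $c$ is not certified to be an $\epsilon$-winner. Your first construction (rank $c$ above only $o_1,\dots,o_\ell$) has the same defect, since it also leaves the sibling-subtree contests uncontrolled.

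The paper closes this hole with a single extra idea you are missing: take $P$ to be the full set of $m-1$ ordered pairs produced by simulating the Cup bracket on the approximate values $N'$, observe that the digraph on the candidates with edge set $P$ is acyclic, and let $\pi_P$ be a topological order of it. All $\epsilon n$ added voters cast the \emph{same} ballot $\pi_P$. Then for every $(a,b)\in P$ (not just the pairs on $c$'s path) the added voters prefer $a$ to $b$, so $N_{E'}(a,b)=N_E(a,b)+\epsilon n\ge N_E(b,a)=N_{E'}(b,a)$, using $N_E(a,b)\ge N_E(b,a)-\epsilon n$ from the $\epsilon n/2$ accuracy. By induction up the tree, the entire bracket in the augmented election resolves exactly as the center computed it, so $c$ reaches and wins the root. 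This is the ``more subtle construction of voters'' the paper alludes to, and it is the step your write-up would need to add; the rest of your proposal is consistent with the paper's proof.
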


\begin{proof}
Let $T$ be an implementation of the binary tree of the \textsc{Cup} election:
  There are $n-1$ ordered pairs $P$ of candidates (corresponding to the head-to-head ``contests''),
  such that the winning candidate in each such pair goes up in the tree.
  In particular,
  every election $E$ which agrees with the tree $T$ on $P$, will have the root of $T$ as its \textsc{Cup}-winner.
We argue that there is an order $\pi_P$ over the candidates such that,
if $(c,c')\in P$,
then $c$ will appear before $c'$ in $\pi_P$.
Indeed, consider a directed graph $G$ with the candidates as its vertices and $P$ as its edges. $G$ is acyclic and thus a topological order of $G$ will provide us with the desired order $\pi$.
Later we will use this order $\pi$ as a preference order.
Now we will proceed to describing the protocols.

Our first protocol is based on counting frequencies, and is similar to the corresponding Copeland protocol.
We estimate the frequencies of \emph{all} head-to-head contests (using the same precision and communication). 
To return a winner, we simply run a \textsc{Cup} tournament (with the appropriate, given tree),
using the estimations $N'(c,c')$ instead of the real values $N_E(c,c')$.
As a result, we have a set $P$ of $n-1$ ordered pairs.
To prove correctness, it will be enough to show that by adding additional $\eps n$ votes it will hold,
for every $(c,c')\in P$,
that $N_{E'}(c,c')\ge N_{E'}(c',c)$.
Indeed, following the analysis of the frequency count of \textsc{Copeland},
with high probability for every pair of candidates $c,c'$ we have that $|N'(c,c')-N_E(c,c')|\le \eps n/2$.
Recall the order $\pi_P$ described at the beginning of the proof,
and notice that by adding $\eps n$ voters with preference orders as $\pi_P$ it will hold,
for every $(c,c')\in P$, that
\[
  N_{E'}(c,c')=N_{E}(c,c')+\epsilon n\ge N_{E}(c',c)=N_{E'}(c',c)~,
\]
as required.
	
The second protocol is based on checkpoints,\footnote{%
  The protocol described here is useful if we assume that $\log m\cdot\log n\ge m^{2}\log\frac{k}{\epsilon}$.
  If this is not the case, then we can use instead the communication protocol of \textsc{Copeland}.}
thus below we describe the static subprotocol carried-out in each checkpoint.
The subprotocol has $\log m$ rounds,
corresponding to the height of the binary tree associated with the Cup protocol.
In each round,
the center asks each site to provide approximate values of the pairs currently at interest.
Supplied with these approximate values,
the center then computes the winner of each head-to-head contest,
and continue to the nodes further up the tree.
At the end, the center declares the winner of the highest node in the tree.

More concretely,
for every pair of candidates of interest $c$, $c'$,
each site sends the center the number of voters preferring $c$ over $c'$, rounded to the closest multiplication of $\eps n/2k$.
As the error in each head-to-head contest is upper-bounded by $k \cdot \frac{\eps n}{2k}=\frac{\eps n}{2}$,
correctness follows by similar lines as given above in the proof of the frequency-count protocol described above. 
There are $\log m$ rounds, where at round $i$, each site sends $2^{\log m -i}$ values,
each requiring $\log \frac{2k}{\eps}$ bits.
Thus,
the total number of words in a checkpoint is: 
\[
k\cdot\sum_{i=1}^{\log m}\left\lceil \frac{2^{\log m-i}\cdot\log\frac{2k}{\epsilon}}{\log n}\right\rceil \le k\cdot\sum_{i=1}^{\log m}\left(1+\frac{2^{i}\cdot\log\frac{2k}{\epsilon}}{\log n}\right)=O\left(k\cdot\left(\log m+\frac{m\cdot\log\frac{k}{\epsilon}}{\log n}\right)\right),
\]
and total communication follows.

The third protocol is based on sampling and is similar to the Copeland sampling protocol.
We use the same communication, and hence we insure that with high probability, for every pair of candidates $c,c'$ it holds that
$|N'(c,c')-N(c,c')|<\frac\eps2\cdot n$.
Correctness now follows by similar lines as in the frequency-count protocol.
\end{proof}

Finally, we consider the Condorcet voting rule.
In order to declare a candidate $c$ as a Condorcet $\eps$-winner,
it is enough to insure that, by adding $\eps n$ voters, every other candidate $c'\ne c$ loses to at least one other candidate in the head-to-head contest (and thus, either $c$ can become a Condorcet winner in this way, or there will be no Condorcet winner at all, in which case $c$ can be returned).
A candidate $c$ which is either Copeland or Cup $\eps$-winner has this property.
We conclude that every protocol for Copeland as well as every protocol for Cup is in particular a protocol for Condorcet. 

\begin{corollary}\label{theorem:condorcet}
	There are three protocols for \textsc{Condorcet-winner-tracking}.
	Respectively, the protocols use
	$O((\epsilon^{-1} \sqrt{k} m^2 + k) \log m n{\cdot\log m})$, 
	$O\left(\frac{k}{\epsilon}(m\log\frac{k}{\epsilon}+\log m\cdot\log n)\right)$,
	and
	$O((\epsilon^{-2} \log m + k) (m \log m + \log n))$ words.
\end{corollary}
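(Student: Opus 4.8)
The plan is to obtain all three protocols for free from the corresponding \textsc{Cup} protocols of \Cref{theorem:cup}, by showing that every \textsc{Cup} $\eps$-winner is automatically a Condorcet $\eps$-winner \emph{with the same} $\eps$. Thus the only thing that really needs proof is this parameter-preserving implication; once it is in place, the three communication bounds are literally those of \Cref{theorem:cup}, since each of those protocols already maintains, upon query, a candidate that is a \textsc{Cup} $\eps$-winner of the current election.

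So first I would establish the key claim: if $c$ is a \textsc{Cup} $\eps$-winner of an election $E$ with $n$ voters, then $c$ is a Condorcet $\eps$-winner of $E$. Let $E'$ witness that $c$ is a \textsc{Cup} $\eps$-winner, i.e. $E\subseteq E'$, $|E'\setminus E|\le \eps n$, and $c$ is the \textsc{Cup}-winner of $E'$ with respect to the balanced binary tree defining the rule (under the favourable tie-breaking). Since $c$ labels the root, every other candidate $c'\ne c$ is eliminated at some internal node, where it fails to obtain a strict majority over the candidate sitting in the sibling subtree; hence $c'$ does not have Copeland score $m-1$ in $E'$, i.e. $c'$ is not a Condorcet winner of $E'$. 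Consequently either $c$ is the (unique) Condorcet winner of $E'$, or $E'$ has no Condorcet winner at all, in which case \emph{all} candidates — $c$ included — tie as co-winners under Condorcet by definition (see \Cref{subsec:PrelimVotingRules}). In both cases $c\in\calR(E')$ for $\calR=\textsc{Condorcet}$, so $c$ is a Condorcet $\eps$-winner of $E$. The same argument works with ``\textsc{Cup}'' replaced by ``\textsc{Copeland}'': if $c$ is a Copeland co-winner of $E'$ and some $c'\ne c$ had Copeland score $m-1$ (the maximum possible), then $c$ would too, so $c$ and $c'$ would each strictly beat the other in their head-to-head contest, a contradiction; hence no $c'\ne c$ is a Condorcet winner of $E'$, and we conclude as before.

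With the claim proved, I would finish by invoking \Cref{theorem:cup}: its three protocols are protocols for \textsc{Condorcet-winner-tracking}, using respectively $O((\epsilon^{-1} \sqrt{k} m^2 + k) \log m n\cdot\log m)$, $O\!\left(\frac{k}{\epsilon}(m\log\frac{k}{\epsilon}+\log m\cdot\log n)\right)$, and $O((\epsilon^{-2} \log m + k) (m \log m + \log n))$ words. One could equally use the \textsc{Copeland} protocols of \Cref{theorem:copeland1} for the first and third bounds (they coincide); \Cref{theorem:cup} is strictly preferable only for the checkpoints bound, which is why the corollary is stated with the \textsc{Cup} figures.

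The one genuinely delicate point — and the step I would be most careful with — is matching the degenerate-case conventions exactly: that ``Condorcet winner'' means Copeland score precisely $m-1$, that head-to-head ties are handled consistently with the favourable tie-breaking (so that ``loses the contest'' really does entail ``not a strict majority'', hence ``not a Condorcet winner''), and that when no Condorcet winner exists all candidates tie. None of this involves any computation, but the implication must be phrased robustly enough to survive these conventions; everything else is a direct citation of \Cref{theorem:cup}.
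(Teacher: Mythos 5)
Your proposal is correct and follows essentially the same route as the paper: the paper likewise observes that a Cup (or Copeland) $\eps$-winner $c$ admits a witnessing election $E'$ in which every other candidate loses some head-to-head contest, so either $c$ is the Condorcet winner of $E'$ or no Condorcet winner exists and all candidates tie, whence $c$ is a Condorcet $\eps$-winner and the three bounds are inherited from \Cref{theorem:cup} (with \Cref{theorem:copeland1} as an alternative source for the first and third). Your explicit handling of the tie-breaking convention is a careful spelling-out of what the paper leaves implicit, not a different argument.
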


\subsection{Round-based Rules}

In this section we consider two round-based voting rules;
we begin with Plurality with run-off and then continue to Bucklin.
For Plurality with run-off we provide three protocols,
one of which is a ``hybrid'' protocol,
specifically combining checkpoints and sampling.
Intuitively,
hybrid protocols fit naturally with round-based voting rules,
which, informally speaking, are themselves ``hybrids'' of voting rules.

\begin{theorem}
  There are three protocols for \textsc{Plurality-with-run-off-winner-tracking}.
  Respectively, the protocols use  
  $O((\epsilon^{-1} \sqrt{k} m^2 + k) \log mn{\cdot\log m})$,
    $O\left(k\epsilon^{-1}\log n\right)$ and
    \linebreak
  $O((\epsilon^{-2} + k) (m \log m + \log n))$
 words.
\end{theorem}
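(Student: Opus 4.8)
The plan is to reduce all three protocols to a single ``lifting'' argument and then implement, in each of the three models, the two kinds of approximate statistics that argument needs. Suppose the center knows, up to additive error $\eps n/8$, the Plurality score $g(c)$ of every candidate, and, for the two candidates $a,b$ having the largest approximate Plurality scores, the head-to-head count $N(a,b)$. I would first show that one of $a,b$ is an $\eps$-winner and that the center can tell which. Since $a$ (resp.\ $b$) has the largest (resp.\ second largest) approximate Plurality score, we get $g(a)+\eps n/4\ge g(c)$ for every $c$ and $g(b)+\eps n/4\ge g(c)$ for every $c\neq a$ (note this needs only \emph{lower}-deviation control of the score estimates). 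Hence, adding $\alpha$ voters ranking $a$ first and $\beta$ voters ranking $b$ first, with $\alpha,\beta\ge\eps n/4$ and $\alpha+\beta\le\eps n$, makes $\{a,b\}$ the pair of run-off finalists, and in that run-off $a$ gets $N(a,b)+\alpha$ and $b$ gets $n-N(a,b)+\beta$ votes; choosing the split $(\alpha,\beta)$ suitably makes $a$ win whenever $N(a,b)\ge n/2-\eps n/4$ and makes $b$ win whenever $N(a,b)\le n/2+\eps n/4$, and these cases cover everything. The center therefore declares $a$ if its estimate of $N(a,b)$ is at least half its estimate of $n$, and $b$ otherwise. I expect \emph{this} to be the main obstacle: unlike Plurality (\Cref{theorem:plurality1}) or Copeland, where one simply appends voters ranking the chosen candidate on top, here one must simultaneously force a suitable \emph{partner} into the run-off and split the voter budget between ``entering the run-off'' and ``winning the head-to-head''.

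For the first (frequency-based) protocol I would reduce each voter to $1+\binom{m}{2}=O(m^2)$ items --- one item recording the voter's top candidate, and one ordered item $(c,c')$ for each pair with $c\pref c'$ --- and run the randomized \textsc{Frequency-tracking} protocol on the resulting stream of $\Theta(nm^2)$ items with precision $\eps'=\Theta(\eps/m^2)$, exactly as in the first protocol of \Cref{theorem:copeland1}. This simultaneously gives all Plurality scores and all head-to-head counts up to $\eps n/8$, so the lifting argument applies at every point in time, and the communication is $O((\eps^{-1}\sqrt k\,m^2+k)\log mn\cdot\log m)$ by the discussion in \Cref{section:protocols-based-on-counting-frequencies}.

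For the second (hybrid) protocol I would run, continuously, the \emph{deterministic} \textsc{Frequency-tracking} protocol on the Plurality votes alone; since its cost $O(\eps^{-1}k\log n)$ does not depend on $m$, the center maintains all Plurality scores up to $\eps n/8$ for free at all times. On top of this I would run a checkpoints scheme (\Cref{lem:Checkpoints}): at each of the $O(\eps^{-1}\log n)$ checkpoints the center reads off the current top two candidates $a,b$ and asks every site only for the number of its voters preferring $a$ to $b$, rounded to the nearest multiple of $\eps n/(8k)$ --- this is $O(k)$ words per checkpoint, hence $O(k\eps^{-1}\log n)$ words overall. The point that keeps this $m$-free is that only a single head-to-head pair, rather than all $\Theta(m^2)$ of them, is queried per checkpoint.

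For the third (sampling) protocol I would take $s=O(\eps^{-2})$ voters sampled uniformly with repetition via the protocol of \Cref{section:protocols-based-on-sampling}, split the sample into two halves, estimate all Plurality scores from the first half and estimate $N(a,b)$ from the second half, where $a,b$ are the two candidates with the largest empirical Plurality scores on the first half. The head-to-head estimate needs only a single application of \Cref{lem:BasicSampler}, since $(a,b)$ is fixed once the first half is revealed. The delicate point is to estimate the Plurality scores without paying a $\log m$ factor from a union bound over candidates: as observed above, only lower-deviation bounds are needed, and $\Pr[\exists c:\ \widehat g(c)<g(c)-\eps n/8]$ can be bounded by a per-candidate Chernoff estimate followed by \Cref{clm:tApprovalExp} with $t=1$ (the true score vector lies in the convex hull of the set $A$ there, since the scores sum to $n$), which yields a constant failure probability already at $s=O(\eps^{-2})$. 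Communicating each sampled ranking costs $O(m\log m)$ bits, so the general sampling bound gives $O((\eps^{-2}+k)(m\log m+\log n))$ words; pushing the success probability to $1-\delta$ only multiplies the sample size by $\log(1/\delta)$, in line with the remark above.
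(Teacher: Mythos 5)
Your proposal matches the paper's proof in all essentials: the same voter-splitting argument (adding $\alpha+\beta\le\eps n$ voters, split between the two finalists, so that one of the two is always liftable to a run-off winner), the same $\Theta(\eps/m^2)$-precision frequency-tracking reduction for the first protocol, the same $m$-free hybrid of a continuously maintained deterministic Plurality frequency count with a single head-to-head query per checkpoint for the second, and the same two-half sampling scheme that uses \Cref{clm:tApprovalExp} with $t=1$ to avoid a $\log m$ factor for the third. The only slip is the parenthetical claim that selecting the top two requires only lower-deviation control of the score estimates: it in fact requires two-sided control, since an estimate $\hat g(a)$ that wildly overestimates could promote a candidate whose true score is far from maximal --- but this is harmless, as the Chernoff-plus-\Cref{clm:tApprovalExp} bound you invoke is two-sided at the same $O(\eps^{-2})$ sample size.
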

\begin{proof}
The first protocol is based on counting frequencies.
We combine the protocol for Plurality, described in the proof of \Cref{theorem:plurality1},
with the protocol for Copeland, described in the proof of \Cref{theorem:copeland1}.
Specifically, the Plurality protocol maintains a frequency count for the plurality score of each candidate with accuracy $\frac{\eps}{6}$. The Condorcet protocol, for every two candidates $c_1,c_2$,  maintains a frequency count for the number of times $c_1$ wins $c_2$ with accuracy $\frac{\eps}{3}$. Following the analysis in \Cref{theorem:plurality1} and \Cref{theorem:copeland1}, the communication needed is  
$O((\epsilon^{-1} \sqrt{k}+ k) \log mn{\cdot\log m})+O((\epsilon^{-1} \sqrt{k} m^2 + k) \log mn{\cdot\log m})=O((\epsilon^{-1} \sqrt{k} m^2 + k) \log mn{\cdot\log m})$.
When calculating the winner,    
the center identifies two candidates $c_1,c_2$ with the highest estimated Plurality scores $f'(c)$ using the protocol for Plurality. Denoting by $f(c)$ the real Plurality score, for every $c'\ne c_1,c_2$ it holds that 
\begin{equation}\label{eq:runoff}
\mbox{For }i\in\{1,2\},~~f(c') \leq f'(c') + \frac{\epsilon}{6}n \leq f'(c_i) + \frac{\epsilon}{6}n \leq f(c_i) + \frac{\epsilon}{3}n~.
\end{equation}
Next, the center uses the protocol for Condorcet to decide which of these two candidates it shall declare as an $\epsilon$-winner.
Assume, without loss of generality, that it declares $c_1$ as the winner.
Then,
by adding $\frac{2}{3}\eps n$ (resp. $\frac{1}{3}\eps n$) voters ranking $c_1$ (resp. $c_2$) on top,
we can guarantee that $c_1$ and $c_2$ indeed have the highest Plurality score while $c_1$ wins $c_2$ in the head-to-head contest. 

The second protocol is a ``hybrid'' protocol which combines checkpoints and frequency count. 
During the protocol we maintain estimated frequencies of the Plurality score of each candidate as in the first protocol
(which we execute with precision $\frac{\eps}{6}$).
Next we describe the subprotocol executed in each checkpoint.
At each checkpoint, we use the Plurality protocol to identify two candidates $c_1$ and $c_2$ with the highest (approximated) Plurality score.
Given $c_1$ and $c_2$,
the center collects from all sites the \emph{exact} number of voters preferring $c_1$ over $c_2$,
and declares as a winner the one which is preferred by more voters.
Correctness follows as by adding $\frac{\eps}{2}n$ voters ranking $c_1$ on top, and $\frac{\eps}{2}n$ voters ranking $c_2$ on top,
we guarantee that $c_1$ and $c_2$ indeed have the highest plurality score (formally, this follows from equation (\ref{eq:runoff}))
while the winner between the two remains unchanged.
The subprotocol uses $2k$ words of communication, thus the total communication in all the checkpoints is $O\left(k\epsilon^{-1}\log n\right)$.
For the frequency count we will use the deterministic protocol with $O\left(k\epsilon^{-1}\log n\right)$ communication.
Therefore, in total we have a deterministic protocol with $O\left(k\epsilon^{-1}\log n\right)$ communication.

For the third protocol, we will show that $s=O(\eps^{-2}\log\frac{1}{\delta})$ sampled voters,
chosen uniformly at random (with repetitions),
are enough to determine an $\eps$-winner with failure probability at most $\delta$. 
We will use two sets of independent samples, $S_1$ and $S_2$, each of size $s/2=O(\eps^{-2}\log\frac{1}{\delta})$. 
According to the proof of \Cref{theorem:tapproval1} for the case of $t=1$,
the set of sampled voters $S_1$ is sufficient for us to determine the plurality score of each candidate with accuracy $\frac{\eps}{6}$.
Let $c_1$ and $c_2$ by the two candidates with the highest plurality score in $S_1$. 
Next we use $S_2$ to determine the number of times $c_1$ wins $c_2$
(with accuracy $\frac{\eps}{3}$ as in our protocol for Copeland; see \Cref{theorem:copeland1}),
and return the candidate who wins in the head-to-head contest (in $S_2$).
Correctness follows by similar lines to our frequency-count protocol.
\end{proof}

For Bucklin,
we suggest three protocols;
one is based on counting frequencies, the second is based on checkpoints, while the third is a sampling-based protocol.

\begin{theorem}
	 There are three protocols for \textsc{Bucklin-winner-tracking}.
	 Respectively,
	 the protocols use
	$O((\epsilon^{-1} \sqrt{k} m \log^2 m + k) \log mn{\cdot\log m})$,
	$O\left(\epsilon^{-1}\cdot k\cdot\log m\cdot\left(\log n+m\log\frac{k}{\epsilon}\right)\right)$,
    and \linebreak
	$O((\epsilon^{-2} \log m + k) (m \log m + \log n))$ words.
\end{theorem}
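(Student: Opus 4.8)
The plan is to reduce all three protocols to one common bookkeeping task: maintain, for every candidate $c\in C$ and every round $i\in[m]$, an estimate $\tilde B_i(c)$ of $B_i(c):=|\{v: c\text{ is among the top }i\text{ choices of }v\}|$ that is accurate to within $\pm\eta n$ for some $\eta=\Theta(\eps)$. Recall $B_i(c)$ is non-decreasing in $i$, $\sum_c B_i(c)=in$, and $B_m(c)=n$. Given such estimates, the center computes for each $c$ the round $i_c:=\min\{j:\tilde B_j(c)>\tfrac{1-\eps}{2}n+\eta n\}$ in which $c$ (approximately) first reaches the reduced majority level $\tfrac{1-\eps}{2}n$, declares the candidate $c^\star$ minimizing $i_c$ (ties broken in our favour), and sets $i^\star:=i_{c^\star}$. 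This mirrors how Bucklin itself scans rounds, but with a shifted threshold to absorb both the error and the later vote additions.

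The technical heart is showing $c^\star$ is an $O(\eps)$-winner, via the certificate election $E'$ obtained by adding $2\eps n$ voters that each rank $c^\star$ first and list the remaining $m-1$ candidates in a round-robin (cyclically shifted) order, so that each other candidate occupies each tail position equally often; then in round $j$ candidate $c^\star$ gains $2\eps n$ approvals while each $c'\ne c^\star$ gains at most $\tfrac{j-1}{m-1}\cdot 2\eps n$. The crucial structural observation is that since $c^\star$ minimizes $i_c$, \emph{every} candidate $c$ satisfies $B_j(c)\le\tfrac{1-\eps}{2}n+2\eta n$ for all $j<i^\star$; summing over $c$ and using $\sum_c B_j(c)=jn$ forces $j\le m\bigl(\tfrac{1-\eps}{2}+2\eta\bigr)\le m/2$ once $\eta\le\eps/4$, hence $i^\star\le m/2+1$. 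Therefore, for every round $j<i^\star$ the round-robin bonus of any $c'$ is at most $\tfrac{m/2-1}{m-1}\cdot 2\eps n\le\eps n$, so $c'$'s augmented $j$-score is at most $\tfrac{1-\eps}{2}n+2\eta n+\eps n\le\tfrac{1+2\eps}{2}n$ and $c'$ does not reach strict majority (threshold $\tfrac{1+2\eps}{2}n$ in $E'$) before round $i^\star$; meanwhile $c^\star$'s augmented $i^\star$-score is at least $B_{i^\star}(c^\star)+2\eps n>\tfrac{1-\eps}{2}n+2\eps n>\tfrac{1+2\eps}{2}n$, so $c^\star$ reaches strict majority by round $i^\star$ and is the Bucklin winner of $E'$. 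Rescaling $\eps$ by a constant yields an $\eps$-winner. I expect the main obstacle to be exactly this certificate analysis: getting the interaction between $\eta$, the added budget, and the ``$i^\star$ is shallow'' argument right (this is what keeps the required accuracy at $\Theta(\eps)$ rather than $\Theta(\eps/m)$, and hence keeps the communication bounds as stated).

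For the frequency-based protocol the point is to avoid the naive $\Theta(m^2)$ items per voter (one item of type $(c,i)$ for each $i\ge\pos_v(c)$): fix a balanced binary ``segment tree'' over the leaves $1,\dots,m$; the suffix $\{\pos_v(c),\dots,m\}$ on which $v$ contributes to $c$ is a disjoint union of $O(\log m)$ canonical tree nodes, and $B_i(c)$ equals the sum of the counts stored at the (at most $O(\log m)$) canonical nodes on the root-to-leaf path to leaf $i$. Each site thus reduces a voter into $O(m\log m)$ items of types $(c,u)$, giving $O(m^2)$ item types and $n'=O(m\log m\cdot n)$ items; running the randomized \textsc{Frequency-tracking} protocol with precision $\eps'=\Theta(\eps/(m\log^2 m))$ makes every tracked count accurate to $\eps' n'$, hence every $B_i(c)$ accurate to $O(\log m)\cdot\eps' n'\le\eta n$, and plugging $\eps'$ and $\log(\#\text{types})=O(\log m)$ into the cost $O((\eps'^{-1}\sqrt k+k)\log n'\cdot\log m)$ gives $O((\eps^{-1}\sqrt k\,m\log^2 m+k)\log(mn)\cdot\log m)$.

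For the checkpoints protocol, at each of the $O(\eps^{-1}\log n)$ checkpoints the center binary-searches over the round index $i\in[m]$: to probe a round $i$, each site sends its $i$-Approval scores for all $m$ candidates, each rounded \emph{down} to a multiple of $\Theta(\eps n/k)$ (rounding down keeps the estimates monotone in $i$, so the predicate ``some candidate exceeds the threshold'' is monotone and binary search is valid); this gives every $B_i(c)$ to within $\Theta(\eps n)$, and $O(\log m)$ probes locate $i^\star$ and $c^\star$, each probe costing $O\bigl(k(1+m\log(k/\eps)/\log n)\bigr)$ words, for a total of $O\bigl(\eps^{-1}k\log m(\log n+m\log(k/\eps))\bigr)$ words. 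For the sampling protocol the center maintains $s=O(\eps^{-2}\log m)$ uniformly random voters via the distributed-sampling protocol of~\cite{cormode2012continuous}, ships each as a full preference order ($O(m\log m)$ bits), sets $\tilde B_i(c)=\tfrac ns|\{\text{sampled }v:\pos_v(c)\le i\}|$, and invokes \Cref{lem:BasicSampler} with a union bound over the $O(m^2)$ pairs $(i,c)$ (whose logarithm is absorbed into $s$) to get all estimates within $\eta n=\Theta(\eps n)$, at a cost of $O((\eps^{-2}\log m+k)(m\log m+\log n))$ words. In all three cases correctness then follows from the common decision rule and certificate above.
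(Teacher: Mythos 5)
Your proposal is correct and follows essentially the same route as the paper's proof: the same three protocols (a binary decomposition of the $m$ positions giving $O(m\log m)$ items per voter for frequency tracking with precision $\Theta(\eps/(m\log^2 m))$, a $\log m$-round binary search over Bucklin levels inside each checkpoint, and $O(\eps^{-2}\log m)$ samples with a union bound over all candidate--level pairs), combined with the pigeonhole observation that the winning level is at most $m/2$, so that added voters ranking $c^\star$ first boost him by $\Theta(\eps n)$ at every relevant level while boosting any other candidate by at most half as much. Your round-robin tail orders and explicit shifted-threshold decision rule replace the paper's reverse-order voter pairs, and your suffix (range-update/point-query) canonical decomposition replaces the paper's binary-representation prefix sums, but these are interchangeable implementations of the same ideas and yield the same bounds.
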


\begin{proof}
To make the proof ideas more clear, for simplicity we assume that $m$ is a power of $2$ (we mention that if this is not the case, than we can add less than $m$ dummy candidates, that always will be ranked after the real candidates).
By the pigeonhole principle, a Bucklin winner is necessarily found within the first $m / 2$ rounds.
This is so, since a candidate which is not a Bucklin winner appears less than $n / 2$ times within these $n \cdot m / 2$ positions, thus if no candidate is a Bucklin winner, then at least one position in the $n \cdot m / 2$ positions, constituting the $m / 2$ first positions of the $n$ voters, is not filled by any candidate.

We start with a discussion regarding the impact of adding voters. 
Let $c$ be an arbitrary candidate
and
consider adding two voters, each ranking $c$ on top,
and ranking the other candidates in reverse orders.
As a result,
the score of $c$ increases by $2$ for each level $j \le m / 2$,
while the status of each candidate $c' \ne c$ is only weaker
(thus, if $c'$ does not have a majority at level $j$ before the addition, then it will also not have a majority after the addition). 

The first protocol is based on counting frequencies.
It begins by reducing the distributed vote stream into a different distributed stream.
Intuitively,
the idea is to consider binary divisions of the positions between $1$ to $m$;
for example, if we know the frequency of some candidate $c$ in the first half positions (between position $1$ and position $m / 2$)
as well as the frequency of it in the positions between position $m / 2$ and position $3m / 4$,
then we know its frequency between position $1$ and $3m / 4$.
Thus,
we will have a different distributed stream for each binary division of the $m$ positions,
and we will use these to know the (approximated) Bucklin score of each candidate.

Formally, we will be estimating the frequencies of items of the form $(c,i,j)$, where $c$ is a candidate, $i\in[0,\log m]$, and $j\in[0,m/2^i-1]$. 
To this end,
each site,
upon receiving a voter $v$, increases (by $1$; this is equivalent to producing those entries and feeding them to the frequency count protocol) various entries in the frequency count protocol.
Specifically, for each candidate $c$ that was ranked at position $\ell \in [m]$, the site increases the items $(c, i, \left\lfloor \frac{\ell-1}{2^{i}}\right\rfloor)$, for all $i \in [0, \log m - 1]$  (alternatively  $(c, i,j)$ for $j \in [0, m / 2^i - 1]$,
such that $(j \cdot 2^i + 1)\le \ell \le ((j + 1) \cdot 2^i)$).
The idea is that we can recover the approximate number of voters ranking each candidate $c$ at the first $j$ positions
using $\log m$ approximate counters of these items.
See \Cref{figure:bucklin} for an illustrating example.

\begin{figure}
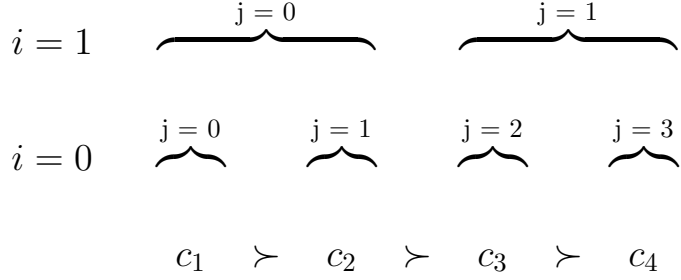
 
  \centering

{\Large
\[\begin{array}{*{20}{c}}

    i = 1 \ \ \ \ 
    & \multicolumn{3}{c}{\overbrace{\hphantom{.....................}}^\text{j = 0}} &
    & \multicolumn{3}{c}{\overbrace{\hphantom{.....................}}^\text{j = 1}}     \\ \ \\
    
    i = 0 \ \ \ \ 
    & \multicolumn{1}{c}{\overbrace{\quad \; }^\text{j = 0}}&
    & \multicolumn{1}{c}{\overbrace{\quad \; }^\text{j = 1}}&
    & \multicolumn{1}{c}{\overbrace{\quad \; }^\text{j = 2}}&
    & \multicolumn{1}{c}{\overbrace{\quad \; }^\text{j = 3}}&
    \\ \ \\
    
    & c_1 & \succ & c_2 & \succ & c_3 & \succ & c_4 \\
    
\end{array} \]}

  \caption{An example for the reduction performed in the protocol for Bucklin.
  Specifically, a voter $v : c_1 \pref c_2 \pref c_3 \pref c_4$ is considered,
  which is reduced to the following stream elements:
    $(c_1, 0, 0)$,
    $(c_1, 1, 0)$,
    $(c_2, 0, 1)$,
    $(c_2, 1, 0)$,
    $(c_3, 0, 2)$,
    $(c_3, 1, 1)$,
    $(c_4, 0, 3)$,
    $(c_4, 1, 1)$.
  }
  \label{figure:bucklin}
\end{figure}

The protocol initiates a \textsc{Frequency-tracking} protocol on the reduced distributed stream with $\epsilon' = \epsilon / (2m \log ^2 m)$.
(To arrive to the term above, notice that one $\log$ factor comes from the fact that each voter is replaced by $m \log m$ votes, while the second $\log$ factor comes to compensate for the fact that we estimate up to $\log m$ values.)
This will give us approximate values on the number of items of each type in our reduced distributed stream.
Let us denote,
for a candidate $c$ and position~$k \in [m]$,
the number of voters ranking $c$ at any position $k'\le k$ by $N(c, k)$.
Then,
we can approximate each of the values $N(c, k)$ by adding $\log m$ different approximated frequencies,
computed by the \textsc{Frequency-tracking} protocol (on the reduced stream).
This is, informally, the reason why we reduced each original voter in to the items we reduced to:
Given those items,
it is enough to add $\log m$ different approximated frequencies in order to approximate the value of $N(c, k)$;
then, as we will see below, bounding the error can be done in a finer way,
since the error is accumulated only in $\log m$ different frequencies,
and not in $m$ such (which would be the case otherwise).

Formally, for candidate $c$, index $i\in [m]$, and $j\in\{0,\dots,\frac{m}{2^i}-1\}$, denote by $f(c, i, j)$ the number of voters that ranked $c$ between the $(j \cdot 2^i + 1)$'th position and the $((j + 1) \cdot 2^i)$'th position. 
For an index $k\in[m]$, let $a_0,a_1,\dots,a_{\log m-1}\in\{0,1\}$ such that $k=\sum_{i=0}^{\log m-1}a_i\cdot 2^i$. It holds that $N(c, k)=\sum_{i=0}^{\log m-1}a_{i}\cdot f\left(c,i,\left\lfloor \frac{k-1}{2^{i}}\right\rfloor \right)$.
Denote by  $f'(c, i, j)$ the frequency count protocol estimation for $f(c, i, j)$.
Set $N'(c, k)=\sum_{i=0}^{\log m-1}a_{i}\cdot f'\left(c,i,\left\lfloor \frac{k-1}{2^{i}}\right\rfloor \right)$ to be the approximations of $N(c, k)$.

We are now able to simulate Bucklin;
specifically,
the center finds the minimum $k$ for which there is at least one candidate $c$ for which $N'(c, k) \geq \frac{n}{2}-\frac{\epsilon n}{2}$, and declares $c$ as an $\eps$-winner.

Next we show correctness.
The size of the reduced distributed stream is $n' = n m \log m$,
since each voter is transformed into $m \log m$ items, specifically $\log m$ per each candidate.
To approximate the value $N(c, k)$ we add up $\log m$ approximate frequencies,
each of which can be wrong by at most $\epsilon' n' = \epsilon n / (2\log m)$;
thus, the value of $N'(c, k)$ can be wrong by at most $\epsilon n/2$.
Therefore, in each level $j'<j$ where we do not find a winner,
there is indeed no candidate with a majority.
Finally, according to the discussion in the begging of the proof, $\epsilon n$ additional voters can indeed make our chosen candidate a winner.
A pseudocode of the protocol is given in Algorithms~\ref{algorithm:bucklinCenter}, \ref{algorithm:bucklinSite}.

\begin{algorithm}[t]
	\ \\
	Initiate and maintain a \textsc{Frequency-tracking} protocol with precision parameter  $\epsilon' = \epsilon / (2m \log ^2 m)$, where the items are all triples $(c,i,j)$ such that $c$ is a candidate, $i\in[0,\log m]$ and $j\in[0,m/2^i-1]$. 
	The estimated frequency count of  $(c,i,j)$ is denoted  $f'(c,i,j)$\;
	\ForEach{candidate $c$ and $k\in\{1,\dots,m\}$}{
		Let $a_0,a_1,\dots,a_{\log m-1}\in\{0,1\}$ such that $k=\sum_{i=0}^{\log m-1}a_i\cdot 2^i$\;
		Maintain  $N'(c,k)=\sum_{i=0}^{\log m-1}a_{i}\cdot f'\left(c,i,\left\lfloor \frac{k-1}{2^{i}}\right\rfloor \right)$\;
	}
	\textbf{Upon query:} Find the minimum $k$ for which there is at least one candidate $c$ for which $N'(c, k) \geq \frac{n}{2}-\frac{\epsilon n}{2}$, and return the corresponding $c$\;
	\caption{Frequencies based protocol for \textsc{Bucklin-winner-tracking}, for center.}
	\label{algorithm:bucklinCenter}
\end{algorithm}
\begin{algorithm}[t]
	\ \\
	Participate in the \textsc{Frequency-tracking} protocol initiated by the center\;
	Upon receiving a vote $v : c_1 \pref \dots \pref c_m$ reduce it to the following items:\\
	\For{$\ell\in\{1\dots,m\}$}{
		\For{$i\in\{0,\dots,\log m-1\}$}{
				Simulate insertion of item $\left(c_{\ell},i,\left\lfloor \frac{\ell-1}{2^{i}}\right\rfloor \right)$ to the \textsc{Frequency-tracking} protocol\;
			}
	
		}
	\caption{Frequencies based protocol for \textsc{Bucklin-winner-tracking}, for site.}
	\label{algorithm:bucklinSite}
\end{algorithm}

The second protocol is based on checkpoints, and thus below we describe the static subprotocol carried-out in each checkpoint.
Each checkpoint contains $\log m$ rounds,
where in each of these $\log m$ rounds,
the center is performing an approximate binary search to find the first $j$ for which there is at least one candidate $c_i$ 
for which the estimation of $N(c_i, j)$ is greater than $\frac{n}{2}-\frac{\epsilon n}{8}$,
and declares this $c_i$ as an $\epsilon$-winner.
In the round when some index $j$ is considered,
each site sends to the center the number of voters ranking each candidate $c$ among the first $j$ positions,
rounded to the closest multiplication of $\eps n/4k$.
Thus,
the center can estimate each $N(c_i, j)$ with precision $\frac{\epsilon n}{8}$, as needed.

Let $c$ be our declared candidate, which is declared at round $j$.
Then,
according to the discussion in the beginning of the proof,
by adding $\frac{\eps}{4}$ votes,
the declared candidate $c$ will have majority of the votes at round $j$, while no $c'\ne c$ will have majority of the votes for $j'<j$.
In particular the candidate $c$ is an $\frac\eps4$-winner.
Correctness follows by the discussion in \Cref{section:protocols-based-on-checkpoints}.
As at most $k\lceil\frac{m\log4k/\epsilon}{\log n}\rceil$ words of communication are required in each round of the sub-protocol,
and there are at most $\log m$ rounds, the total communication is bounded by
\[
O\left(\frac{\log n}{\epsilon}\cdot\log m\cdot k\lceil\frac{m\log4k/\epsilon}{\log n}\rceil\right)\le O\left(\frac{k\cdot\log m}{\epsilon}\cdot\left(\log n+m\log\frac{k}{\epsilon}\right)\right)~.
\]

For the third protocol, we will show that $s=O(\eps^{-2}\log\frac{m}{\delta})$ sampled voters,
chosen uniformly at random (with repetitions), are enough to determine an $\eps$-winner with failure probability at most $\delta$.
As we can communicate each voter using $\log (m!)$ bits,
the bound would follow.
So,
for each candidate $c$ and $j \in [m]$,
let $X_i^{(c,j)}$ be an indicator for the event that the $i$'s sampled voter ranks $c$ among the top $j$ positions.
Set $N'(c,j)=\frac ns\sum_{i=1}^{s}X_i^{(c,j)}$ to be an estimation for $N(c,j)$ - the number of voters ranking $c$ at among the top $j$ positions.
Using \Cref{lem:BasicSampler} we conclude that
$\Pr\left[\left|N'(c,j)-N(c,j)\right|\ge\frac\eps2\cdot n\right]\le\frac{\delta}{m^{2}}$.
By union bound, with probability at least $1-\delta$ for every all $c,j$, it holds that $\left|N'(c,j)-N(c,j)\right|<\frac\eps2\cdot n$. 
The center now finds the first $j$ for which there is at least one candidate $c$ for which $N'(c,j)\ge\frac{n}{2}-\frac{\epsilon n}{2} $, and declares this $c$ as an $\epsilon$-winner.
Correctness follows by the same arguments as in the frequency count protocol.
\end{proof}

\section{Lower Bounds}\label{section:lowerbounds}

In this section we provide lower bounds to complement the upper bounds derived above.
Our main result is an almost tight lower bound (up to a factor of $\log k{\cdot\log m}$)
for \textsc{Plurality-winner-tracking}.
We mention that our lower bound holds already for Plurality with $2$ candidates
and that it also improves the state-of-the-art lower bound for \textsc{Count-tracking}
(refer to \Cref{theorem:lowerbound} for our lower bound and to the remark which follows it for its application to \textsc{Count-tracking}).
Later in this section we describe a lower bound for deterministic protocols for \textsc{Approval-winner-tracking},
which is of some interest mainly since it is almost tight for \textsc{Approval-winner-tracking} and also shows that some dependency on the number $m$ of candidates
is required.

\subsection{Randomized Lower Bound for \textsc{Plurality-winner-tracking}}

Before we describe the randomized lower bound for \textsc{Plurality-winner-tracking},
we mention that it is applicable to all other voting rules we consider, via the following reduction.

\begin{lemma}
	Let $\calR$ be some voting rule described in \Cref{subsec:PrelimVotingRules}.
	A protocol for \textsc{$\calR$-winner-tracking} which uses $C$ words of communication
	implies a protocol for Plurality with $2$ candidates which uses $C$ words of communication.
\end{lemma}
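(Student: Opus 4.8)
The plan is to observe that the reduction in this direction is essentially trivial: an instance of Plurality with two candidates $\{a,b\}$ is, after a purely local and communication-free relabeling of each ballot, already a valid instance of \textsc{$\calR$-winner-tracking} on an election with $m=2$ candidates; and on two-candidate elections the rule $\calR$ coincides with Plurality, both on exact winners and on $\eps$-winners. Hence we run the given $\calR$-protocol on this relabeled stream and, upon a query, return exactly the candidate it declares. Since the number $n$ of voters (and thus the word length $\log n$) is unchanged, the communication is exactly $C$ words, and the success probability and the $\eps$ guarantee transfer verbatim.

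First I would fix the relabeling. For the ordinal rules (Borda, Copeland, Condorcet, Cup, Plurality with run-off, Bucklin), a site receiving a ballot for $a$ forwards the order $a\pref b$, and one for $b$ forwards $b\pref a$; for the approval-type rules (Approval, and $t$-Approval, for which the only admissible value when $m=2$ is $t=1$) a ballot for $a$ becomes the approval set $\{a\}$. Each site performs this transformation on its own, so no extra communication is incurred. I would then verify that, on any two-candidate election $E$, the winner set $\calR(E)$ equals the Plurality-winner set of $E$: with $m=2$ the Borda score of a candidate equals its first-place count; the Copeland score, the Condorcet test, the unique Cup contest, and the run-off round each reduce to comparing $|\{v:\, v\text{ prefers }a\}|$ with $|\{v:\, v\text{ prefers }b\}|$; $t$-Approval with $t=1$ is Plurality by definition; and Bucklin selects in round~$1$ any candidate with a strict top-$1$ majority, reaching round~$2$ (where both candidates tie) only on an even split, which under the paper's ``tie-breaking in our favor'' convention again yields precisely the Plurality-winner set.

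Next I would check that the equivalence survives the $\eps$-relaxation, i.e.\ that a candidate is an $\eps$-$\calR$-winner of the two-candidate election iff it is an $\eps$-Plurality-winner of it. The ``only if'' direction (enough for the lemma, together with symmetry of the two candidates) is the substantive one; the other inclusion is immediate because a Plurality ``vote for $a$'' is itself a legal $\calR$-ballot. For the direction we need, suppose adding at most $\eps n$ arbitrary $\calR$-ballots makes $a$ a $\calR$-winner; I would argue we may assume each added ballot ranks $a$ first (order $a\pref b$, or the set $\{a\}$ for approval rules), since in every rule above pushing $a$ up and $b$ down in an added ballot can only raise $a$'s score / head-to-head wins / round of majority and can only weaken $b$. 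By the previous step, such an addition makes $a$ a $\calR$-winner exactly when the same number of ``vote-for-$a$'' Plurality ballots makes $a$ a Plurality winner, giving the claimed equivalence with the same $\eps$.

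I expect the rule-by-rule monotonicity check in the third paragraph to be the main obstacle, though it is routine; everything else is bookkeeping. A harmless alternative phrasing, if one prefers to avoid case analysis on the added ballots, is to note that for $m=2$ the only legal ordinal ballots are $a\pref b$ and $b\pref a$, so for the ordinal rules there is literally nothing to optimize over, and for the approval rules approving both or neither candidate never helps $a$; this again collapses the $\eps$-addition to ``vote-for-$a$'' ballots. With these observations in place, the protocol obtained by simply feeding the (relabeled) two-candidate stream to the $\calR$-protocol is correct, and the lemma follows.
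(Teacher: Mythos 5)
Your overall strategy coincides with the paper's: each site locally relabels its Plurality ballots as $\calR$-ballots over the two candidates, the given $\calR$-protocol is run as a black box on the resulting stream, and its declared winner is returned; no extra communication is incurred. Your third paragraph, checking that the exact-winner equivalence also survives the $\eps$-relaxation, is a point the paper passes over quickly, and your argument there (for $m=2$ the only ordinal ballots are $a\pref b$ and $b\pref a$, so the $\eps n$ added ballots may be taken to rank the target candidate first) is sound for the rules it covers.

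There is, however, a genuine gap in your treatment of $t$-Approval. The lemma quantifies over every rule in the preliminaries, including $t$-Approval for arbitrary $t$ with $t\le m/2$, and the paper uses this reduction to transfer the Plurality lower bound to $t$-Approval for all such $t$. You dispose of the case by asserting that ``the only admissible value when $m=2$ is $t=1$,'' but that observation does not prove the lemma for $t$-Approval with $t\ge 2$: a protocol for, say, $3$-Approval-winner-tracking operates only on elections with $m\ge 6$ candidates, so a two-candidate stream is not a legal input to it and your reduction simply cannot be applied. The paper closes this case with a different construction: it introduces $2t-2$ dummy candidates and maps each Plurality vote for $a$ to \emph{two} $t$-Approval ballots (one approving $a$ together with $t-1$ dummies, one approving $a$, $b$, and $t-2$ further dummies), arranged so that the score gap between $a$ and $b$ equals the Plurality score gap while every dummy's score stays below both. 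Without some such embedding into a legal $t$-Approval instance, your proof does not establish the statement for that rule.
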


\begin{proof}
Assuming a protocol for a voting rule $\calR$,
we can use it as a black-box for solving Plurality with $2$ candidates;
below we describe such a reduction.

Let $\calR$ be a voting rule considered in this paper.
Let $P$ be a protocol for $\calR$ which uses $C$ words of communication.
We construct a protocol $P'$ for Plurality with two candidates, $a$ and $b$,
which uses $P$ as a black-box.
Specifically,
we describe the operation of $P'$ for the different $\calR$'s considered in this paper;
the general idea of the reduction is similar for all these voting rules,
namely, given a Plurality election to construct a $\calR$ election where the $\calR$ winners
are equivalent to the Plurality winners.
The specifics of the reduction very slightly between the voting rules considered. 
We denote by $a$ and $b$ our two Plurality candidates.

If $\calR$ is Approval,
then,
for each Plurality voter which arrives and approves some candidate, say $a$,
we create a voter approving only $a$.
Notice that the the Approval winners are equivalent to the Plurality winners.

If $\calR$ is one of $\{$Borda, Condorcet, Copeland, Cup, Plurality with run-off, Bucklin$\}$,
then,
for each Plurality voter which arrives and approves $a$,
we create a voter ranking $a$ on top and then $b$;
similarly,
for each Plurality voter which arrives and approves $b$,
we create a voter ranking $b$ on top and then $a$.
Notice that, in these cases,
the $\calR$ winner are equivalent to the Plurality winners.

If $\calR$ is $t$-Approval,
then we shall create $2t - 2$ new candidates $c_1, \ldots, c_{2t - 2}$,
and,
for each Plurality voter which arrives and approves $a$,
we create one voter ranking $a, c_1, \ldots, c_{t - 1}$ on top,
and another voter ranking $a, b, c_t, \ldots, c_{2t - 3}$ on top;
similarly,
for each Plurality voter which arrives and approves $b$,
we create one voter ranking $b, c_1, \ldots, c_{t - 1}$ on top,
and another voter ranking $b, a, c_t, \ldots, c_{2t - 2}$ on top;
Notice that in this case also,
the $\calR$ winner are equivalent to the Plurality winners.
\end{proof}

We mention that in our lower bound for Plurality which we describe next,
we assume,
as it is usual in studying distributed streams,
that there is no spontaneous communication;
that is,
the center can initiate communication only as a result of receiving a message from the sites,
and each site can initiate communication only as a result of receiving a stream item or a message from the center.

Now we are ready to state our lower bound for Plurality;
the proof of the corresponding theorem (that is, \Cref{theorem:lowerbound})
appears at the end of the section,
and is based on \Cref{lem:lowerbound1} and \Cref{lem:lowerbound2}.
Recall that for \textsc{Plurality-winner-tracking}, \Cref{theorem:plurality1} provides an upper bound of $O((\epsilon^{-1} \sqrt{k} + k) \log n{\cdot\log m})$.

\begin{theorem}\label{theorem:lowerbound}
	Any randomized protocol for \textsc{Plurality-winner-tracking}
	uses at least $\Omega((\epsilon^{-1} \sqrt{k} + k) \log n / \log k)$ words of communication,
	even when there are only two candidates.
\end{theorem}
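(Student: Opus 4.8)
The plan is to prove the two summands $\Omega(\eps^{-1}\sqrt{k}\log n/\log k)$ and $\Omega(k\log n/\log k)$ separately — these are \Cref{lem:lowerbound1} and \Cref{lem:lowerbound2} — and then combine them using that $\eps^{-1}\sqrt{k}+k=\Theta(\max\{\eps^{-1}\sqrt{k},k\})$, where the first term dominates exactly when $k\le \eps^{-2}$. By the preceding lemma it suffices to lower bound \textsc{Plurality-winner-tracking} with two candidates $a$ and $b$; there, being an $\eps$-winner means leading, or trailing by at most $\eps n$, so the task is equivalent to online \emph{sign tracking}: whenever $|\score(a)-\score(b)|>\eps n$ the center must output the current leader. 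As usual for randomized lower bounds, I would fix a hard input distribution and argue against deterministic protocols via Yao's principle.

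For each lemma I would construct an adversarial distributed stream in $L=\Theta(\log n/\log k)$ \emph{epochs}, where the total number of voters grows by roughly a factor of $k$ per epoch, so epoch $\ell$ ends with $n_\ell\approx k^{\ell}$ voters and $n_L\approx n$. At the start of epoch $\ell$ the two candidates are (nearly) tied; then a fresh instance of a one-shot multiparty problem over the $k$ sites is fed in, whose answer decides the sign of the lead at the end of the epoch and therefore must be reflected in the center's declared winner. For \Cref{lem:lowerbound2} this one-shot problem is a distributed count/gap problem with randomized complexity $\Omega(\eps^{-1}\sqrt{k})$ (in the spirit of the count-tracking lower bounds of Huang et al.~\cite{huang2012randomized}), and for \Cref{lem:lowerbound1} it is a ``which site is special'' / promised-majority problem with complexity $\Omega(k)$. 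Two points drive the design: the instances across epochs are mutually independent, and the slack $\eps n_\ell$ the protocol is allowed in epoch $\ell$ is kept strictly below the margin created by that epoch's instance — which is precisely why the count must grow by a polynomial factor in $k$ between epochs (each epoch has to absorb $\Omega(k)$, resp.\ $\Omega(\eps^{-1}\sqrt{k})$, ``distinguishing'' voters while staying geometrically separated from the previous one), yielding the $\Theta(\log n/\log k)$ epoch count.

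The heart of the proof is a direct-sum / round-elimination step showing that a single adaptive protocol cannot amortize work across epochs. Since there is no spontaneous communication and the stream is revealed online, the transcript produced during epochs $1,\dots,\ell-1$ is independent of the private input of epoch $\ell$; hence, conditioning on that prefix transcript and averaging, correctness at the end of epoch $\ell$ forces the protocol to spend, in expectation, at least the one-shot communication cost during epoch $\ell$. Summing over the $L$ epochs gives the bound. I expect this step — in particular, arguing cleanly that ``correct at the end of epoch $\ell$'' lets one read the $\ell$-th answer off the transcript, and that conditioning on the prefix does not destroy the hardness of the remaining instance — to be the main obstacle; the per-epoch constructions and the arithmetic checking $\eps n_\ell$ against the induced margin should be routine. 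Combining \Cref{lem:lowerbound1} and \Cref{lem:lowerbound2} over the regimes $k\le\eps^{-2}$ and $k\ge\eps^{-2}$ then gives $\Omega((\eps^{-1}\sqrt{k}+k)\log n/\log k)$, and, as a by-product, improves the \textsc{Count-tracking} lower bound to $\Omega(k\log n/\log k)$ when $k\ge\eps^{-2}$ (since two count-trackers solve sign tracking).
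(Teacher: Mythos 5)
Your top-level plan matches the paper exactly: prove \Cref{lem:lowerbound1} and \Cref{lem:lowerbound2} separately and combine them by the case analysis $k\lessgtr\eps^{-2}$, noting that $\eps^{-1}\sqrt{k}+k=\Theta(\max\{\eps^{-1}\sqrt{k},k\})$. The observation that two-candidate Plurality tracking is essentially sign/count tracking, and the by-product \textsc{Count-tracking} bound, are also as in the paper. Where you diverge is in how the two lemmas themselves are proved, and that is where your proposal has a genuine gap.

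For \Cref{lem:lowerbound1} the paper does not use epochs or any direct-sum argument at all: it gives a black-box reduction \emph{from} \textsc{Count-tracking} \emph{to} \textsc{Plurality-winner-tracking} (the center simulates a ``ghost site'' receiving votes for $c_2$ and reads the count off the winner flips), so the known $\Omega(\eps^{-1}\sqrt{k}\log n)$ lower bound of Huang et al.\ transfers directly --- yielding the stronger bound without the $1/\log k$ loss your epoch scheme would incur. For \Cref{lem:lowerbound2} the paper's phase construction is close to your epoch construction (counts growing by a factor $\approx k$ per phase, $s=\Omega(\log n/\log k)$ phases), but the per-phase cost is \emph{not} extracted by embedding a one-shot $\Omega(k)$ communication problem and proving a direct-sum theorem. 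Instead it is a direct indistinguishability argument: for each site $S_j$ and phase $i$, one compares the real stream to an alternative stream in which only $S_j$ receives the phase-$i$ voters; since there is no spontaneous communication, $S_j$'s local view is identical in both worlds unless communication occurs, and in the alternative world the unique $\eps$-winner changes, so the center must hear from $S_j$ with constant probability. Summing over the $sk$ (site, phase) pairs gives $\Omega(k\log n/\log k)$ with no conditioning on prefix transcripts and no appeal to Yao's principle.

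The gap in your proposal is precisely the step you yourself flag as ``the main obstacle'': the claim that correctness at the end of epoch $\ell$ forces the protocol to pay, in expectation conditioned on the prefix transcript, the full one-shot communication cost of that epoch's embedded problem. This is a direct-sum statement for adaptive continuous-monitoring protocols, and it is not routine: conditioning on the prefix transcript correlates the sites' states, and you must argue that the residual protocol is still a valid (distributionally correct) protocol for the embedded instance under the conditional distribution. You give no argument for this, and it is the entire content of the lower bound in your formulation. Before attempting it, note that both halves can be obtained without it: \Cref{lem:lowerbound1} by reducing from an already-known continuous lower bound (which has the $\log n$ factor built in, so no amortization argument is needed), and \Cref{lem:lowerbound2} by the elementary per-site, per-phase indistinguishability argument sketched above.
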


The next lemma shows a lower bound when $k < \epsilon^{-2}$.

\begin{lemma}\label{lem:lowerbound1}
  If $k < \epsilon^{-2}$,
  then any randomized protocol for \textsc{Plurality-winner-tracking}
	uses at least $\Omega(\epsilon^{-1} \sqrt{k} \log n)$ words of communication,
	even when there are only two candidates.
\end{lemma}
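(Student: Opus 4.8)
The plan is to show that an $\eps$-winner-tracking protocol for two candidates must, at each of $\Theta(\log n)$ geometrically growing ``scales'', solve a hard instance of a counting-type problem, and that the work done at the different scales cannot be amortized. Fix two candidates $a,b$. A convenient way to create a hard scale is to keep the electorate in a near-tie, so that the current margin $\mu=\#a-\#b$ satisfies $|\mu|=O(\eps n)$, and to make $\mu$ move adversarially. By the definition of an $\eps$-winner (adding at most $\eps n$ extra voters), with two candidates one checks that $a$ is the \emph{unique} $\eps$-winner exactly when $\mu>\eps n$ and $b$ is the unique $\eps$-winner exactly when $\mu<-\eps n$; hence a correct center must, at every point in time, report the leading candidate (equivalently $\mathrm{sign}\,\mu$) whenever $|\mu|>\eps n$ --- that is, it must track the margin up to an additive $\Theta(\eps n)$, which is exactly one ``phase'' of distributed \textsc{Count-tracking} at the current scale. (Equivalently one can preload $b$ with a publicly known number $M$ of votes and then stream only $a$-votes; then $a$ becomes an $\eps$-winner precisely as the $a$-count crosses $M$, up to a relative slack $\Theta(\eps)$.)

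For a single scale, where the count stays $\Theta(M)$ and the allowed slack is $\Theta(\eps M)$, I would invoke (or re-derive along the same lines) the randomized lower bound of Huang et al.~\cite{huang2012randomized} for \textsc{Count-tracking}, which yields $\Omega(\eps^{-1}\sqrt k)$ words in the regime $k<\eps^{-2}$. The $\eps^{-1}\sqrt k$ shape arises because the scale decomposes into $\Theta(1/\eps)$ independent ``sub-steps'', in each of which $\mu$ moves by $\pm\Theta(\eps M/\sqrt k)$ through a small imbalance spread over a freshly chosen random set of $\Theta(\sqrt k)$ sites, so determining the sign of a single sub-step is a distributed gap-counting problem of randomized complexity $\Omega(\sqrt k)$ (reducible from a multiparty Gap-Hamming-type problem); moreover $k<\eps^{-2}$ is exactly the regime in which the total swing $\Theta(1/\eps)\cdot\Theta(\eps M/\sqrt k)=\Theta(M/\sqrt k)$ that the adversary can accumulate exceeds the slack $\Theta(\eps M)$, so a center that fails to learn the sign of more than $O(\sqrt k)$ of the $\Theta(1/\eps)$ sub-steps can be driven into an error, and must therefore (essentially) resolve all of them.

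Finally I would compose $\Theta(\log n)$ such scales into one stream: take $M_1=\poly(k,1/\eps)$ and scales $M_{i+1}=3M_i$ up to $M_\ell\le n$ (so $\ell=\Theta(\log n)$ once $n$ is large enough), where scale $i$ contributes $\Theta(M_i)$ voters realizing a rescaled copy of the single-scale hard instance and the only thing carried from one scale to the next is a publicly determined margin. The key claim is that the communication over these disjoint time windows adds up, which I would prove by a scale-by-scale conditioning argument: condition on the entire transcript produced before scale $i$ begins; given this, the starting margin of scale $i$ is fixed but its $\Theta(1/\eps)$ sub-step signs are still i.i.d.\ uniform, so during scale $i$'s window the protocol faces a fresh copy of the single-scale hard instance and, by the previous paragraph, must transmit $\Omega(\eps^{-1}\sqrt k)$ words in conditional expectation; taking expectations over the conditioning and summing over $i=1,\dots,\ell$ gives $\Omega(\eps^{-1}\sqrt k\log n)$. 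Making this composition rigorous for an \emph{adaptive} randomized protocol --- in particular, ruling out that it reuses earlier communication to shortcut a later scale --- is the main obstacle; it rests on the independence of the per-scale hard distributions and on the standard ``no spontaneous communication'' assumption, which lets communication be charged unambiguously to time windows. A routine side point is that whenever $\mu$ lies in the band $[-\eps n,\eps n]$ both $a$ and $b$ are legitimate $\eps$-winners and the protocol may answer either way, so the sub-steps must be spaced (as the $\Theta(\eps M_i)$-wide construction above ensures) so that each required answer falls strictly outside that band.
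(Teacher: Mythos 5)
Your opening observation is the right one and is also where the paper starts: with two candidates, correctly reporting an $\epsilon$-winner forces the center to know the sign of the margin whenever $|\#c_1-\#c_2|>\epsilon n$, so winner tracking is at least as hard as a form of approximate counting. But from that point on your plan diverges from the paper's proof and, more importantly, leaves its two load-bearing steps unproven. The paper does \emph{not} construct a hard distribution for Plurality at all: it runs the reduction in the opposite packaging, namely it assumes a Plurality tracker with $o(\epsilon^{-1}\sqrt{k}\log n)$ communication and uses it to build an actual \textsc{Count-tracking} protocol. The trick is that the center simulates a $(k{+}1)$-st ``ghost'' site into which it injects $c_2$-votes of its own; each time the declared winner flips to $c_1$, the center pushes $s(c_2)$ up to the next power of $(1+3\delta)$ until the winner flips back, and the flip pattern brackets the true count within a $(1+\epsilon)$ factor. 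This yields a Count-tracking protocol with the same asymptotic cost, contradicting the already-known lower bound $\Omega(\epsilon^{-1}\sqrt{k}\log n)$ of Huang et al.\ (their Theorem~2.4, which in the regime $k<\epsilon^{-2}$ \emph{already contains the $\log n$ factor}). No new hard instance, no Gap-Hamming argument, and no multi-scale direct sum are needed.

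Your proposal instead re-opens the Count-tracking lower bound and tries to re-derive it inside the Plurality setting, and both of the steps that would make this work are missing. First, the per-scale bound $\Omega(\epsilon^{-1}\sqrt{k})$ is only asserted via an unspecified reduction from ``a multiparty Gap-Hamming-type problem''; Huang et al.'s theorem is stated for a full Count-tracking stream, not for an isolated scale, so extracting a reusable single-scale statement is itself nontrivial. Second, the composition of $\Theta(\log n)$ scales against an \emph{adaptive} randomized protocol --- ruling out that communication spent early is reused to shortcut later scales --- is precisely the hard technical content of such lower bounds, and you explicitly flag it as ``the main obstacle'' without resolving it. As written this is a plan, not a proof. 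A smaller but real issue is that preloading $b$ with a publicly known $M$ only yields a single threshold test (``has the $a$-count crossed $M$?''), whereas inheriting the full Count-tracking lower bound requires the comparison level to move with the stream; that is exactly what the paper's adaptive ghost-voter injection accomplishes, and your construction has no substitute for it. The fix is simple: keep your first paragraph, then follow the paper and spend your effort on the reduction and its accuracy analysis rather than on re-proving Huang et al.
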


\begin{proof}
We reduce \textsc{Count-tracking} to \textsc{Plurality-winner-tracking}.
To this end,
we assume,
towards a contradiction,
that there is a protocol for \textsc{Plurality-winner-tracking} with $o(\epsilon^{-1} \sqrt{k} \log n)$
communication complexity,
and describe a protocol with the same communication complexity for \textsc{Count-tracking}.
For $k < \epsilon^{-2}$ this leads to a contradiction,
since there is a lower bound of $\Omega(\epsilon^{-1} \sqrt{k} \log n)$
for \textsc{Count-tracking} where $k < \epsilon^{-2}$~\cite[Theorem~2.4]{huang2012randomized}.
	
The distributed stream for \textsc{Count-tracking} contains items of only one type,
and a protocol for \textsc{Count-tracking} maintains a value $n'$
such that $n' \in n \pm \epsilon n$,
where $n$ is the number of items in the distributed stream.
We treat those items as voters, each of which is approving the candidate~$c_1$.

So, our reduction would work as follows:
  We assume the existence of a protocol for \textsc{Plurality-winner-tracking} in order to design a protocol for \textsc{Count-tracking}.
  Specifically, in order to design a protocol for \textsc{Count-tracking} with $k$ sites,
  we will use (the assumed) protocol for \textsc{Plurality-winner-tracking} with $k + 1$ sites.
The idea would be that the $k$ sites for the \textsc{Count-tracking} protocol would be mapped to $k$ sites of the \textsc{Plurality-winner-tracking} protocol,
where an item received by a site of the \textsc{Count-tracking} protocol would be mapped to a vote for candidate $c_1$, delivered to a site of the \textsc{Plurality-winner-tracking}.
Then, we center of the protocol for \textsc{Count-tracking} which we are designing would directly send votes for candidate $c_2$ to the remaining site, at certain times.
This would cause the protocol for \textsc{Plurality-winner-tracking} with $k + 1$ sites to swap winners back and forth from declaring $c_1$ to be a winner to declaring $c_2$ to be a winner;
intercepting those flips, we would be able to approximate the number of message, thus indeed describe a protocol for \textsc{Count-tracking}.

Below we explain how does the center operates in our protocol for \textsc{Count-tracking}.
The general idea of the reduction is as follows:
  For the center to simulate another site,
  called a \emph{ghost site} (we use this name to emphasize that it is not a ``real'' site, but just a ``virtual'' site which is being simulated by the center, as part of the center's internal computation),
  to which the center will send \emph{ghost voters} (again, not real voters, but only simulated by the center).
The center will simulate a protocol for Plurality with voters approving $c_1$ going to the $k$ ``real'' sites,
and simulated voters approving $c_2$ going to the ghost site.

Specifically,
the center has three parts (here, by parts we mean the implementation modules of the center; that is, the center is a center for \textsc{Count-tracking} which internally uses the following parts/modules in its internal computation).
The first part is a center for a \textsc{Plurality-winner-tracking} protocol operating on $k + 1$ sites.
The second part is a site in a \textsc{Plurality-winner-tracking} protocol; this is the ghost site.
The third part is for the center to inspect the \textsc{Plurality-winner-tracking} protocol from above,
and (using knowledge about the current winner) to send voters approving the candidate $c_2$ to the ghost site.

Let us denote the number of voters voting for $c_1$ ($c_2$) by $s(c_1)$ (respectively, $s(c_2)$).
Set $\delta = \epsilon / 10$.
The protocol for \textsc{Plurality-winner-tracking} will work with respect to approximation $\delta$,
and will consist of $k + 1$ sites.
Next we describe the logic of the third part of the center.
The estimation for \textsc{Count-tracking} will be $\est = (1 + 3 \delta) s(c_2)$
(note that only the ghost site receives voters approving $c_2$,
hence the center knows $s(c_2)$ exactly).
	
Before there is any communication from the (real) sites to the center,
we set $s(c_2) = 0$.
Then,
at some point in time there will be some communication from the sites to the center indicating that some voters approving $c_1$ arrived;
specifically, the first part of the center would declare $c_1$  as the winner of the election. 
More generally,
our protocol works in phases,
where a phase starts when the center ``flip''s its estimation;
that is, the (first part of the) center changes the estimation for the Plurality winner from $c_2$ to $c_1$.
When such a flip occurs,
the center sends some ghost voters (approving $c_2$) to the ghost site until
$s(c_2) = (1 + 3\delta)^i$ (for some $i$) and	a flip (from $c_1$ back to $c_2$) occurs.
(That is, we send ghost voters until a flip occurs and then send some additional voters until we reach a power of $1 + 3\delta$;
reaching this power of $1 + 3\delta$ is actually not needed, but it does not affect the communication complexity and it makes the analysis cleaner.)
We assume, as is usually done in distributed streams, that communication and internal computation happens instantly.
Thus, we have that $c_2$ is always the winner of the \textsc{Plurality-winner-tracking} protocol.
This finishes the description of the reduction.
	
Next we argue that our estimation (for \textsc{Count-tracking}) is accurate.
Specifically, we will show that $s(c_1) \le \est \le (1 + \epsilon) s(c_1)$.
As $c_2$ is always the winner,
it always holds that $s(c_{2}) + \delta(s(c_{1}) + s(c_{2})) \ge s(c_{1})$.
Since $\delta <  1 / 10$ (as $\epsilon < 1$),
it holds that:
$$s(c_{1})\le\frac{1+\delta}{1-\delta}\cdot s(c_{2}) < \left(1+3\delta\right)\cdot s(c_{2})=\est~.$$
Fix $s(c_2)=(1+3\delta)^i$.
Note that when $s(c_2)$ was equal to $(1+3\delta)^{i-1}$,
the protocol for \textsc{Plurality-winner-tracking} considered $c_1$ as the winner.
Hence,
$s(c_{1})+\delta(s(c_{1})+(1+3\delta)^{i-1})\ge(1+3\delta)^{i-1}$;
therefore,
\[
	s(c_{1})\ge\frac{1-\delta}{1+\delta}(1+3\delta)^{i-1}\ge\frac{(1+3\delta)}{(1+3\delta)^{3}}(1+3\delta)^{i}\ge\frac{\est}{1+\epsilon}~.
\]
Note that,
until the next flip,
$s(c_1)$ can only grow, while our estimation remains unchanged.
Hence, it will still hold that $\est\le (1+\epsilon) s(c_1)$.
Finally, we have that the communication of our protocol is bounded by
$o(\delta^{-1} \sqrt{k+1} \log (s(c_1)+s(c_2)))=o(\epsilon^{-1} \sqrt{k} \log n)$,
which contradicts the lower bound for \textsc{Count-tracking} discussed above.
\end{proof}

The next lemma is especially interesting for $k \ge \epsilon^{-2}$.

\begin{lemma}\label{lem:lowerbound2}
  Any randomized protocol for \textsc{Plurality-winner-tracking}
  uses at least $\Omega(k \log n / \log k)$ words of communication,
  even when there are only two candidates.
\end{lemma}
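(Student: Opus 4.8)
The plan is to build a hard input distribution consisting of $T=\Theta(\log n/\log k)$ \emph{epochs} running at geometrically increasing scales, and to argue that an \emph{online} protocol is forced to solve a fresh, essentially independent $\Omega(k)$-word communication task during each epoch. Since the epochs occupy disjoint time intervals and carry independent ``hard bits'', these costs add up, yielding $\Omega(kT)=\Omega(k\log n/\log k)$ words. I would phrase the construction directly for Plurality with two candidates $c_1,c_2$ (it also proves the analogous bound for \textsc{Count-tracking}, with batches of plain items replacing ballots, which is where the stated improvement to the \textsc{Count-tracking} lower bound comes from).

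Concretely, let the scale of epoch $\ell$ be $N_\ell=\Theta(k^\ell)$, so that $N_T\le n$ forces $T=\Theta(\log n/\log k)$; we may assume $n\ge k^{c}$ for a suitable constant, since otherwise $\log n/\log k=O(1)$ and the claim is just the $\Omega(k)$ bound of Huang et al.~\cite[Theorem~2.3]{huang2012randomized}. Between consecutive epochs I insert a deterministic ``reset'' block of ballots that restores a perfectly tied election on $\Theta(N_\ell)$ voters (this is possible because the input distribution, being chosen by us, may depend on all bits revealed so far); while the election is tied both candidates are $\eps$-winners, so no correctness requirement is triggered there. In epoch $\ell$ every site $i$ then receives a batch of $N_\ell/k$ ballots that is cast \emph{entirely} for $c_1$ or \emph{entirely} for $c_2$ according to a private bit $x_i^{(\ell)}$, so that the net $(c_1\!-\!c_2)$ margin at the end of epoch $\ell$ equals $\tfrac{N_\ell}{k}\bigl(2\sum_i x_i^{(\ell)}-k\bigr)$ on $\Theta(N_\ell)$ voters. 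I would draw $x^{(\ell)}$ from (a gap/promise version of) the hard distribution behind Huang et al.'s $\Omega(k)$ bound, namely one in which $\sum_i x_i^{(\ell)}$ is either at least $\tfrac{k}{2}(1+2\eps)$ or at most $\tfrac{k}{2}(1-2\eps)$. This guarantees simultaneously that the winner at the end of epoch $\ell$ is unambiguous, with margin exceeding $\eps$ times the current number of voters (so the center \emph{must} output it with probability $\ge 0.9$), and that identifying this winner is exactly a distributed gap-\textsc{Majority} task, which needs $\Omega(k)$ words of communication among the $k$ sites and the center, even for randomized protocols and even with the promise shared in advance (for $\eps$ bounded away from $0$ this is the standard gap-\textsc{Majority} bound; for small $\eps$ it only gets harder, being at least as hard as exact distributed \textsc{Majority}, which is still $\Omega(k)$; and $N_\ell\ge\Omega(k)$ keeps all batch sizes valid integers).

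The remaining step is the direct-sum / no-amortization argument. Fix any randomized protocol and condition on the entire transcript produced before epoch $\ell$ begins; this transcript is a function only of the reset blocks and of $x^{(1)},\dots,x^{(\ell-1)}$, hence independent of $x^{(\ell)}$. The messages exchanged while the epoch-$\ell$ batches arrive, with this now-fixed prior transcript serving as ``advice'', form a randomized protocol that outputs gap-\textsc{Majority}$(x^{(\ell)})$ with probability $\ge 0.9$ over $x^{(\ell)}$; by Yao's principle plus an averaging step over the fixing of the prior transcript, its expected communication is $\Omega(k)$ words. Summing over $\ell=1,\dots,T$ and using linearity of expectation bounds the total expected communication by $\Omega(kT)=\Omega(k\log n/\log k)$ words, as claimed; the reduction of \Cref{lem:lowerbound1} then also gives the bound for any $\calR$ we consider, and the item-batch variant gives it for \textsc{Count-tracking}. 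I expect this last paragraph to be the main obstacle: one must design the epochs so that the hardness of epoch $\ell$ is genuinely insensitive to arbitrary prior protocol state (this is exactly why the epoch-$\ell$ bits are fresh and the pre-epoch configuration is forced to a canonical tied state), invoke online correctness at a single well-defined time per epoch so that messages can be charged to epochs without double counting, and verify the ``$\Omega(k)$ \emph{words}'' (rather than $\Omega(k)$ bits) granularity — which I would obtain by noting that the gap-\textsc{Majority} instances embedded at the larger scales naturally carry $\Theta(\log n)$ bits per site.
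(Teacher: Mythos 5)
There is a genuine gap at the heart of your argument: the claim that your per-epoch task --- distinguishing $\sum_i x_i^{(\ell)}\ge\tfrac{k}{2}(1+2\eps)$ from $\sum_i x_i^{(\ell)}\le\tfrac{k}{2}(1-2\eps)$, with one bit per site --- has randomized communication complexity $\Omega(k)$. It does not. With public randomness the center can query $O(\eps^{-2})$ uniformly random sites for their bits and decide by a Chernoff bound, so the static gap-\textsc{Majority} problem costs $O(\eps^{-2})$ words; in the regime $k\ge\eps^{-2}$, which is exactly where this lemma matters, that is $o(k)$. You cannot repair this by dropping the gap promise: the exact version of distributed \textsc{Majority} is indeed $\Omega(k)$-hard, but on near-tied inputs both candidates are $\eps$-winners, so the tracking protocol is under no obligation to identify the true majority there, and your reduction no longer forces the protocol to solve the hard instances. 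This is a real tension in your framework, not a technicality --- any argument that reduces each epoch to a \emph{one-shot} $k$-party communication problem whose answer is only forced when the margin exceeds $\eps n$ will run into the same sampling upper bound. Consequently your direct-sum step, even if executed perfectly, would only yield $\Omega(T\cdot\eps^{-2})$ rather than $\Omega(Tk)$.

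The paper's proof keeps your epoch skeleton (phases at scales growing like $k^i$, giving $\Omega(\log n/\log k)$ phases) but obtains the $\Omega(k)$ per phase by an entirely different mechanism that is genuinely online rather than static. In phase $i$, \emph{every} site receives a batch of $x_i=(1+3\eps)k\,y_{i-1}$ votes for the current candidate, where $y_{i-1}$ is the per-site count so far; the batch at a \emph{single} site already exceeds $(1+\Omega(\eps))$ times the entire previous electorate. One then considers, for each site $S_j$ separately, the alternative scenario in which only $S_j$ receives its batch and no other votes arrive. Under the no-spontaneous-communication assumption the center, having heard nothing, behaves identically in both scenarios, yet in the alternative scenario the unique $\eps$-winner has flipped; hence $S_j$ must initiate communication with constant probability, in every phase, or the protocol errs at one of two query times. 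This forces $\Omega(1)$ expected messages per site per phase, i.e.\ $\Omega(k)$ per phase, and defeats the sampling shortcut precisely because the center has no trigger telling it when or whom to sample. If you want to salvage your write-up, the fix is to replace the gap-\textsc{Majority} reduction with this single-site indistinguishability argument (and to state the no-spontaneous-communication assumption, which your current proof never uses but which the paper's argument relies on).
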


\begin{proof}
We assume that $\epsilon<\frac{1}{3}$.
Consider a protocol for \textsc{Plurality-winner-tracking} which is correct with probability $\frac23$ on every input.
Next we describe a distributed stream of voters which come to the sites.
Specifically,
the stream consists of $s$ phases.
Let $x_{1} = 1$,
$y_{1} = 1$,
$x_{i} = \left(1 + 3 \epsilon\right) \cdot k \cdot y_{i-1}$,
and $y_{i} = y_{i - 1} + x_{i}$.
During the $i$'th phase,
$x_{i}$ voters will go to each site and vote for the candidate $c_{\modd{i}}$. 
Note that after the $i$'th phase,
exactly $y_{i}$ voters voted at each site.
The total number of votes for $c_{\modd{i}}$ is at least $k\cdot x_{i}$,
while the total number of votes for $c_{\modd{i-1}}$ is at most $k\cdot y_{i - 1}$.
In particular,
$c_{\modd{i}}$ is a unique $\epsilon$-winner.

Note that
\[
y_{i}=y_{i-1}+x_{i}=y_{i-1}+\left(1+3\epsilon\right)\cdot k\cdot y_{i-1}=\left(1+\left(1+3\epsilon\right)\cdot k\right)\cdot y_{i-1}=\left(1+\left(1+3\epsilon\right)\cdot k\right)^{i-1}\cdot y_{1}\le\left(3k\right)^{i-1}~,
\]
thus the total number of voters is bounded by $n=k\cdot y_{s}<(3 k)^{s}$.
In particular, $s=\Omega(\frac{\log n}{\log k})$. 

Next consider the $j$'s site $S_j$ during the phase $i$.
Let $Y_{i,j}$ be the event that some communication between the center and $S_j$ occurs.
Let $Z_{i,j}$ be the event that the center initiates communication with $S_j$.
Let $X_{i,j}$ be the event that $S_j$ initiates communication with the center,
conditioned on the event that the center does not initiate communication with $S_j$
(that is, $Y_{i,j}$ conditioned on $\overline{Z_{i,j}}$).
We argue that $\mathbb{E}[X_{i,j}]=\Omega(1)$.
Before the $i$'th phases starts, $c_{\modd{i-1}}$ is the unique $\epsilon$-winner.

Consider an alternative scenario where, after the end of the $i-1$'th phase,
$x_{i}$ voters come to $S_j$
(and vote for $c_{\modd{i}}$),
while no additional voters arrive.
In this alternative scenario the center will not initiate communication with $S_j$,
as from its point of view nothing have changed since the end of the $(i-1)$'s phase
(since it did not receive any new messages).
Note also that in the alternative scenario,
$c_{\modd{i}}$ is the unique $\epsilon$-winner.
This is since
\begin{align*}
k\cdot y_{i-1}+\epsilon(k\cdot y_{i-1}+x_{i}) & =k\cdot y_{i-1}+\epsilon(k\cdot y_{i-1}+\left(1+3\epsilon\right)\cdot k\cdot y_{i-1})\\
& =k\cdot y_{i-1}\left(1+\epsilon(1+\left(1+3\epsilon\right))\right)\\
& =k\cdot y_{i-1}\left(1+2\epsilon+3\epsilon^{2}\right)<x_{i}.
\end{align*}

Thus,
if $S_j$ will not initiate communication with the center,
then,
in the alternative scenario,
the center would not hold the right estimation both at the end of the $i-1$'th phase and at the end of the $i$'th phase.
This is so since it will have the same estimation,
while there are different unique $\epsilon$-winners at those times.
Therefore,
the probability that the center is right in both of these times is bounded by $\Pr\left[X_{i,j}\right]$.
As the center has constant probability to have the right estimation twice,
we conclude that $\mathbb{E}[X_{i,j}]=\Omega(1)$.

Let us go back to our original scenario. 
Note that, from the point of view of $S_j$, both scenarios are the same (unless the center initiates communication). In particular $\Pr\left[X_{i,j}\mid\overline{Z_{i,j}}\right]=\Omega(1)$.
Set $\Pr\left[Z_{i,j}\right]=\alpha$.
Then, we have that:
\begin{align*}
\mathbb{E}\left[Y_{i,j}\right]=\Pr\left[Y_{i,j}\right] & =\Pr\left[Z_{i,j}\right]+\Pr\left[\overline{Z_{i,j}}\right]\Pr\left[X_{i,j}\mid\overline{Z_{i,j}}\right]\\
& =\alpha+(1-\alpha)\cdot\Omega(1)=\Omega(1)~.
\end{align*}
The total communication used during the whole protocol can be lower bounded by
\linebreak
$\sum_{i=1}^{s}\sum_{j=1}^{k}\mathbb{E}\left[Y_{i,j}\right]=\Omega(sk)=\Omega\left(\frac{k\log n}{\log k}\right)$.
\end{proof}

We are ready to prove \Cref{theorem:lowerbound}.

\begin{proof}[Proof of \Cref{theorem:lowerbound}~]
If $k<\epsilon^{-2}$, then \Cref{lem:lowerbound1}
provides us with a lower bound of $\Omega(\frac{\sqrt{k}}{\epsilon}\log n)=\Omega\left((\frac{\sqrt{k}}{\epsilon}+k)\frac{\log n}{\log k}\right)~.$
Otherwise ($k\ge \epsilon^{-2}$),
using \Cref{lem:lowerbound2} we get a lower bound of $\Omega(\frac{k\log n}{\log k})=\Omega\left((\frac{\sqrt{k}}{\epsilon}+k)\frac{\log n}{\log k}\right)$ .
\end{proof}

\begin{remark}\label{remark:byproduct}
Notice that \Cref{lem:lowerbound2} implies a $\Omega(\frac{k\log n}{\log k})$ lower bound for the \textsc{Count-tracking} problem.
The \textsc{Count-tracking} problem is a central problem in distributed streams, where the goal is to continuously maintain
a counter which is at most $\epsilon n$ far from the actual number of items arriving to the stream.
For the \textsc{Count-tracking} problem in the regime where $k\ge \epsilon^{-2}$,
Huang et al.~\cite[Theorem~2.3]{huang2012randomized} give a lower bound of $\Omega(k)$.

\Cref{lem:lowerbound2} relates to \textsc{Count-tracking},
As there is a reduction from \textsc{Plurality-winner-tracking} with two candidates to \textsc{Count-tracking}:
  to implement a protocol for \textsc{Plurality-winner-tracking} with two candidates
  it is sufficient to use two protocols for \textsc{Count-tracking} with $\epsilon' = \epsilon / 2$,
  one for each candidate,
  and to report as winner the candidate corresponding to the larger counter.

Thus,
we conclude that \Cref{lem:lowerbound2} implies a $\Omega(\frac{k\log n}{\log k})$ lower bound for the \textsc{Count-tracking} problem,
thus improving the state of the art for this problem.
\end{remark}

\subsection{Deterministic Lower Bound for \textsc{Approval-winner-tracking}}

Next we prove a lower bound on the communication of a deterministic protocol for \textsc{Approval-winner-tracking}. 
Recall the checkpoints-based deterministic protocol described within the proof of \Cref{theorem:approbalNOt}:
  the protocol has $O(\epsilon^{-1} \log n)$ checkpoints,
  and in each checkpoint,
  each site sends $\log(\frac{4k}{\epsilon})$ bits per candidate.
Thus,
if we measure the communication in bits (instead of words as in \Cref{theorem:approbalNOt}),
we get that the total cost of that protocol is $
O\left(\epsilon^{-1}\log n\cdot m\cdot k\cdot\log(\frac{4k}{\epsilon})\right)=
O\left(\frac{m\cdot k\cdot\log(\frac{k}{\epsilon})}{\epsilon}\cdot\log n\right)$.
In this section we prove \Cref{thm:LBapprobval},
showing that protocol (the one from \Cref{theorem:approbalNOt}) to be almost optimal in the deterministic regime. 

\begin{theorem}\label{thm:LBapprobval}
	For $\epsilon< 1/16$,
	and for large enough $m$,
	any deterministic protocol for \textsc{Approval-winner-tracking}
	uses at least
	$\Omega\left(\frac{mk}{\epsilon}\cdot\log\left(\frac{n}{k}\right)\right)$
	bits of communication.
\end{theorem}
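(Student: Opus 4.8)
The plan is to exhibit a single adversarial distributed stream on which any correct deterministic protocol is forced to transmit a fresh $\Omega(mk)$ bits in each of $L:=\Theta(\epsilon^{-1}\log(n/k))$ consecutive \emph{batches}. I will arrange that the number of voters present after batch $t$ is $c_t$, with $c_0=\Theta(k/\epsilon)$ and $c_t=(1+\Theta(\epsilon))c_{t-1}$, so that $L$ batches carry the election from $c_0$ up to $n$ voters and batch $t$ introduces $\Theta(\epsilon c_{t-1})$ new voters, $\Delta_t:=\Theta(\epsilon c_{t-1}/k)$ of them to each site. Batch $t$ consists of three parts. The first part (``levelling'') is a block of voters whose number and ballots are a function only of batches $1,\dots,t-1$: each levelling voter approves all of $c_1,\dots,c_m$ but not a distinguished reference candidate $r$, which keeps the block cheap ($\Theta(\epsilon c_{t-1})$ voters) while re-equalising the real candidates with $r$. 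In the second part, a fixed set of exactly $\theta$ of the $k$ sites contributes $\Delta_t$ ballots approving $\{r\}$. In the third part (``signal''), the adversary hands site $j$ a string $W_t[j,\cdot]\in\{0,1\}^m$ (site $j$ sees only its own row) and site $j$ receives $\Delta_t$ ballots all approving $\{c_i : W_t[j,i]=1\}$. Fixing constants $p<\theta/k<\tau$ with $p<\tau<2p$ and $\theta/k$ their midpoint, and scaling the batch size to a suitable constant times $\epsilon c_{t-1}$, a routine score computation gives the gadget property: if $W_t$ is a \textsf{NO}-matrix (every column has exactly $pk$ ones) then, right after batch $t$, candidate $r$ is the \emph{unique} $\epsilon$-winner of the current election; whereas if some column of $W_t$ has at least $\tau k$ ones then $r$ is \emph{not} an $\epsilon$-winner. (The cheap, all-real-candidates levelling is exactly what stops the score of $r$ from drifting so far ahead over the batches that a single batch could no longer dethrone it.)

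Next I would establish a multiparty fooling-set lemma for this gadget, which is the ``No Strict Majority'' flavour of problem referred to in the introduction: there is a family $\mathcal S\subseteq\{0,1\}^{k\times m}$ of \textsf{NO}-matrices with $|\mathcal S|=2^{\Omega(mk)}$ such that for any two distinct $W,W'\in\mathcal S$ there is a site-wise combination $Z$ (each row of $Z$ equals the corresponding row of $W$ or of $W'$) with some column having at least $\tau k$ ones. Reading each column of a \textsf{NO}-matrix as a uniformly random $pk$-element subset of $[k]$, the union of two independent such subsets has size at least $\tau k$ except with probability $2^{-\Omega(k)}$ (a Chernoff bound, using that $\tau<2p-p^2$ so the bad event is a large deviation above the typical intersection $p^2k$); hence a uniformly random pair of \textsf{NO}-matrices is ``close in every column'' only with probability $2^{-\Omega(km)}$, and a deletion argument yields $\mathcal S$ of the claimed size. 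It is precisely here that the $2^{-\Omega(k)}$ large-deviation estimate, rather than a crude constant bound, is needed to get $2^{\Omega(mk)}$ instead of merely $2^{\Omega(m)}$.

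Finally I would combine these by a greedy direct-sum argument over the batches. Fix a correct deterministic protocol and process the batches in order, maintaining the invariant that batches $1,\dots,t-1$ have been pinned to \textsf{NO}-matrices from $\mathcal S$ (so the entire history, including batch $t$'s levelling block, is determined) and that the communication spent in batches $1,\dots,t-1$ is at least $(t-1)\log_2|\mathcal S|$. Conditioned on this fixed history $\sigma$, the protocol's behaviour during batch $t$'s signal block is a deterministic number-in-hand protocol on the input $W_t$, and since the center holds no private input its declared winner just after batch $t$ is a function of the batch-$t$ transcript alone. If two distinct $W,W'\in\mathcal S$ produced the same batch-$t$ transcript, then by the product (combinatorial-rectangle) structure of deterministic transcripts so would the combination $Z$ supplied by the fooling-set lemma; but on the \textsf{NO}-matrix $W$ the center must answer $r$, while on the \textsf{YES}-combination $Z$ it must \emph{not} answer $r$ — a contradiction. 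Hence batch $t$ induces at least $|\mathcal S|=2^{\Omega(mk)}$ distinct transcripts given the history, so some $W_t\in\mathcal S$ forces $\Omega(mk)$ bits in batch $t$; pinning that $W_t$ preserves the invariant. After all $L$ batches this exhibits one input on which the protocol sends $\Omega(L\cdot mk)=\Omega\!\big(\tfrac{mk}{\epsilon}\log(n/k)\big)$ bits.

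The main obstacle is the fooling-set lemma, together with the requirement that the gadget stay self-consistent across all batches: $r$ must remain the unique $\epsilon$-winner on \emph{every} \textsf{NO}-history while still being cheaply beatable by a single bad batch, which is why the thresholds $p,\theta/k,\tau$, the all-real-candidates levelling ballots, and the batch size $\Theta(\epsilon c_{t-1})$ all have to be chosen in tandem. A secondary technicality is that one needs $c_0=\Theta(k/\epsilon)$ rather than $\Theta(k)$, so the argument literally yields $\Omega\!\big(\tfrac{mk}{\epsilon}\log(n\epsilon/k)\big)$, which equals $\Omega\!\big(\tfrac{mk}{\epsilon}\log(n/k)\big)$ in the intended regime where $n$ is polynomially larger than $k/\epsilon$; the hypothesis ``$m$ large enough'' is used to absorb the $O(1)$ dummy/reference candidates of the gadget.
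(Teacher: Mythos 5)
Your proposal is correct and follows essentially the same route as the paper's proof: the paper likewise decomposes the stream into $\Theta(\epsilon^{-1}\log(n/k))$ geometrically growing phases, each consisting of a signal block encoding a $k$-player ``no strict majority'' instance followed by a complement-based reset block, and obtains the per-phase $\Omega(mk)$ cost from a $2^{\Omega(mk)}$ fooling set for that problem (constructed there by an explicit metric-packing/greedy-deletion argument with binomial estimates rather than your random-subsets-plus-Chernoff deletion), then combines phases via a product-fooling-set lemma for the $l$-fold conjunction rather than your round-by-round rectangle argument. The only substantive difference in execution is that the paper's gadget is not self-calibrated the way yours is (via the reference candidate $r$ and the thresholds $p<\theta/k<\tau$ making the declared $\epsilon$-winner directly reveal the answer), so it adds an explicit $O(k)$-bit validation step per phase in which each site reports whether it holds the larger of two possible counts for the declared winner; both variants give the same bound.
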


The proof of \Cref{thm:LBapprobval} is based on a reduction from a new problem in \emph{communication complexity};
specifically, the variant of communication complexity which is sometimes referred to as \emph{multiparty communication complexity}.
In this variant we have $k$ players, denoted by $P_{1},\dots,P_{k}$,
and each player $P_j$ possesses a (possibly different) string $x_j\in \left\{ 0,1\right\} ^{m}$.
The objective is to compute the outcome of a function $f:\left\{ 0,1\right\} ^{m\times k}\rightarrow\{0,1\}$
on the combined inputs of the players
(formally, on the concatenation of the $x_j$ strings).
The players follow some protocol, and can communicate by broadcasting bits.
Specifically, when a player broadcasts a bit $b$,
all other players receive $b$ and we add $1$ to the communication count.
The cost of a protocol is the maximum number of exchanged bits, over all possible inputs.
The deterministic communication complexity of the function $f$,
denoted by $D(f)$,
is the minimal cost of a deterministic protocol that computes $f$.
For additional details and overview of the field we refer to
the textbook of Kushilevitz and Nisan~\cite{KN97}
or to the book chapter by Razborov~\cite{Razborov2011}.

Next we define the \emph{No Strict Majority} problem: \NSM, in short.
In it,
we have $2k$ players and a parameter $\eps>0$.
Each player $P_j$ has an $m$-bit string $A_j\in\{0,1\}^m$.
The objective is to figure out if there is an index $i$ such that a strict majority of the players has $1$ in that index.
Formally,
\[
\NSMm\left(A_{1},\dots,A_{2k}\right)=\begin{cases}
0 & \exists i\,\left|\left\{ j\mid i\in A_{j}\right\} \right|\ge\left(1+\epsilon\right)k\\
1 & \forall i\,\left|\left\{ j\mid i\in A_{j}\right\} \right|\le k\\
\mbox{Don't Care} & \mbox{Otherwise}
\end{cases}~~,
\]
where by ``Don't Care'',
we mean that any outcome of the protocol is legitimate.
The role of the ``Don't Care'' here is to allow us to reduce approximation problems to \NSM, as strict boundaries will not allow for that.

We denote a conjunction of $l$ instances of $\NSMm$ by $\bigwedge_{i=1}^{l}\NSMm$.
That is,
we have $2k$ players, each of which is given $l$ strings of $m$ bits each
(formally, $P_j$ gets $A_{j,1},\dots,A_{j,l}$);
the outcome shall be $1$ if and only if,
for every index $s\in[1,l]$ and $i\in [1,m]$,
it holds that
$\left|\left\{ j\mid i\in A_{j,s}\right\} \right|\le k$.
An equivalent way to think about $\bigwedge_{i=1}^{l}\NSMm$ is that each of the players gets a binary $l \times m$ matrix
and we accept if there is no cell for which a majority of the players has a $1$ in.

The proof of the following lemma could be found in \Cref{appendix:CC}.
We mention that,
as far as we know,
the $\bigwedge_{i=1}^{l}\NSMm$ problem was not considered in the literature,
hence the following lemma is novel and might be useful in other contexts besides our current context,
that of communication-efficient protocols for monitoring election winners.

\begin{lemma}\label{lem:LB-NSD}
	$D\left(\bigwedge_{i=1}^{l}\NSMf\right)=\Omega(mkl)$ .
\end{lemma}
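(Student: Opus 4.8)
The plan is to prove the lower bound $D(\bigwedge_{i=1}^l \NSMf) = \Omega(mkl)$ via a fooling-set / information-theoretic argument, reducing to the single-instance case and then amplifying over the $l$ coordinates. First I would establish the base case $D(\NSMf) = \Omega(mk)$. The natural approach is a fooling-set argument: I would exhibit a large collection of input tuples $(A_1,\dots,A_{2k})$, all of which the protocol must accept (i.e.\ on which $\NSMf$ evaluates to $1$, so every index is covered by at most $k$ players), such that no two of them can be ``confused'' by a correct deterministic protocol. Concretely, for a candidate index $i$, if in one accepting input exactly $k$ players have a $1$ in coordinate $i$ and in another accepting input a \emph{different} set of $k$ players have a $1$ there, then I want to argue that mixing the two (taking coordinate $i$ from the ``union'' of the two supports) produces an input on which the answer is $0$ but the transcript is unchanged. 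The combinatorial heart is choosing the family so that the supports differ enough in \emph{many} coordinates simultaneously; balancing ``exactly $k$ ones per coordinate'' across $m$ coordinates with $2k$ players gives roughly $\binom{2k}{k}^m$ candidate inputs, and a careful fooling-set extraction should yield $\log$ of a $2^{\Omega(mk)}$-sized fooling set, hence $\Omega(mk)$ bits.

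Next I would lift from one instance to $l$ instances. The clean way is a \emph{direct-sum}-style argument tailored to deterministic communication complexity over the broadcast model: since the $l$ sub-instances share the same $2k$ players but are otherwise independent, a protocol for $\bigwedge_{i=1}^l \NSMf$ restricted (by fixing all but one block of strings to a fixed accepting configuration) solves a single $\NSMf$ instance, so we at least get $\Omega(mk)$; to get the full $\Omega(mkl)$ I would instead build a product fooling set: take the $s$-th block to range over the base fooling set $\mathcal{F}_s$ (a copy of the single-instance fooling set), and argue that the $l$-fold product $\mathcal{F}_1 \times \cdots \times \mathcal{F}_l$ is itself a fooling set for $\bigwedge_{i=1}^l \NSMf$. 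This works because if two product inputs differ in block $s$, the ``mixed'' input that breaks confusion in the single-instance case, applied in block $s$ and left untouched elsewhere, has $\bigwedge \NSMf$-value $0$ (one block already witnesses a strict majority) while keeping all other blocks accepting and the transcript fixed. Since $|\mathcal{F}_1 \times \cdots \times \mathcal{F}_l| = \prod_s |\mathcal{F}_s| = 2^{\Omega(mkl)}$, we conclude $D(\bigwedge_{i=1}^l \NSMf) \ge \log_2 |\mathcal{F}_1 \times \cdots \times \mathcal{F}_l| = \Omega(mkl)$.

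I would also need to take care of a subtlety specific to the ``Don't Care'' regime: the fooling set must consist entirely of inputs on which the function value is a genuine $1$ (all coordinates covered $\le k$ times), and the confusing ``mixed'' input must land in a genuine $0$ (some coordinate covered $\ge (1+\epsilon)k$ times) — it must not fall into the Don't-Care zone where the protocol is off the hook. This is why I set the mixing operation to \emph{union} two disjoint $k$-subsets of players in a coordinate, producing coverage $2k \ge (1+\epsilon)k$ there, comfortably in the $0$-region for $\epsilon < 1$; and why I restrict the fooling-set inputs so that after the mix every \emph{other} coordinate still has coverage $\le k$. The main obstacle I anticipate is the base-case combinatorics: ensuring the single-instance fooling set is simultaneously (i) large, i.e.\ $2^{\Omega(mk)}$, (ii) entirely in the $1$-region, and (iii) pairwise ``mixable'' in at least one coordinate in a way that lands in the $0$-region without disturbing acceptance elsewhere. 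Getting all three at once likely requires a slightly clever product construction over coordinates (e.g.\ partitioning the $2k$ players into a fixed ``always-on'' block and a varying block, or using a Sauer–Shelah / VC-type counting), rather than a naive random choice; once the base case is nailed down, the $l$-fold amplification is essentially bookkeeping.
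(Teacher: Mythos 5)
Your overall architecture is exactly the paper's: a $1$-fooling set of size $2^{\Omega(mk)}$ for a single instance of $\NSMf$, amplified to the $l$-fold conjunction by taking the $l$-wise product fooling set (this is precisely \Cref{Lem:FoolPump} combined with \Cref{fact:HighFoolingSet}), and you correctly handle the Don't-Care subtlety by insisting the fooling inputs are genuine $1$-inputs and the confusing mixture is a genuine $0$-input. Your product-amplification step is correct as stated. One small point of hygiene: the mixture allowed by the multiparty fooling-set definition is \emph{player-wise} (each player contributes her entire string from one of the two inputs), not coordinate-wise; your ``take coordinate $i$ from the union of the two supports'' is realizable by the player-wise choice that gives each player whichever of her two strings contains $i$, but this should be said.

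The genuine gap is in the base case, and it is quantitative rather than conceptual. You propose to make two fooling inputs collide by finding a coordinate $i$ where their two $k$-element supports (among the $2k$ players) are \emph{disjoint}, so the mixture has coverage $2k$ at $i$. But two $k$-subsets of $[2k]$ are disjoint only if they are complementary, so a family in which every pair is ``complementary in some coordinate'' is a clique in a union of $m$ graphs each of which is a blow-up of a perfect matching (clique number $2$); such a clique has size at most $2^m$, which would give only $\Omega(ml)$, losing the factor $k$ entirely. The paper's resolution is to weaken the pairwise requirement to the bare minimum that still certifies a $0$-input: it defines $d_i(x,y)$ as the Hamming distance between the coordinate-$i$ supports and $d=\max_i d_i$, greedily extracts from the $\binom{2k}{k}^m$ balanced inputs a packing $\mathcal{S}'$ with pairwise $d\ge k/2$, and observes that symmetric difference $\ge k/2$ between two $k$-supports forces their union to have size $\ge k+\tfrac{k}{4}=(1+\tfrac14)k$ --- exactly the threshold for $\eps=\tfrac14$, with no slack. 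A ball-volume count (each ball of radius $k/2$ contains at most $(2\binom{k}{k/4}^2)^m$ balanced points) plus Stirling then gives $|\mathcal{S}'|\ge c^{mk}$ for a constant $c>1$. So the ``slightly clever construction'' you anticipate needing is a packing argument in the metric $\max_i d_i$; without relaxing disjointness to constant-fraction symmetric difference, the base case cannot reach $2^{\Omega(mk)}$.
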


To prove \Cref{thm:LBapprobval},
next we show how the communication complexity of $\bigwedge_{i=1}^{l}\NSMf$ implies a lower bound on the communication of \textsc{Approval-winner-tracking}.
The general idea,
similarly to the idea underlying the lower bound described in \Cref{lem:lowerbound2},
is to exploit the fact that,
in any point in time,
the center should be able to produce an answer without any additional communication.
Specifically,
we will have ${l=\Omega}(\frac{km}{\epsilon}\log \frac{n}{k})$ {rounds},
such that by sampling the center in $l$ different points of time we can determine $\bigwedge_{i=1}^{l}\NSMf$.

\begin{proof}[Proof of \Cref{thm:LBapprobval}]
For the sake of simplicity, during the proof we will consider also non-integer number of voters; this issue can easily be fixed by proper rounding, while introducing only a constant overhead to the number of voters.

Consider an instance of $\bigwedge_{i=1}^{l}\NSMf$,
where the input of player $P_j$ is $\left\{A_{j}^{s}\right\} _{s=1}^{l}\in\{0,1\}^{m\times l}$.
We will use a protocol for \textsc{Approval-winner-tracking} with $m$ candidates,
$2k$ sites,
and precision parameter $\eps$ to solve $\bigwedge_{i=1}^{l}\NSMf$.
By \Cref{lem:LB-NSD},
$\bigwedge_{i=1}^{l}\NSMf$ requires $\Omega(lmk)$ communication.
This in turn will imply a lower bound for the communication complexity of \textsc{Approval-winner-tracking}.

Our reduction is as follows.
Each player acts as a site,
and will simulate the arrival of voters in some order, to be specified shortly.
Player $P_1$ will act also as server (this is possible as we assume broadcast communication).
We denote the number of voters that approve candidate $i$ at site $j$ by $(v_i)_j$,
and the total number of voters, across all sites, approving candidate $i$ by $v_i=\sum_j(v_i)_j$.
The reduction has several phases. We first describe the first phase and later generalize it to describe how
the $r$th phase is executed.

\begin{itemize}

\item
\textbf{First phase:}
Before the first phase starts,
the situation is that each candidate is approved by $0$ voters at each site.
The first phase have 3 stages, as follows.

  \begin{itemize}
    \item
    Vote simulation: 
      each site $j$ simulates that a voter approving $A^1_j$ arrives.
    \item
    Validation:
      the center computes a winner $q$,
      then it collects $(v_q)_j$ from all the sites (players).
      If $v_q=\sum_{j=1}^{2k}(v_q)_j>k$,
      then it determines that the solution for the first instance is $0$.
      Otherwise it determines that the solution is $1$.
		\item
		Reset:
		  each site $j$ simulates that a voter approving $\overline{A^1_j}$ comes.
  \end{itemize}
  
\item
\textbf{$r$th phase:}
Set $x_r=(1+32\eps)^{r-2}$ and $y_r=32\eps x_r>\frac{16\eps}{1-8\eps}x_r$.
Before the $r$th phase starts,
the situation is that each site already received exactly $2\cdot x_r$ voters,
such that each candidate was approved by exactly $x_r$ voters at each site.
The $r$th phase has 3 stages:

	\begin{itemize}
		\item
		Vote simulation:
		  each site $j$ simulates that $y_r$ voters appeared,
		  all approving $A^r_j$.
		\item
		Validation:
		  the center computes a winner $q$,
		  then it collects $(v_q)_j$ from all the sites (players).
		  If $v_q=\sum_{j=1}^{2k}(v_q)_j>2k\cdot x_r+k\cdot y_r $,
		  then it determines that the solution for the $r$th phase is $0$.
		  Otherwise it determines that the solution is $1$.
		\item
		Reset:
		  each site $j$ simulates that $y_r$ voters appeared,
		  all approving $\overline{A^r_j}$.
	\end{itemize}
\end{itemize}

In total,
the number of voters used throughout the protocol is $n=2k\cdot2\cdot x_{l+1}=4k\cdot(1+32\eps)^{l{-1}}$.
In addition to the protocol for \textsc{Approval-winner-tracking},
we also used $O(k)$ communication in each phase to compute the number of votes the winner got;
to see why $O(k)$ bits of communication suffices for each phase,
notice that in phase $r$,
in the validation stage,
each site sends to the center the number of voters voted for the $q$'th candidate.
As there are only two options for this number ($x_r,x_r+y_r$), one bit of communication suffices for each site,
thus we have $O(k)$ additional bits of communication in total for each phase. 
Thus,
the total communication our protocol uses,
in addition to the \textsc{Approval-winner-tracking} protocol,
is $O(lk)$.

Next we argue that we indeed compute the right answer for each of the $l$ instances of $\bigwedge_{i=1}^{l}\NSMf$.
Note that,
at the time of the second step in the $r$th phase,
exactly $2x_r+y_r$ voters arrived at each site,
accumulating to a total of $n_r=2k\cdot (2x_r+y_r)$ voters.
Fix some $r\in[1,l]$ and consider first the case where there exists an index $i$ such that
$\left|\left\{ j\mid i\in A_{r}^j\right\} \right|\ge\left(1+\frac{1}{4}\right)k$.
In particular,
the $i$'th candidate was approved by at least
$v_{i}\ge2k\cdot x_{r}+(1+\frac{1}{4})\cdot k\cdot y_{r}$
voters.
Hence,
the Approval protocol will return an index $q$ s.t. $v_{q}+\epsilon\cdot n_{r}\ge v_{i}$.
As {$y_r>\frac{16\eps}{1-8\eps}x_r$ it holds that} $\frac{1}{4}\cdot k\cdot y_{r}>\epsilon\cdot n_{r}$,
{and in particular} $v_{q}\ge v_{i}-\epsilon\cdot n_{r}>2k\cdot x_{r}+k\cdot y_{r}$.
We conclude that,
in this case,
the algorithm will compute the correct answer in the $r$th phase.
	
Otherwise,
if for every index $i$,
we have that $\left|\left\{ j\mid i\in A_{j}^r\right\} \right|\le k$,
then no matter which index $q$ the algorithm for \textsc{Approval-winner-tracking} will return,
since the center will check it and will find out that $v_{q}\le2k\cdot x_{r}+k\cdot y_{r}$.
Hence, again, it will compute the right answer.
	
Note that the number of voters used throughout the protocol is $n=4k\cdot(1+32\eps)^{l{-1}}$,
hence $l={1+}\log_{1+32\eps}\frac{n}{4k}={\Omega}\left(\frac{1}{\epsilon}\log\frac{n}{k}\right)$.
As,
other then the protocol for \textsc{Approval-winner-tracking},
we used only $O(lk)$ bits,
while we solved $\bigwedge_{i=1}^{l}\NSMf$, a problem requiring $\Omega(mkl)$ bits, we conclude that \textsc{Approval-winner-tracking} requires at least 
	\[
		\Omega(lmk)-O(lk)=\Omega(lmk)=\Omega\left(\frac{mk}{\epsilon}\cdot\log\left(\frac{n}{k}\right)\right)~,
	\]
	bits.
In the first equality we used the fact that $m$ is large enough.
\end{proof}

\section{Discussion and Outlook}\label{section:outlook}

In this paper we studied communication-efficient protocols for maintaining approximate winners in distributed vote streams.
We have shown several general techniques for designing such protocols
(namely, sampling-based protocols, protocols based on checkpoints, and protocols based on counting frequencies),
and demonstrated their usefulness for various single winner voting rules.
Indeed,
based on these general techniques, for each of the rules we considered here,
we have designed several communication-efficient protocols,
and analyzed their communication complexity.
We complemented our protocols with lower bounds.

As a further contribution,
we view our paper as a bridge between issues and ideas from artificial intelligence
(specifically, multiagent systems and computational social choice)
and techniques and methods from theoretical computer science and database systems
(specifically, streaming and sampling algorithms and distributed continuous monitoring).
We hope that more fruitful research can be done by bridging between those fields.

Below we first discuss several aspects which are somehow hidden in the technical part of the part.
Specifically,
we begin with a discussion on deterministic protocols,
showing that,
while the technical part of the paper concentrates on randomized protocols,
communication-efficient deterministic protocols for monitoring election winners in distributed streams exist as well.
Then,
as in this paper we developed several protocols for each voting rule considered,
we provide a discussion on how to choose which protocol to use at which scenario,
depending on the specific parameters of the problem at hand.
We end this section by mentioning some directions for future research.

\subsection{Deterministic Protocols}

While in this paper we concentrated on randomized protocols,
it turns out that some of our protocols are already deterministic or can be made deterministic with some slight modifications.
To us, this is quite surprising:
  for example, there are usually no efficient deterministic algorithms operating on centralized streams.
Specifically, as we show next, while there are no natural deterministic equivalents to our sampling-based protocols
(since, informally speaking, a deterministic equivalent to sampling would basically need to sample the whole electorate),
our other protocols can generally be made deterministic.

Indeed, protocols based on checkpoints are already deterministic.
Further,
protocols based on counting frequencies can use a deterministic protocol for \textsc{Frequency count}
which uses $O(\epsilon^{-1} k \log n)$ words of communication~\cite{yi2013optimal}.
Correspondingly, the increase in the communication complexity is by at most a factor of $\sqrt{k}$.
Notice that the corresponding deterministic protocols still maintain only approximate solutions.

\subsection{Choice of Protocol}

A closer look at our upper bounds reveals that
the choice of which protocol to use for which voting rule
crucially depends on the relationships between the various parameters;
specifically,
as a rule of thumb,
it looks as if the choice of which protocol to use depends on the relation between $k$ and $1/\epsilon^{-2}$;
specifically, if $k < 1/\epsilon^{-2}$, then protocols based on counting frequencies or on checkpoints shall be used,
while if $k \geq 1/\epsilon^{-2}$, then sampling-based protocols achieve better communication complexity.
We believe that both cases make sense;
for example, in a supermarket chain with $4000$ stores,
requiring approximation of $\epsilon = 1/100$ would put us in the first case,
while requiring $\epsilon = 1/10$ would put us in the second case.

\subsection{Future Directions}

Below,
we discuss several directions for future research.

\subsubsection{Improved Bounds and More Rules}

While we considered quite a variety of voting rules in this paper,
there are further interesting rules to consider,
ranging from single-winner voting rules such as Kemeny, Young, Dodgson, Schulze, Maximin, and Ranked pairs,
to multiwinner voting rules such as committee scoring rules, including Chamberlin--Courant and Monroe.
Further,
there are still some gaps between our upper bounds and lower bounds;
closing those gaps is a natural direction for future research.

\subsubsection{Simulations and Heuristics}

Our focus in the current paper,
besides bridging between the study of computational social choice within the field of artificial intelligence
and the topic of continuous distributed monitoring within database systems and theoretical computer science,
is a theoretic study of communication-efficient protocols for maintaining election winners in distributed elections.

We believe that a theoretic study is important but also appreciate the possibility of validating our theoretical findings
by performing simulations. Thus we view an experimental follow-up to the current paper as an important and interesting future work.
One shall be careful in choosing input instances and evaluation methods,
and there is also some hope that efficient heuristics (for which the theoretical complexity might not be impressive)
outperform our protocols for certain scenarios and distributions.

\subsubsection{Constrained Resources}

In this paper,
we measured the complexity of our protocols only in terms of their communication cost.
It is natural to consider other resources,
especially studying various trade-offs between space, time, and communication.
We mention that,
for example,
our sampling-based protocols do extend to situations where the computational power of the sites is very limited,
since sampling from a distributed stream can be done with sites which have only logarithmic space~\cite{cormode2012continuous}.
Our checkpoint-based protocols, however, generally assume linear space (in $m$) for each site.

\subsubsection{Various Restrictions}

In this paper we have concentrated on worst-case notions:
first,
we assumed that voters are arbitrarily (thus, adversarially) assigned into the sites;
second,
we did not assume any structure on the electorate itself.
Since there might be better real-world situations,
it is natural to study protocols for elections drawn from, say, Mallow's model or the Urn model,
as well as to study situations where the voters are, say, uniformly assigned into the sites.
Of course, studying protocols for elections which adhere to some domain restrictions,
such as single peaked elections and single crossing elections would be natural and interesting as well.
Indeed, there is hope more efficient protocols exist for such restrictions.

\subsubsection{Incomplete Votes}

In this paper we considered situations in which voters provide full votes;
e.g., in the ordinal elections we consider, each voter is assumed to provide a full ranking.
There are quite a few papers dealing with settings in which voters provide only partial rankings:
  Pini et al.~\cite{pini2011incompleteness} and Xia and Conitzer~\cite{xia2011determining} study the complexity of computing \emph{possible} winners (alternatives for which at least one completion of the votes makes them win the election) and \emph{necessary} winners (alternatives for which all completions of the votes make them win the elections);
  Bentert and Skowron~\cite{bentert2019comparing} and Caragiannis et al.~\cite{caragiannis2019optimizing} identify voting rules which are suitable to situations in which the vote elicitation is incomplete, in the sense that they approximate well the winners for the full elicitation.
It would be interesting to try to generalize our communication protocols to such settings in which voters provide only partial ranking.

\section*{Acknowledgements}

The authors thank Robert Krauthgamer for inspiring discussions.

\bibliographystyle{alpha}
\bibliography{bib}

\begin{thebibliography}{CLMRA09}

\bibitem[ABC09]{arackaparambil2009functional}
C.~Arackaparambil, J.~Brody, and A.~Chakrabarti.
\newblock Functional monitoring without monotonicity.
\newblock In {\em Automata, Languages and Programming}, pages 95--106. 2009.

\bibitem[BD15]{bhattacharyya2015fishing}
A.~Bhattacharyya and P.~Dey.
\newblock Fishing out winners from vote streams.
\newblock {\em arXiv preprint arXiv:1508.04522}, 2015.

\bibitem[BO03]{babcock2003distributed}
B.~Babcock and C.~Olston.
\newblock Distributed top-$k$ monitoring.
\newblock In {\em Proceedings of the 2003 ACM SIGMOD International Conference
  on Management of Data (CDM '03)}, pages 28--39, 2003.

\bibitem[BS19]{bentert2019comparing}
Matthias Bentert and Piotr Skowron.
\newblock Comparing election methods where each voter ranks only few
  candidates.
\newblock {\em arXiv preprint arXiv:1901.10848}, 2019.

\bibitem[CCKV19]{caragiannis2019optimizing}
Ioannis Caragiannis, Xenophon Chatzigeorgiou, George~A. Krimpas, and
  Alexandros~A. Voudouris.
\newblock Optimizing positional scoring rules for rank aggregation.
\newblock {\em Artificial Intelligence}, 267:58--77, 2019.

\bibitem[CLMM11]{chevaleyre2011compilation}
Y.~Chevaleyre, J~Lang, N.~Maudet, and J.~Monnot.
\newblock Compilation and communication protocols for voting rules with a
  dynamic set of candidates.
\newblock In {\em Proceedings of the 13th Conference on Theoretical Aspects of
  Rationality and Knowledge (TARK '11)}, pages 153--160, 2011.

\bibitem[CLMRA09]{chevaleyre2009compiling}
Y.~Chevaleyre, J.~Lang, N.~Maudet, and G.~Ravilly-Abadie.
\newblock Compiling the votes of a subelectorate.
\newblock In {\em Proceedings of the 18th International Joint Conference on
  Artificial Intelligence (IJCAI '09)}, pages 97--102, 2009.

\bibitem[CLPS17]{CLPS17}
T.~Csar, M.~Lackner, R.~Pichler, and E.~Sallinger.
\newblock Winner determination in huge elections with {M}ap{R}educe.
\newblock In {\em Proceedings of the Thirty-First {AAAI} Conference on
  Artificial Intelligence (AAAI '17)}, pages 451--458, 2017.

\bibitem[CMY11]{cormode2011algorithms}
G.~Cormode, S.~Muthukrishnan, and Ke. Yi.
\newblock Algorithms for distributed functional monitoring.
\newblock {\em ACM Transactions on Algorithms (TALG)}, 7(2):21, 2011.

\bibitem[CMYZ12]{cormode2012continuous}
G.~Cormode, S.~Muthukrishnan, K.~Yi, and Q.~Zhang.
\newblock Continuous sampling from distributed streams.
\newblock {\em Journal of the ACM (JACM)}, 59(2):10, 2012.

\bibitem[Cor13]{cormode2013continuous}
G.~Cormode.
\newblock The continuous distributed monitoring model.
\newblock {\em ACM SIGMOD Record}, 42(1):5--14, 2013.

\bibitem[CR19]{chaturvedi2019distributed}
Kamal Chaturvedi and Shrisha Rao.
\newblock Distributed elections using site-push majority winner monitoring.
\newblock {\em IEEE Systems Journal}, 2019.

\bibitem[CS02]{con-san:c:strategy-proofness}
V.~Conitzer and T.~Sandholm.
\newblock Vote elicitation: {Complexity} and strategy-proofness.
\newblock In {\em Proceedings of the 18th National Conference on Artificial
  Intelligence (AAAI~'02)}, pages 392--397, 2002.

\bibitem[CS05]{conitzer2005communication}
V.~Conitzer and T.~Sandholm.
\newblock Communication complexity of common voting rules.
\newblock In {\em Proceedings of the 6th ACM Conference on Electronic Commerce
  (EC' 05)}, pages 78--87, 2005.

\bibitem[DB15]{dey2015sample}
Palash Dey and Arnab Bhattacharyya.
\newblock Sample complexity for winner prediction in elections.
\newblock In {\em Proceedings of the 14th International Conference on
  Autonomous Agents and Multiagent Systems (AAMAS~'15)}, pages 1421--1430,
  2015.

\bibitem[DN13]{dhamal2013scalable}
Swapnil Dhamal and Y~Narahari.
\newblock Scalable preference aggregation in social networks.
\newblock In {\em Proceedings of the First AAAI Conference on Human Computation
  and Crowdsourcing (HCOMP '13)}, 2013.

\bibitem[DN15]{dey2015estimating}
P.~Dey and Y.~Narahari.
\newblock Estimating the margin of victory of an election using sampling.
\newblock In {\em Proceedings of the 24th International Conference on
  Artificial Intelligence (IJCAI '15)}, pages 1120--1126, 2015.

\bibitem[DTvH17]{DTH17}
Palash Dey, Nimrod Talmon, and Otniel van Handel.
\newblock Proportional representation in vote streams.
\newblock In {\em Proceedings of the 16th Conference on Autonomous Agents and
  Multiagent Systems, {AAMAS} 2017}, pages 15--23, 2017.

\bibitem[EFS09]{elk-fal-sli:c:swap-bribery}
E.~Elkind, P.~Faliszewski, and A.~Slinko.
\newblock Swap bribery.
\newblock In {\em Proceedings of the 2nd International Symposium on Algorithmic
  Game Theory (SAGT~'09)}, pages 299--310, October 2009.

\bibitem[FR15]{fal-rot:b:control}
Piotr Faliszewski and J{\"o}rg Rothe.
\newblock Control and bribery in voting.
\newblock In F.~Brandt, V.~Conitzer, U.~Endriss, J.~Lang, and A.~D. Procaccia,
  editors, {\em Handbook of Computational Social Choice}, chapter~7. Cambridge
  University Press, 2015.

\bibitem[FT17]{FiltserT17}
Arnold Filtser and Nimrod Talmon.
\newblock Distributed monitoring of election winners.
\newblock In {\em Proceedings of the 16th Conference on Autonomous Agents and
  Multiagent Systems, {AAMAS} 2017}, pages 1160--1168, 2017.

\bibitem[HYZ12]{huang2012randomized}
Z.~Huang, K.~Yi, and Q.~Zhang.
\newblock Randomized algorithms for tracking distributed count, frequencies,
  and ranks.
\newblock In {\em Proceedings of the 31st ACM SIGMOD-SIGACT-SIGAI symposium on
  Principles of Database Systems (PODS '12)}, pages 295--306, 2012.

\bibitem[KN97]{KN97}
Eyal Kushilevitz and Noam Nisan.
\newblock {\em Communication complexity}.
\newblock Cambridge University Press, 1997.

\bibitem[Lee15]{lee2015efficient}
D.~T. Lee.
\newblock Efficient, private, and $\epsilon$-strategyproof elicitation of
  tournament voting rules.
\newblock In {\em Proceedings of the 24th International Joint Conference on
  Artificial Intelligence (IJCAI '15)}, 2015.

\bibitem[LGAL14]{lee2014crowdsourcing}
D.~T. Lee, A.~Goel, T.~Aitamurto, and H.~Landemore.
\newblock Crowdsourcing for participatory democracies: Efficient elicitation of
  social choice functions.
\newblock In {\em Proceedings of the Second AAAI Conference on Human
  Computation and Crowdsourcing (HCOMP' 14)}, 2014.

\bibitem[LRV12]{liucontinuous}
Z.~Liu, B.~Radunovic, and M.~Vojnovic.
\newblock Continuous distributed counting for non-monotonous streams.
\newblock In {\em Proceedings of the 31st {ACM} {SIGMOD-SIGACT-SIGART}
  Symposium on Principles of Database Systems (PODS '12)}, pages 307--318,
  2012.

\bibitem[PRVW11]{pini2011incompleteness}
Maria~Silvia Pini, Francesca Rossi, Kristen~Brent Venable, and Toby Walsh.
\newblock Incompleteness and incomparability in preference aggregation:
  Complexity results.
\newblock {\em Artificial Intelligence}, 175(7-8):1272--1289, 2011.

\bibitem[Raz11]{Razborov2011}
Alexander~A. Razborov.
\newblock Communication complexity.
\newblock In Dierk Schleicher and Malte Lackmann, editors, {\em An Invitation
  to Mathematics: From Competitions to Research}. 2011.

\bibitem[SA12]{SA12}
Travis~C. Service and Julie~A. Adams.
\newblock Communication complexity of approximating voting rules.
\newblock In {\em International Conference on Autonomous Agents and Multiagent
  Systems, {AAMAS} 2012, Valencia, Spain, June 4-8, 2012 {(3} Volumes)}, pages
  593--602, 2012.

\bibitem[TW11]{tirthapura2011optimal}
S.~Tirthapura and D.~P. Woodruff.
\newblock Optimal random sampling from distributed streams revisited.
\newblock In {\em Proceeding of the 25th international conference on
  Distributed computing (DISC '11)}, pages 283--297, 2011.

\bibitem[XC10]{xia2010compilation}
L.~Xia and V.~Conitzer.
\newblock Compilation complexity of common voting rules.
\newblock In {\em Proceedings of the 24th AAAI Conference on Artificial
  Intelligence (AAAI '10)}, pages 915--920, 2010.

\bibitem[XC11]{xia2011determining}
Lirong Xia and Vincent Conitzer.
\newblock Determining possible and necessary winners given partial orders.
\newblock {\em Journal of Artificial Intelligence Research}, 41:25--67, 2011.

\bibitem[Xia12]{xia:margin-of-victory}
L.~Xia.
\newblock Computing the margin of victory for various voting rules.
\newblock In {\em Proceedings of the 13th ACM Conference on Electronic Commerce
  (EC' 12)}, pages 982--999, 2012.

\bibitem[YZ13]{yi2013optimal}
K.~Yi and Q.~Zhang.
\newblock Optimal tracking of distributed heavy hitters and quantiles.
\newblock {\em Algorithmica}, 65(1):206--223, 2013.

\end{thebibliography}

\appendix

\section{Proof of \Cref{lem:Checkpoints}}\label{appendix:proofOfCheckPoints}

\begin{proof}[Proof of \Cref{lem:Checkpoints}]
Set $\delta=\frac{\eps}{4}$. 
As $c$ is a $\delta$-winner in $E$,
it follows that there exist a set of voters $u_{1},\dots,u_{q'}$,
where $q'\le \delta n$,
such that $c\in\calR(\tilde{E})$ for $\tilde{E}=E\cup \{u_{1},\dots,u_{q'}\}$:
  that is, adding those $q'$ voters to $E$ would make $c$ a winner.
The situation is that we have an additional $q$ voters, $v_{n+1}, \ldots, v_{n+q}$,
which might have a bad impact with respect to $c$.
Thus,
our goal is to describe an additional set of $q'' \leq 3q$ voters,
denoted by $W = \{w_1, \ldots, w_{q'' \leq 3q}\}$ which will nullify the (possibly) bad impact of those $q$ voters
(which arrived after the last checkpoint)
on $c$.

So,
for each of the voting rules we consider in this paper,
we will argue that 
$c\in\calR(\tilde{E}')$
where
$\tilde{E}'=
	E'\cup \{w_1,\dots,w_{q''}\}\cup \{u_1,\dots,u_{q'}\}=
	\tilde{E}\cup\{v_{n+1},\dots,v_{n+q}\}\cup  \{w_1,\dots,w_{q''}\}$.
Thus, we will conclude that $c$ is a $4\delta =\eps$-winner with respect to $E'$.
Below we describe the set of voters $W$ for each voting rule separately.

\begin{itemize}

\item
\textbf{Plurality, $t$-Approval, Approval}:
For $i\in[q]$,
let $w_i$ be a voter approving $c$,
and such that $w_i$ is not approving any candidate which was approved by $v_{n+i}$
(recall that in the case of $t$-Approval we assume $t\le m/2$). 

As $c$ is a winner in both the elections with voters $\{v_{n+1},\dots,v_{n+q},w_{1},\dots,w_q\}$ and $\tilde{E}$,
it holds that $c\in\calR(\tilde{E}')$.
	
\item
\textbf{Borda}:
For $i\in[q]$,
let $w_i$ be the ``reverse'' of $v_{n+i}$
(e.g., if $v_{n+i} : a \pref b \pref c$, then $w_i : c \pref b \pref a$).
Note that all candidates have the same Borda score with respect to the voters
$\{v_{n+1},\dots,v_{n+q},w_{1},\dots,w_q\}$.
Thus $c\in\calR(\tilde{E})$ implies $c\in\calR(\tilde{E}')$. 
		
\item
\textbf{Cup}:
Denote the set of candidates by $M$ and consider the election $\tilde{E}$.
In order to compute a Cup-winner,
we shall preform a series of $m-1$ ``head-to-head'' contests.
That is,
there is a set $P\subseteq  M\times M$ of ordered pairs,
of size $m - 1$,
such that for every $(c_1,c_2)\in P$,
$c_1$ wins $c_2$ in an head-to-head contest.
In fact,
in any election $\hat{E}$ such that for every $(c_1,c_2)\in P$,
$c_1$ wins $c_2$ in an head-to-head contest with respect to $\hat{E}$,
it holds that $c$ is a Cup-winner.

In the beginning of the proof of \Cref{theorem:cup} we argued that there is an order $\pi_P$ over $M$,
such that for every $(c_1,c_2)\in P$, $c_1$ precedes $c_2$ in $\pi_P$.
Next we define $w_1,\dots,w_{q}$.
All these voters will order the candidates with respect to $\pi_P$:
  that is,
  the maximal candidate in $\pi_P$ will be ranked first, the second will be ranked second, and so on.
Now,
for every $(c_1,c_2)\in P$, $c_1$ wins $c_2$ in the ``head-to-head'' contest with respect to
$\{v_{n+1},\dots,v_{n+q},w_{1},\dots,w_q\}$,
hence $c_1$ wins $c_2$ in the ``head-to-head'' contest with respect to $\tilde{E}'$.
We conclude that $c\in\calR(\tilde{E}')$.
		
\item
\textbf{Copeland and Condorcet}: 
We prove the claim for Copeland first.
For $i\in[q]$, let $u_i$ be the ``reverse'' of $v_i$.
For every two candidates $c_1,c_2$, a majority of the voters prefer $c_1$ to $c_2$ with respect to $\tilde{E}$
if and only if
a majority of the voters prefer $c_1$ to $c_2$ with respect to $\tilde{E}'$.
Thus,
$c\in\calR(\tilde{E})$ implies $c\in\calR(\tilde{E}')$. 
The above proves the claim for Copeland;
thus the claim for Condorcet follows,
as every Copeland winner is in particular a Condorcet winner.
		
\item \textbf{Bucklin}:
For $i\in[q]$,
let $w_i$ be a voter ranking $c$ on top,
and such that every candidate $c'\ne c$,
which is ranked at position $j$ in $v_{n+i}$,
will be ranked at position $m-j+1$ or $m-j+2$ in $v_j$.
Note that,
for every $j\le\frac{m}{2}$ and $c'\ne c$,
the number of voters among $v_{n+1},\dots,v_{n+q},w_1,\dots w_q$ ranking $c'$ among the first $j$ positions is at most $q$,
while the number of voters ranking $c$ among the first $j$ position is at least $q$.

Set $n'=n+q'+2q$.
Suppose that in the elections $\tilde{E}$, $c$ wins at round $j$. 
Consider the election $\tilde{E}'$.
Then,
for every candidate $c'\ne c$ and $j'<j$,
the number of voters ranking $c'$ among the first $j'$ positions is less than $\frac{n+q'}{2}+q=\frac{n'}{2}$,
while the number of voters ranking $c$ among the first $j$ positions is at least $\frac{n+q'}{2}+q=\frac{n'}{2}$.
We conclude that $c\in\calR(\tilde{E}')$.		
		
\item \textbf{Run Off}:
Let $c'$ be a candidate such that $c$ and $c'$ get the highest plurality score in $\tilde{E}$,
and such that $c$ is winning $c'$ in the ``head-to-head'' contest with respect to $\tilde{E}$. 
Set $w_1,\dots,w_{q}$ to be voters ranking $c'$ on top, and set $w_{q+1},\dots,w_{3q}$ to be voters ranking $c$ on top. 
Note that with respect to the voters $v_{n+1},\dots,v_{n+q},w_1,\dots,w_{3q}$, $c$ and $c'$ have the highest plurality score,
while $c$ is winning over $c'$ in the ``head-to-head'' contest.
Thus this is also the situation in $\tilde{E}'$.
We conclude that  $c\in\calR(\tilde{E}')$.~\qedhere

\end{itemize}	
\end{proof}

\section{Communication Lower Bound for \textsc{No Strict Majority}}\label{appendix:CC}

A basic machinery for communication complexity lower bounds is fooling sets.
Consider a function $f:\{0,1\}^{m\times k}\rightarrow \{0,1\}$.
We have $k$ players, each holding a string from $\{0,1\}^m$.

\begin{definition}[Fooling set]
	A set $A=\{(x^1_1,\dots,x^1_k),\dots,(x^s_1,\dots,x^s_k)\}\subseteq  \{0,1\}^{m\times k}$ is called a \emph{fooling set}
	for the function $f:\{0,1\}^{m\times k}\rightarrow \{0,1\}$, if there are some bit $b\in \{0,1\}$ such that:
	\begin{enumerate}
		\item For every $i$, $f(x^i_1,\dots,x^i_k)=b$.
		\item For every $i\ne j$, there is $(y_1,\dots,y_k)\in \{x^i_1,x^j_1\}\times\cdots\times\{x^i_k,x^j_k\}$
      		such that  $f(y_1,\dots,y_k)\ne b$.
	\end{enumerate}  
\end{definition}

A fooling set is called a \emph{$1$-fooling set} if the bit $b$ above is $1$ (similarly, a \emph{$0$-fooling set}).
The proof of the following fact can be found in the textbook by Kushilevitz and Nisan~\cite{KN97}.

\begin{fact}\label{fact:HighFoolingSet}
	Let $f:\{0,1\}^{n\times k}\rightarrow \{0,1\}$ be some function with fooling set $A$.
	Then, $D(f)\ge\log|A|$.
\end{fact}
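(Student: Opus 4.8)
The plan is to use the standard \emph{rectangle method} for communication complexity lower bounds, adapted to the multiparty broadcast model in which $f$ is defined. Fix an optimal deterministic protocol $\Pi$ computing $f$ with cost $c=D(f)$, and for each input $(x_1,\dots,x_k)$ let $\tau(x_1,\dots,x_k)$ denote its \emph{transcript}, i.e.\ the sequence of broadcast bits produced on that input. Since $\Pi$ is deterministic and every bit is broadcast to all players, the transcript is a well-defined function of the whole input, and (as the transcript is visible to everyone) the output is a function of the transcript alone. I would establish three facts and then combine them by counting: (i) for every transcript $t$, the preimage $\tau^{-1}(t)$ is a \emph{combinatorial rectangle} $R_1\times\cdots\times R_k$ with $R_j\subseteq\{0,1\}^n$; (ii) each such rectangle is $f$-monochromatic; (iii) no two distinct elements of the fooling set $A$ share a transcript. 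Together these give an injection from $A$ into the set of transcripts, whose size is at most $2^{c}$, yielding $D(f)=c\ge\log|A|$.

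For fact (i) I would induct on the length of a transcript prefix. The empty prefix is consistent with all inputs, which form the rectangle $(\{0,1\}^n)^k$. For the inductive step, observe that after a prefix $t'$ the identity of the player who speaks next is determined by $t'$ alone; say it is player $p$. The bit this player broadcasts is a function of $t'$ and of $x_p$ only. Hence appending a bit $b$ restricts only the $p$-th coordinate set, replacing $R_p$ by $\{x_p\in R_p:\text{player }p\text{ broadcasts }b\text{ on }(t',x_p)\}$ and leaving all other coordinate sets unchanged; the result is again a rectangle. Fact (ii) is then immediate: all inputs in $\tau^{-1}(t)$ produce the same output, and since $\Pi$ computes $f$, that common output equals the common value of $f$ on the rectangle.

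For fact (iii), suppose towards a contradiction that two distinct fooling-set elements $(x^i_1,\dots,x^i_k)$ and $(x^j_1,\dots,x^j_k)$ lie in the same rectangle $R_1\times\cdots\times R_k$. By condition~1 of the definition of a fooling set, $f$ takes the value $b$ on both, so by fact (ii) the entire rectangle is $b$-monochromatic. But $x^i_l,x^j_l\in R_l$ for every $l$, so by the product structure every mixed input $(y_1,\dots,y_k)$ with $y_l\in\{x^i_l,x^j_l\}$ also lies in $R_1\times\cdots\times R_k$ and therefore has $f$-value $b$. This contradicts condition~2, which guarantees some such mixed input with $f$-value $\ne b$. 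Hence the map from $A$ to the set of transcripts sending each fooling-set element to its transcript is injective.

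Finally, the number of distinct transcripts of a deterministic protocol of cost $c$ is at most $2^{c}$, since the transcripts are exactly the leaves of the protocol's binary communication tree, whose depth is at most $c$. Combining this with the injection from (iii) gives $|A|\le 2^{D(f)}$, i.e.\ $D(f)\ge\log|A|$. The only real subtlety, and the step I would take most care over, is fact (i): it relies crucially on the \emph{broadcast} assumption, which ensures that both who-speaks-next and each broadcast bit depend only on the shared transcript together with a single player's private input, so that consistency with a transcript never couples two distinct coordinates. Everything else is routine bookkeeping.
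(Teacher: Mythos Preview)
Your proof is correct and is precisely the standard fooling-set/rectangle argument; the paper does not give its own proof of this fact but simply defers to the Kushilevitz--Nisan textbook, which contains essentially the argument you wrote. Your care in noting that the broadcast assumption is what makes transcript-preimages product sets in the multiparty setting is appropriate and matches the model the paper uses.
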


We denote by $f^l=\bigwedge_{i=1}^{l}f:\{0,1\}^{n\times k\times l}\rightarrow\{0,1\}$
a function that gets as input $l$ inputs for $f$ and returns $1$ if and only if the output of all the $l$ instance is $1$.
Formally,
$
f^{l}\left((x_{1}^{1},\dots,x_{k}^{1}),\dots,(x_{1}^{l},\dots,x_{k}^{l})\right)=f(x_{1}^{1},\dots,x_{k}^{1})\wedge f(x_{1}^{2},\dots,x_{k}^{2})\wedge\cdots\wedge f(x_{1}^{l},\dots,x_{k}^{l})
$.
The proof of the following lemma is straightforward,
albeit we attach the proof for completeness.

\begin{lemma}\label{Lem:FoolPump}
	Suppose $f$ has a $1$-fooling set of size $s$. Then $f^l$ has a $1$-fooling set of size $s^l$.
\end{lemma}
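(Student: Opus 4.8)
The plan is to take the obvious candidate for the enlarged fooling set — the $l$-fold Cartesian power of the given one — and check the two fooling-set axioms. Write the $1$-fooling set for $f$ as $A=\{a^1,\dots,a^s\}$ with $a^i=(x^i_1,\dots,x^i_k)\in\{0,1\}^{n\times k}$, so that $f(a^i)=1$ for every $i$, and for every $i\ne j$ there is a mixed input $(y_1,\dots,y_k)\in\{x^i_1,x^j_1\}\times\cdots\times\{x^i_k,x^j_k\}$ with $f(y_1,\dots,y_k)=0$. Define
\[
B=\bigl\{(b^{(1)},\dots,b^{(l)}) : b^{(r)}\in A\text{ for all }r\in[l]\bigr\},
\]
the set of all $l$-tuples of elements of $A$, viewed in the natural way as inputs to $f^l$ (player $j$ holds, in instance $r$, the $j$-th coordinate of $b^{(r)}$). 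Distinct such $l$-tuples give distinct inputs to $f^l$, so $|B|=s^l$, the claimed size.

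First I would verify axiom~1: for $\beta=(b^{(1)},\dots,b^{(l)})\in B$ we have $f^l(\beta)=\bigwedge_{r=1}^{l} f(b^{(r)})=\bigwedge_{r=1}^{l}1=1$, since each $b^{(r)}$ lies in $A$. The substantive step is axiom~2. Fix two distinct elements $\beta=(b^{(1)},\dots,b^{(l)})$ and $\gamma=(c^{(1)},\dots,c^{(l)})$ of $B$ and pick an index $r\in[l]$ with $b^{(r)}\ne c^{(r)}$; these are then two \emph{distinct} elements of $A$. Applying the fooling property of $A$ to the pair $b^{(r)}=(b^{(r)}_1,\dots,b^{(r)}_k)$ and $c^{(r)}=(c^{(r)}_1,\dots,c^{(r)}_k)$ yields strings $y_1,\dots,y_k$ with $y_j\in\{b^{(r)}_j,c^{(r)}_j\}$ and $f(y_1,\dots,y_k)=0$. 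Now build the mixed input for $f^l$ by letting each player $j$ commit \emph{globally}, across all $l$ instances, to its $\beta$-input if $y_j=b^{(r)}_j$ and to its $\gamma$-input otherwise (breaking ties arbitrarily when $b^{(r)}_j=c^{(r)}_j$). Writing $\beta_j,\gamma_j$ for player $j$'s full $l$-instance inputs under $\beta,\gamma$, this choice lies in $\{\beta_1,\gamma_1\}\times\cdots\times\{\beta_k,\gamma_k\}$, and in instance $r$ it presents exactly $(y_1,\dots,y_k)$ to $f$; hence $f^l$ outputs $0$ on it. Thus $B$ is a $1$-fooling set for $f^l$.

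The only point requiring care is that in the mixing for $f^l$ a player's choice is ``all or nothing'' over the $l$ instances, whereas $A$'s fooling property only provides a coordinatewise choice within instance $r$; this is not an obstacle, since we only need a single instance (namely $r$) to be rejected, so the global commitments induced by the instance-$r$ choice of $(y_1,\dots,y_k)$ suffice and no cross-instance consistency issue arises. (For the intended application, combining this with \Cref{fact:HighFoolingSet} then gives $D(f^l)\ge\log|B|=l\log s$, which is exactly what is used in the proof of \Cref{lem:LB-NSD}.)
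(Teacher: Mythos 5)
Your proof is correct and follows the same overall strategy as the paper: the candidate fooling set is the $l$-fold Cartesian power of $A$, axiom~1 is immediate, and axiom~2 is reduced to the fooling property of $A$ in a single instance $r$ where the two tuples differ. The one place where you genuinely diverge is the assembly of the separating mixed input, and there your version is the more careful one. The paper takes the point $(y^{i_{1}},\dots,y^{i_{r-1}},x,y^{i_{r+1}},\dots,y^{i_{l}})$, i.e., it keeps every instance other than $r$ exactly as in the first tuple and substitutes the single-instance witness $x$ only in instance $r$. But for $f^l$ each player's input is its full $l$-tuple of strings, so the rectangle spanned by two points $\beta,\gamma\in A^l$ is $\{\beta_1,\gamma_1\}\times\cdots\times\{\beta_k,\gamma_k\}$ with each $\beta_j,\gamma_j$ an entire $l$-instance input; a player $j$ whose instance-$r$ coordinate of $x$ is taken from the second tuple, but whose inputs in some instance $t\ne r$ differ between the two tuples, ends up holding a hybrid that is neither $\beta_j$ nor $\gamma_j$, so the paper's point need not lie in that rectangle. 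Your fix --- committing each player \emph{globally} to $\beta_j$ or $\gamma_j$ according to which side its instance-$r$ witness coordinate came from --- yields a point that genuinely lies in the per-player rectangle while still presenting exactly $(y_1,\dots,y_k)$ to instance $r$, which is all that is needed to force $f^l$ to output $0$. This is precisely the form of axiom~2 required for \Cref{fact:HighFoolingSet} and hence for the application in \Cref{lem:LB-NSD}, so your argument not only proves the lemma but quietly repairs the witness construction in the paper's own proof; the remark at the end of your write-up correctly identifies this as the only delicate point.
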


\begin{proof}
	Let $A=\{y^1=(x^1_1,\dots,x^1_k),\dots,y^s=(x^s_1,\dots,x^s_k)\}\subseteq  \{0,1\}^{n\times k}$ be a $1$-fooling set for $f$.
	We argue that $A^l$ ($l$-wise Cartesian product of $A$ with itself) is a $1$-fooling set for $\bigwedge f^l$.
	
	Indeed, for every $\left(y^{i_{1}},\dots,y^{i_{l}}\right)\in A^{l}$, it holds that
	\[
	 f^{l}\left(y^{i_{1}},\dots,y^{i_{l}}\right)=\bigwedge_{j=1}^{l}f(y^{i_{j}})=\bigwedge_{j=1}^{l}1=1~.
	\]
	Moreover, take two different points $(y^{i_{1}},\dots,y^{i_{l}})$ and $(y^{j_{1}},\dots,y^{j_{l}})$ in $A^{l}$.
	There is some index $r\in[l]$ such that $y^{i_{r}}\ne y^{j_{r}}$.
	Set $y^{i_{r}}=(z_{1},\dots,z_{k})$ and $y^{j_{r}}=(w_{1},\dots,w_{k})$. 
	As $y^{i_{r}},y^{j_{r}} \in A$, there is $x\in\left\{ z_{1},w_{1}\right\} \times\cdots\times\left\{ z_{k},w_{k}\right\} $ such that $f(x)=0$.
	In particular 
	\[
	f^{l}\left(y^{i_{1}},\dots,y^{i_{r-1}},x,y^{i_{r+1}},\dots,y^{i_{l}}\right)=\bigwedge_{j\in\left[l\right]\setminus\{r\}}f(y^{i_{j}})\wedge f(x)=\bigwedge_{\left[l\right]\setminus\{r\}}1\wedge0=0,
	\]
	as required.
\end{proof}

Now, we are ready to prove \Cref{lem:LB-NSD}.

\begin{proof}[Proof of \Cref{lem:LB-NSD}]
	Using \Cref{Lem:FoolPump} and \Cref{fact:HighFoolingSet},
	it will be enough to show that $\NSMf$ has a $1$-fooling set of size $\Omega(mk)$. 

	We start by defining $n$ metrics over $\{0,1\}^{m\times 2k}$:
	\[
	d_{i}\left(\left(A_{1},\dots,A_{2k}\right),\left(B_{1},\dots,B_{2k}\right)\right)=\left|\left\{ j\mid i\in A_{j}\bigtriangleup B_{j}\right\} \right|
	\]
	Here,
	$A_{j}\bigtriangleup B_{j}=\left(A_{j}\setminus B_{j}\right)\cup\left(B_{j}\setminus A_{j}\right)$ is the symmetric difference.\footnote{%
	  In fact $d_i$ is just the Hamming distance after we project the strings to the $i$'th coordinate.}
	\[
	d\left(\left(A_{1},\dots,A_{2k}\right),\left(B_{1},\dots,B_{2k}\right)\right)=\max_{i}d_{i}\left(\left(A_{1},\dots,A_{2k}\right),\left(B_{1},\dots,B_{2k}\right)\right)~.
	\]
	It is straightforward to verify that $d$ is indeed a metric. 
	
	Let $\mathcal{S}=\left\{ \left(A_{1},\dots,A_{2k}\right)\in\{0,1\}^{m\times 2k}\mid\forall i\in[m],~\left|\left\{ j\mid i\in A_{j}\right\} \right|=k\right\} $
	be all the points such that every index $i\in\left[m\right]$ appears
	in exactly $k$ sets. Note that $|\mathcal{S}|={2k \choose k}^{m}$,
    and that $\forall x\in \mathcal{S}$, $\NSMf(x)=1$.
	We will construct a subset $\mathcal{S}'\subseteq \mathcal{S}$ in a greedy manner.
	In each phase we will choose an arbitrary $x\in \mathcal{S}$, which was not deleted yet,
	add it to $\mathcal{S}'$ and delete all of $B(x,k/2)$, i.e.,
	all the points in $\mathcal{S}$ which are at distance at most $k/2$ from $x$ (with respect to the metric $d$).
	
	It holds that
	$$|B(x,k/2)\cap \mathcal{S}|=\left(\sum_{i=0}^{k/4}{k \choose i}^{2}\right)^{m}\le\left(2\cdot{k \choose k/4}^{2}\right)^{m}~.$$
	To see the equality,
	denote $x=\left(A_{1},\dots,A_{2k}\right)$.
	For each index $i\in [m]$, there are $k$ sets containing $i$.
	We should choose $j\le k/4$ sets to remove $i$ from, and $j$ new sets to insert $i$ into.
	All this is taken in power of $m$ as we have $m$ different indices.
	To see the inequity, note that for $i\le k/4$, ${k \choose i}/{k \choose i-1}=\frac{k-i+1}{i}>2$.
	Hence $\sum_{i=0}^{\frac{k}{4}-1}{k \choose i}^2<{k \choose k/4}^2$.

	By the end of the process {(when all the points in $\mathcal{S}$ were deleted)},
	we have a set $\mathcal{S}'$ of size at least $\left(\frac{{2k \choose k}}{2\cdot{k \choose k/4}^2}\right)^{m}$
	such that for every $x,y\in M'$, $d(x,y)\ge  k/2$. 
	We argue that $\mathcal{S}'$ is a $1$-fooling set.
	As $\mathcal{S}'\subseteq\mathcal{S}$, it holds that $\forall x\in\mathcal{S}'$, $\NSMf(x)=1$.
	Consider $x\ne y\in \mathcal{S}'$, where $x=\left(A_{1},\dots,A_{2k}\right)$ and $y=\left(B_{1},\dots,B_{2k}\right)$.
	There is an index $i$ such that $d_i(x,y)\ge k/2$.
	Therefore,
	$\left|\left\{ j\mid i\in A_{i}\cup B_{i}\right\} \right|\ge\frac{5}{4}k$.
	In particular,
	there is $ z\in\left\{ A_{1},B_{1}\right\} \times\cdots\times\left\{ A_{2k},B_{2k}\right\}$ with at least $\frac{5}{4}$ sets containing $i$,
	implying $\NSMf(z)=0$.
	
	Finally,
	we lower bound $|\mathcal{S}'|$. Recalling Stirling's formula,
	which says that $n!\approx\sqrt{2\pi n}\left(\frac{n}{e}\right)^{n}$,
	and the identity ${2k \choose k}=\sum_{i=0}^k{k\choose i}^2$,
	we have that:
	\[
	\frac{{2k \choose k}}{2{k \choose k/4}^{2}}\ge\frac{{k \choose k/2}^{2}}{{k \choose k/4}^{2}}=\frac{\left(\frac{k}{4}!\right)^{2}\left(\frac{3k}{4}!\right)^{2}}{\left(\frac{k}{2}!\right)^{4}}=\Omega(1)\cdot\frac{\left(\frac{k}{4}\right)^{\frac{1}{2}k}\left(\frac{3k}{4}\right)^{\frac{3}{2}k}}{\left(\frac{k}{2}\right)^{2k}}=\Omega(1)\cdot\left(\frac{3^{\frac{3}{2}}}{4}\right)^{k}~.
	\]
	We conclude that 
	\[
	D\left(\NSMf\right)\ge\log\left(\left|\mathcal{S}'\right|\right)\ge\log\left(\left(\Omega(1)\cdot\left(\frac{3^{\frac{3}{2}}}{4}\right)^{k}\right)^{m}\right)=\Omega(mk)~.\qedhere
	\]
\end{proof}

\end{document}